\newtheorem{theorem}{Theorem}[section]
\newtheorem{corollary}[theorem]{Corollary}
\newtheorem{remark}[theorem]{Remark}
\newtheorem{lemma}[theorem]{Lemma}
\newtheorem{definition}[theorem]{Definition}
\numberwithin{equation}{section}
\newcommand{\rank}{\operatorname{rank}}
\newcommand{\Ima}{\operatorname{Im}}
\begin{document}

\title{A triality pattern in entanglement theory}

\author[Cariello ]{D. Cariello}

\address{Faculdade de Matem\'atica, \newline\indent Universidade Federal de Uberl\^{a}ndia, \newline\indent 38.400-902 Ð Uberl\^{a}ndia, Brazil.}\email{dcariello@ufu.br}

\keywords{}

\subjclass[2010]{}

\begin{abstract}

In this work, we present new connections between three types of quantum states: positive under partial transpose states, symmetric with positive coefficients states and  invariant under realignment states.
First, we obtain  a common upper bound for their spectral radii and a result on their  filter normal forms. 
Then we  prove the existence of a lower bound for their ranks and the fact that whenever this bound is attained the states are separable. 
These  connections add new evidence to the pattern that for every proven result for one of these types, there are counterparts for the other two, which is a potential source of information for entanglement theory.

\end{abstract}

\maketitle

\section{Introduction}

The separability problem in quantum theory  \cite{Guhne}  asks for a criterion to distinguish the separable states from the entangled states.  It is known that separable states in $\mathcal{M}_k\otimes \mathcal{M}_m$ form a subset of the positive under partial transpose states (PPT states) and in low dimensions, $km\leq 6$, these two sets coincide \cite{peres, horodeckifamily} solving the problem. However, in larger dimensions, $km> 6$, there are entangled PPT states. In addition, for $k,m$ arbitrary, this problem is known to be a hard problem \cite{gurvits2004}, thus any reduction of this problem to a subset of PPT states of $\mathcal{M}_k\otimes \mathcal{M}_m$ is certainly important.  \vspace{0,2cm}

In \cite{cariello, carielloIEEE}, a procedure to reduce the separability problem to a proper subset of PPT states  was presented. The idea behind this reduction can be summarized as follows.  
Let $A=\sum_{i=1}^nA_i\otimes B_i\in \mathcal{M}_k\otimes \mathcal{M}_m$ be a  quantum state and consider $G_A:\mathcal{M}_k\rightarrow \mathcal{M}_m$ and $F_A:\mathcal{M}_m\rightarrow \mathcal{M}_k$ as
 $ G_A(X)=\sum_{i=1}^n tr(A_iX)B_i$\  and\  $ F_A(X)=\sum_{i=1}^n tr(B_iX)A_i$, where $tr(X)$ stands for the trace of $X$.
 
 \vspace{0,2cm}
 
Now, if $A$ is PPT and $X$ is a positive semidefinite Hermitian eigenvector of  $F_A\circ G_A:\mathcal{M}_k\rightarrow \mathcal{M}_k$ then $A$ decomposes  as a sum of states with orthogonal local supports \cite[Lemma 8]{carielloIEEE}, i.e.,
$$A=(V\otimes W)A(V\otimes W)+(V^{\perp}\otimes W^{\perp})A(V^{\perp}\otimes W^{\perp}),$$

\noindent where $V,W, V^{\perp},W^{\perp}$ are orthogonal projections onto $\Ima(X)$, $\Ima(G_A(X))$, $\ker(X)$ and $\ker(G_A(X))$, respectively. 
Then the algorithm proceeds to decompose $(V\otimes W)A(V\otimes W)$ and $(V^{\perp}\otimes W^{\perp})A(V^{\perp}\otimes W^{\perp})$, since they are also PPT, whenever such  $X$ is found. Eventually this process stops with 
\begin{center}
$A=\sum_{i=1}^n(V_i\otimes W_i)A(V_i\otimes W_i)$,
\end{center}
where $(V_i\otimes W_i)A(V_i\otimes W_i)$ cannot be further decomposed for $1\leq i\leq n$. These states are named weakly irreducible. Finally, $A$ is separable  if and only if each $(V_i\otimes W_i)A(V_i\otimes W_i)$  is separable, therefore this algorithm reduces the separability problem to the weakly irreducible PPT case.

\vspace{0,2cm}
Positive under partial transpose states are not the only type of states on which this procedure works because the key feature of this procedure, which is $A\in \mathcal{M}_k\otimes \mathcal{M}_m$ breaks whenever a certain positive semidefinite Hermitian eigenvector is found, is also true for two other types of quantum states.   From now on we say that $A\in \mathcal{M}_k\otimes \mathcal{M}_m$ has the \textbf{completely reducibility property}, if for every positive semidefinite Hermitian eigenvector $X$ of  $F_A\circ G_A:\mathcal{M}_k\rightarrow \mathcal{M}_k$, we have
$$A=(V\otimes W)A(V\otimes W)+(V^{\perp}\otimes W^{\perp})A(V^{\perp}\otimes W^{\perp}),$$

\noindent where $V,W, V^{\perp},W^{\perp}$ are orthogonal projections onto $\Ima(X)$, $\Ima(G_A(X))$, $\ker(X)$ and $\ker(G_A(X))$, respectively. In \cite{carielloIEEE}, a search for other types of quantum states satisfying this property was conducted finding only three types of quantum states: positive under partial transpose states (PPT states), symmetric with positive coefficients states (SPC states) and invariant under realignment states. So far this property has been only verified for these triad of quantum states.

\vspace{0,2cm}

Now, the number of times that $A$ breaks into weakly irreducible pieces is maximized when the non-null eigenvalues of $F_A\circ G_A:\mathcal{M}_k\rightarrow \mathcal{M}_k$ are equal, because in this case we are able to produce many positive semidefinite Hermitian eigenvectors. In this situation, we have the following theorem:

\vspace{0,1cm}

\begin{quote} If the non-null eigenvalues of $F_A\circ G_A$ are equal then $A$ is separable, when $A$ is PPT or SPC or invariant under realignment  \cite[Proposition 15]{carielloIEEE}. Notice that the eigenvalues  of $F_A\circ G_A$ are the square of the Schmidt coefficients of $A$.
\end{quote} 

\vspace{0,1cm}

In \cite{carielloSPC, carielloIEEE}, it was also noticed that every SPC state and every invariant under realignment state is PPT in $\mathcal{M}_2\otimes \mathcal{M}_2$. Before that in \cite{guhnetothsym}, it was proved  that  a state  supported on the symmetric subspace of $\mathbb{C}^k\otimes\mathbb{C}^k$ is PPT  if and only if it is SPC. This is plenty of evidence of how linked this triad of quantum states are. 

\vspace{0,2cm}

In this work we prove new results for this triad of quantum states and a new consequence for their complete reducibility property.  One of these results concerns the separability of theses states. The aforementioned connections along with our new results lead us to notice a certain triality pattern:

\vspace{0,1cm}

 \begin{quote}
For each proven theorem for one of these three types of states, there are corresponding counterparts for the other two.
\end{quote} 

\vspace{0,1cm}

We believe  that a solution to the separability problem for SPC states or  invariant under realignment states would shed light on bound entanglement. This is our reason for studying these types of states.

\vspace{0,2cm}

Next, we would like to point out the origin of these connections which is also the source of the main tools used in this work. For this, we must consider the group of linear contractions and some of its  properties.   For each permutation $\sigma\in S_4$,  the linear transformation $L_{\sigma}:\mathcal{M}_k\otimes \mathcal{M}_k\rightarrow \mathcal{M}_k \otimes \mathcal{M}_k$ satisfying  $$L_{\sigma}(v_1v_2^t\otimes v_3v_4^t)=v_{\sigma(1)}v_{\sigma(2)}^t\otimes v_{\sigma(3)}v_{\sigma(4)}^t,$$
where $v_i\in \mathbb{C}^{k}$, is called a linear contraction.  

\vspace{0,2cm}

The term contraction comes from the fact that  $\|L_{\sigma}(\gamma)\|_1\leq \|\gamma\|_1,$ for every $\sigma\in S_4$, whenever $\gamma\in \mathcal{M}_k\otimes \mathcal{M}_m$ is a separable state and $\|\cdot\|_1$ is the trace norm of a matrix (i.e., the sum of its singular values). 
 Hence, if $\gamma\in \mathcal{M}_k\otimes \mathcal{M}_k$ is a state and $\|L_{\sigma}(\gamma)\|_1> \|\gamma\|_1$ for some $\sigma\in S_4$ then $\gamma\in \mathcal{M}_k\otimes \mathcal{M}_m$ is  entangled. This  observation provides two  useful criteria for entanglement detection. In cycle notation, they are:
\vspace{0,2cm}

 \begin{itemize}
 \item the PPT criterion  \cite{peres, horodeckifamily}, when $\sigma=(34)$, and
 \item the CCNR criterion \cite{rudolph,rudolph2}, when $\sigma=(23)$ . 
\end{itemize}

\vspace{0,2cm}

Despite the name - contraction - these linear maps are in fact isometries, i.e., they preserve the Frobenius norm of $\gamma \in \mathcal{M}_k \otimes \mathcal{M}_k$ $($denoted here by $\|\gamma\|_2)$. This norm is an invariant exploited several times in this work. 
\vspace{0,2cm}

In addition, the set of linear contractions is a group under composition generated by the partial transposes ($L_{(34)}(\gamma)=\gamma^{\Gamma}$ and $L_{(12)}$)  and the realignment map $(L_{(23)}(\gamma)=\mathcal{R}(\gamma))$.  The relations among the elements of this group  are  extremely useful  as we shall see in the proofs of our novel results.

\vspace{0,2cm}

Finally, these  maps allow  connecting this triad of quantum states from their origins, that is, their definitions:
\vspace{0,1cm}

\begin{itemize}
\item PPT states are the states that remain positive under partial transpose $(\gamma\geq 0, \gamma^{\Gamma}\geq 0)$ \cite{peres,horodeckifamily},

\vspace{0,1cm}

\item SPC states are the states that remain positive under partial transpose composed with realignment  $(\gamma\geq 0,\ \mathcal{R}(\gamma^{\Gamma})\geq 0)$ (\cite[corollary 25]{carielloIEEE} and \cite[definition 17]{carielloIEEE}),

\vspace{0,1cm}

\item invariant under realignment states  are the states that remain the same under  realignment $(\gamma\geq 0,\ \mathcal{R}(\gamma)=\gamma)$.
\end{itemize}

\vspace{0,2cm}

These observations on the group of  linear contractions are used  throughout this paper, they are the keys to obtain our novel results. 
Now, let us describe these results.

\vspace{0,2cm}

 Our first  result is an upper bound on the spectral radius for this triad of states.
We show that if $\gamma=\sum_{i=1}^m C_i\otimes D_i\in \mathcal{M}_{k}\otimes \mathcal{M}_k$ is PPT or SPC or invariant under realingment then $$\|\gamma\|_{\infty}\leq \min\{\|\gamma_A\|_{\infty}, \|\gamma_B\|_{\infty}, \|\mathcal{R}(\gamma)\|_{\infty} \},$$
where $\|.\|_{\infty}$ is the operator norm, $\gamma_A=\sum_{i=1}^m C_i tr(D_i)$ and $\gamma_B=\sum_{i=1}^m D_i tr(C_i)$ are the reduced states. Let us say that the ranks of $\gamma_A$ and $\gamma_B$ are the reduced ranks of $\gamma$.

\vspace{0,2cm}

 Our second  result regards the filter normal form of SPC states and  invariant under realignment states. This normal form has been used in entanglement theory, for example, to provide a different solution for the separability problem in $\mathcal{M}_2\otimes \mathcal{M}_2$ \cite{leinaas} or to prove the equivalence of some criteria for entanglement detection \cite{Git}.   Here we show that  states which are SPC or  invariant under realignment  can be put in the filter normal form and their normal forms can still be chosen to be SPC and invariant under realignment, respectively. In other words, if $A,B\in\mathcal{M}_{k}\otimes \mathcal{M}_k$ are states such that
 
 \vspace{0,2cm}
 
\begin{enumerate}
\item $A$ is SPC then there is an invertible matrix $R\in \mathcal{M}_k$ such that $(R\otimes R)A(R\otimes R)^*= \sum_{i=1}^n\lambda_i\delta_i\otimes\delta_i,$ where $\lambda_1=\frac{1}{k}$,  $\delta_1=\frac{Id}{\sqrt{k}}$, $\lambda_i>0$ and $\delta_i$ is Hermitian for every $i$, and $tr(\delta_i\delta_j)=0$ for $i\neq j$.
\vspace{0,2cm}

\item  $B$  is invariant under realignment then there is an invertible matrix $R\in \mathcal{M}_k$ such that $(R\otimes \overline{R})B(R\otimes \overline{R})^*= \sum_{i=1}^n\lambda_i\delta_i\otimes\overline{\delta_i},$ where $\lambda_1=\frac{1}{k}$,  $\delta_1=\frac{Id}{\sqrt{k}}$, $\lambda_i>0$ and $\delta_i$ is Hermitian for every $i$, and $tr(\delta_i\delta_j)=0$ for $i\neq j$.
\end{enumerate} 

\vspace{0,2cm}
 
  The PPT counterpart of this result  is an algorithm that determines whether a PPT state can be put in the filter normal form or not already found in \cite{CarielloLMP}. This algorithm is based on the complete reducibility property. 
  
  \vspace{0,2cm}
 
 Our final result is  a lower bound for the ranks of these three types of states. We show that the rank of PPT states, SPC states and  invariant under realignment states cannot be inferior to their reduced ranks (when they are equal) and whenever this minimum is attained the states are separable. 
 
 \vspace{0,2cm}
 
 In \cite{PawelMxN}, it was proved that a state $\gamma$ such that  $\rank(\gamma)\leq \max\{\rank(\gamma_A),\rank(\gamma_B)\}$ is separable by just being positive under partial tranpose. In  \cite{smolin},  it was shown that the rank of a separable state is greater or equal to their reduced ranks. So if $\gamma$ is PPT and  $\rank(\gamma)\leq \max\{\rank(\gamma_A),\rank(\gamma_B)\}$ then it is separable and  $\rank(\gamma)= \max\{\rank(\gamma_A),\rank(\gamma_B)\}$.
 
 \vspace{0,2cm}
 
 Hence our last result  is known for PPT states, but  it is original for SPC states and  invariant under realignment states. The approach presented here to obtain these facts is completely  original, as we show that this is another consequence of the complete reducibility property. 

\vspace{0,2cm}

 The results described above show how fundamental this property is to entanglement  theory, it acts as a unifying  approach for many results. In fact, even outside entanglement theory, we can find consequences of that property, for example, a new proof of Weiner's theorem  \cite{weiner} on mutually unbiased bases  found in \cite{carielloIEEE}.
\vspace{0,2cm}
 
  The triality pattern described above together with the complete reducibility property form a potential source of information for entanglement theory.
 
\vspace{0,2cm}

This paper is organized as follows. In section 2, we present some preliminary results, which are mainly facts about the group of linear contractions. In section 3, we obtain an upper bound for the spectral radius of our special triad of quantum states. In section 4, we show that SPC states and invariant under realignment states can be put in the filter normal form and their normal forms retain their shape. In section 5, we prove that the ranks of our triad of quantum states cannot be inferior to their reduced ranks and whenever this minimum is attained the states are separable.

\section{Preliminary results}

In this section we present some preliminary results. We begin by noticing that 
$G_A:\mathcal{M}_k\rightarrow \mathcal{M}_m$ and $F_A:\mathcal{M}_m\rightarrow \mathcal{M}_k$
defined in the introduction are adjoint maps with respect to the trace inner product, when $A$ is Hermitian. The reason behind this is quite simple: If $A$ is Hermitian then $F_A(X)^*=F_{A^*}(X^*)=F_A(X^*)$, for every $X\in \mathcal{M}_m$, hence 
 $$tr(G_A(X)Y^*)=tr(A(X\otimes Y^*))=tr(XF_A(Y^*))=tr(XF_A(Y)^*).$$
 
Notice that  for positive semidefinite Hermitian matrices $X\in\mathcal{M}_k,Y\in\mathcal{M}_m$ and $A\in \mathcal{M}_k\otimes\mathcal{M}_m$, we have $tr(A(X\otimes Y^*))\geq 0$. Thus, the equality above also shows that $G_A$ and $F_A$ are positive maps (Definition \ref{definitionpositivemaps}) when $A$ is positive semidefinite.\vspace{0,3cm}

These maps are connected to the following generalization of the Hadamard product extensively used in \cite{cariello} and required here a few times.

\begin{definition}[Generalization of the Hadamard product]\label{definitionproduct}
Let $\gamma=\sum_{i=1}^nA_i\otimes B_i\in \mathcal{M}_m\otimes M_k$, $\delta=\sum_{j=1}^lC_j\otimes D_j\in \mathcal{M}_k\otimes M_s$. 
Define the product $\gamma*\delta\in  \mathcal{M}_m\otimes \mathcal{M}_s$ as $$\gamma*\delta=\sum_{i=1}^n\sum_{j=1}^lA_i\otimes D_j tr(B_iC_j^t).$$
\end{definition} 
\vspace{0,1cm}

Let us  recall some facts regarding this product.

  \vspace{0,1cm}
\begin{remark}\label{remarkproduct} Let $u=\sum_{i=1}^ke_i\otimes e_i\in\mathbb{C}^k\otimes \mathbb{C}^k$, where $e_1,\ldots,e_k$ is the canonical basis of $\mathbb{C}^k$. 

\begin{itemize}
\item[$a)$] Notice that $u^t(B_i\otimes C_j) u=tr((B_i\otimes C_j) uu^ t)=tr(B_iC_j^t)$, where $B_i,C_j\in\mathcal{M}_k$. Therefore, $$\gamma*\delta=(Id_{m\times m}\otimes u^t\otimes Id_{s\times s})(\gamma\otimes \delta)(Id_{m\times m}\otimes u\otimes Id_{s\times s}),$$ 
which implies that $\gamma*\delta$ is positive semidefinite, whenever $\gamma,\delta$ are positive semidefinite. In addition, $tr(\gamma*\delta)=tr(\gamma\otimes \delta\ (Id\otimes uu^t\otimes Id))=tr(\gamma_B\otimes \delta_A\  uu^t)=tr(\gamma_B\delta_A^t).$ \vspace{0,3cm}
\item[$b)$] By \cite[Proposition 8]{cariello},  $\gamma*\delta=(F_{\gamma}((\cdot)^t)\otimes Id)(\delta)=(Id\otimes G_{\delta}((\cdot)^t))(\gamma)$.\vspace{0,3cm}
\item[$c)$] Let $F=(uu^t)^{\Gamma}\in \mathcal{M}_k\otimes \mathcal{M}_k$ and notice that
if $\gamma=\sum_{i=1}^nA_i\otimes B_i\in \mathcal{M}_k\otimes \mathcal{M}_k$ then $F\gamma F=\sum_{i=1}^n B_i\otimes A_i.$ This real symmetric matrix $F$ is an isometry and it is  usually called the flip operator.\vspace{0,3cm}
\item[] Now, notice that $\gamma*(F\gamma^t F)=\sum_j\sum_{i}A_i\otimes A_j^t tr(B_iB_j)$.  Hence \begin{center} $tr(\gamma*(F\gamma^t F))=tr((\sum_{i}tr(A_i) B_i)(\sum_{j}tr(A_j)B_j))=tr(\gamma_B^2)$ and \vspace{0,2cm}

$G_{\gamma*(F\gamma^t F)}(X)=\sum_j\sum_{i}tr(A_iX)tr(B_iB_j)A_j^t=\sum_jtr((\sum_{i}tr(A_iX)B_i)B_j)A_j^t=F_{\gamma}(G_{\gamma}(X))^t.$
\end{center}
 
\end{itemize}

\end{remark}
 
  \vspace{0,5cm}

Next, we discuss some facts about the group of linear contractions. Actually, we need to focus only on three of these maps.  The linear contractions important to us are

  \vspace{0,2cm}

\begin{center}
$L_{(34)}(\gamma)=\gamma^{\Gamma}$, $L_{(24)}(\gamma)=\gamma F$ and $L_{(23)}(\gamma)=\mathcal{R}(\gamma).$ 
\end{center}

\vspace{0,2cm}

Below we discuss several properties of these linear contractions such as relations among these elements and how they behave with respect to the product defined in  \ref{definitionproduct}.
  \vspace{0,3cm}

\begin{lemma}\label{propertiesofrealignment} Let $\gamma,\delta \in  \mathcal{M}_k\otimes \mathcal{M}_k$ and $v=\sum_{i}a_i\otimes b_i,\ w=\sum_{j} c_j\otimes d_j \in \mathbb{C}^k\otimes \mathbb{C}^k$.
\begin{enumerate}
\item $\mathcal{R}(vw^t)=V\otimes W$, where $V=\sum_{i}a_ib_i^t,\ W=\sum_{j} c_jd_j^t$. 
\item $\mathcal{R}(\mathcal{R}(\gamma))=\gamma$
\item $\mathcal{R}((V\otimes W)\gamma (M\otimes N))=(V\otimes M^t)\mathcal{R}(\gamma)(W^t\otimes N)$
\item $\mathcal{R}(\gamma  F)F=\gamma^{\Gamma}$
\item $\mathcal{R}(\gamma^{\Gamma})=\mathcal{R}(\gamma)F$
\item $\mathcal{R}(\gamma F)=\mathcal{R}(\gamma)^{\Gamma}$
\item $\mathcal{R}(\gamma^{\Gamma})^{\Gamma}=\gamma F$
\item $\mathcal{R}(\gamma*\delta)=\mathcal{R}(\gamma)\mathcal{R}(\delta)$ $($i.e., $\mathcal{R}$ is a homomorphism$)$.
\item $\mathcal{R}(F\overline{\gamma}F)=\mathcal{R}(\gamma)^*$
\end{enumerate}
\end{lemma}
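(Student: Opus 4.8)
The plan is to reduce every one of the nine identities to a calculation on a spanning set. Each is an equality between linear maps of $\gamma$ (and (8) is bilinear in $(\gamma,\delta)$), so it suffices to verify them on the rank-one matrices $vw^t$ with $v,w\in\mathbb{C}^k\otimes\mathbb{C}^k$; equivalently, on the elementary tensors $e_ae_b^t\otimes e_ce_d^t$, which are the $vw^t$ for $v=e_a\otimes e_c$, $w=e_b\otimes e_d$. I would establish (1) first, directly from $\mathcal{R}=L_{(23)}$: writing $v=\sum_i a_i\otimes b_i$ and $w=\sum_j c_j\otimes d_j$, one has $vw^t=\sum_{i,j}(a_ic_j^t)\otimes(b_id_j^t)$, and since $L_{(23)}$ carries $v_1v_2^t\otimes v_3v_4^t$ to $v_1v_3^t\otimes v_2v_4^t$, applying it term by term gives $\sum_{i,j}(a_ib_i^t)\otimes(c_jd_j^t)=V\otimes W$. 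Specializing $v=p\otimes r$, $w=q\otimes s$ yields the master formula $\mathcal{R}(pq^t\otimes rs^t)=pr^t\otimes qs^t$ for $p,q,r,s\in\mathbb{C}^k$, and hence $\mathcal{R}(X\otimes Y)=\sum_{a,b,c,d}X_{ab}Y_{cd}\,e_ae_c^t\otimes e_be_d^t$ for all $X,Y\in\mathcal{M}_k$; I would record alongside it the companion formulas $(e_ae_b^t\otimes e_ce_d^t)^{\Gamma}=e_ae_b^t\otimes e_de_c^t$, $(e_ae_b^t\otimes e_ce_d^t)F=e_ae_d^t\otimes e_ce_b^t$ (the latter since $F=\sum_{p,q}e_pe_q^t\otimes e_qe_p^t$), together with $F(X\otimes Y)F=Y\otimes X$.

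Given these formulas, identities (2)--(7) and (9) become a matter of matching the four vector slots on one elementary tensor $e_ae_b^t\otimes e_ce_d^t$: for instance $\mathcal{R}(\mathcal{R}(e_ae_b^t\otimes e_ce_d^t))=\mathcal{R}(e_ae_c^t\otimes e_be_d^t)=e_ae_b^t\otimes e_ce_d^t$ proves (2), and (4)--(7) are the same sort of two- or three-step substitution; in (9) one additionally carries the complex conjugate of the scalar coefficient and uses that $\mathcal{R}$ and $F$ have real matrices, so that $\mathcal{R}(F\overline{\gamma}F)$ and $\mathcal{R}(\gamma)^{*}$ produce the same elementary tensor with coefficient $\overline{\lambda}$. (Alternatively, since $\gamma\mapsto\mathcal{R}(\gamma)$, $\gamma\mapsto\gamma^{\Gamma}$, $\gamma\mapsto\gamma F$ are the contractions $L_{(23)},L_{(34)},L_{(24)}$ and $\sigma\mapsto L_{\sigma}$ is a homomorphism, (2)--(7) also follow from the corresponding relations in $S_4$, e.g.\ $(23)^2=\mathrm{id}$ for (2); I prefer the direct check since it avoids fixing a composition convention.)

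Part (8) is the only statement involving the product of Definition \ref{definitionproduct}, so I would handle it on its own. By bilinearity it is enough to take $\gamma=X_1\otimes Y_1$ and $\delta=X_2\otimes Y_2$, as such tensors span $\mathcal{M}_k\otimes\mathcal{M}_k$. Then Definition \ref{definitionproduct} gives $\gamma*\delta=tr(Y_1X_2^t)\,(X_1\otimes Y_2)$, so $\mathcal{R}(\gamma*\delta)=tr(Y_1X_2^t)\,\mathcal{R}(X_1\otimes Y_2)$; on the other hand, substituting the coordinate formula for $\mathcal{R}$ and multiplying the two matrices out, the products $(e_ae_c^t\otimes e_be_d^t)(e_{a'}e_{c'}^t\otimes e_{b'}e_{d'}^t)$ force $a'=c$ and $b'=d$, and one is left with $\mathcal{R}(X_1\otimes Y_1)\mathcal{R}(X_2\otimes Y_2)=\bigl(\sum_{c,d}(Y_1)_{cd}(X_2)_{cd}\bigr)\mathcal{R}(X_1\otimes Y_2)$. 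Since $\sum_{c,d}(Y_1)_{cd}(X_2)_{cd}=tr(Y_1X_2^t)$, the two sides agree.

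I do not expect a real obstacle: the lemma is bookkeeping throughout, with two things to keep straight — the ordering of the four vector slots under each of $\mathcal{R}$, partial transpose, and right multiplication by $F$, and, in (8), the fact that the scalar contraction manufactured by $*$ is precisely the trace inner product that reappears when the two realigned tensors are multiplied. Step (8) will be the most laborious; everything else is a one-line verification on an elementary tensor.
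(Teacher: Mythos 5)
Your proposal is correct and follows essentially the same method as the paper: verify each identity on rank-one (elementary) tensors, where every item reduces to tracking the four vector slots, and for (8) observe that the scalar produced by the $*$-product is exactly the $tr(Y_1X_2^t)$ that reappears when the two realigned rank-one factors are multiplied. The only difference is organizational: the paper outsources items (1)--(6) to \cite[Lemma 23]{carielloIEEE} and checks only (7)--(9) on tensors of the form $ab^t\otimes cd^t$, whereas you reprove the cited items by the identical computation.
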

\begin{proof}
Items (1--6) were proved in items (2--7) of \cite[Lemma 23]{carielloIEEE}. For the other three items, we just need to prove them when $\gamma=ab^t\otimes cd^t$ and $\delta=ef^t\otimes gh^t$, where $a,b,c,d,e,f,g,h\in \mathbb{C}^k$.\\\\
\underline{Item (7):} $\mathcal{R}(ab^t\otimes cd^t)^{\Gamma})^{\Gamma}=\mathcal{R}(ab^t\otimes dc^t)^{\Gamma}=(ad^t\otimes bc^t)^{\Gamma}=ad^t\otimes cb^t$.

Now, $(ab^t\otimes cd^t)F=ad^t\otimes cb^t$. So $\mathcal{R}(\gamma^{\Gamma})^{\Gamma}=\gamma F$.\\\\
\underline{Item (8):} $\mathcal{R}(ab^t\otimes cd^t* ef^t\otimes gh^t)=\mathcal{R}(ab^t\otimes gh^t)(d^tf)(c^te)=(ag^t\otimes bh^t)(d^tf)(c^te)$.

Now, $\mathcal{R}(ab^t\otimes cd^t)\mathcal{R}(ef^t\otimes gh^t)=(ac^t\otimes bd^t)(eg^t\otimes fh^t)=(ag^t\otimes bh^t)(c^te)(d^tf)$. 

So $\mathcal{R}(\gamma*\delta)=\mathcal{R}(\gamma)\mathcal{R}(\delta)$\\\\
\underline{Item (9):} $\mathcal{R}(F\overline{a}\overline{b}^t\otimes \overline{c}\overline{d}^t F)=\mathcal{R}(\overline{c}\overline{d}^t\otimes \overline{a}\overline{b}^t)=\overline{c} \overline{a}^t\otimes \overline{d}\overline{b}^t$

Now, $\mathcal{R}(ab^t\otimes cd^t)^*=(ac^t\otimes bd^t)^*=\overline{c}\overline{a}^t\otimes \overline{d}\overline{b}^t.$

So $\mathcal{R}(F\overline{\gamma} F)=\mathcal{R}(\gamma)^*$.
\end{proof}

\vspace{0,3cm}

The next lemma is important for our final result and says something very interesting about PPT states which remain PPT under realignment: They must be invariant under realignment.

\vspace{0,3cm}

\begin{lemma} \label{lemmarealigmentPPTareequal}Let $\gamma \in \mathcal{M}_k\otimes \mathcal{M}_k$ be a positive semidefinite Hermitian matrix. If $\gamma$ and $\mathcal{R}(\gamma)$ are PPT then $\gamma=\mathcal{R}(\gamma)$.
\end{lemma}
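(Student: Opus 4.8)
The plan is to distill the two positivity hypotheses into the single scalar identity $tr(\gamma)=tr(\mathcal{R}(\gamma))$ and then to read off the conclusion from its equality case using the relations in \lemref{propertiesofrealignment}.

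First I would use the elementary identity $tr(\mathcal{R}(\gamma))=tr(\gamma\,uu^t)$, where $u=\sum_i e_i\otimes e_i$; this follows by a direct computation in coordinates, or from the fact that $\mathcal{R}$ is a self-adjoint involution on $\mathcal{M}_k\otimes\mathcal{M}_k$ with $\mathcal{R}(Id)=uu^t$. Since $uu^t=F^{\Gamma}$ (see Remark~\ref{remarkproduct}) and $tr(XY^{\Gamma})=tr(X^{\Gamma}Y)$, this rewrites as $tr(\mathcal{R}(\gamma))=tr(\gamma^{\Gamma}F)$. Now $\gamma^{\Gamma}\geq 0$ because $\gamma$ is PPT, and $F$ is a symmetric involution, so $Id-F\geq 0$; hence
\[0\leq tr\big(\gamma^{\Gamma}(Id-F)\big)=tr(\gamma^{\Gamma})-tr(\gamma^{\Gamma}F)=tr(\gamma)-tr(\mathcal{R}(\gamma)),\]
which gives $tr(\mathcal{R}(\gamma))\leq tr(\gamma)$.

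The key point is that the hypotheses are self-dual under $\mathcal{R}$: by assumption $\mathcal{R}(\gamma)$ is again a positive semidefinite PPT matrix, and $\mathcal{R}(\mathcal{R}(\gamma))=\gamma$ by \lemref{propertiesofrealignment}(2). Applying the inequality just obtained to $\mathcal{R}(\gamma)$ in place of $\gamma$ gives $tr(\gamma)=tr(\mathcal{R}(\mathcal{R}(\gamma)))\leq tr(\mathcal{R}(\gamma))$. Combining the two inequalities forces $tr(\gamma)=tr(\mathcal{R}(\gamma))$, hence $tr\big(\gamma^{\Gamma}(Id-F)\big)=0$. Since $\gamma^{\Gamma}\geq 0$ and $Id-F\geq 0$, a product of positive semidefinite matrices with zero trace vanishes, so $\gamma^{\Gamma}(Id-F)=0$, that is, $\gamma^{\Gamma}F=\gamma^{\Gamma}$.

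It remains to turn this operator identity into $\gamma=\mathcal{R}(\gamma)$, which is a short chase through \lemref{propertiesofrealignment}: by item~(6), $\mathcal{R}(\gamma)^{\Gamma}=\mathcal{R}(\gamma F)$, and by item~(4), $\mathcal{R}(\gamma F)=\gamma^{\Gamma}F$; therefore $\mathcal{R}(\gamma)^{\Gamma}=\gamma^{\Gamma}F=\gamma^{\Gamma}$, and taking partial transposes of both sides yields $\mathcal{R}(\gamma)=\gamma$. I expect the only genuinely delicate step to be the opening move---recognizing that $tr(\mathcal{R}(\gamma))$ equals $tr(\gamma^{\Gamma}F)$, so that the PPT hypothesis enters exactly as $\gamma^{\Gamma}\geq 0$ tested against the contraction $F\leq Id$; after that, the self-duality under the involution $\mathcal{R}$ makes the equality case automatic and the endgame is routine.
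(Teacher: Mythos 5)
Your proof is correct, and it reaches the decisive identity $\gamma^{\Gamma}F=\gamma^{\Gamma}$ by a genuinely different mechanism than the paper, although both arguments rest on the same structural fact from Lemma~\ref{propertiesofrealignment} (namely $\mathcal{R}(\gamma)^{\Gamma}=\gamma^{\Gamma}F$) and both finish identically by taking partial transposes. The paper exploits this identity at the operator level: since $\gamma^{\Gamma}$, $F$ and their product $\gamma^{\Gamma}F=\mathcal{R}(\gamma)^{\Gamma}$ are all Hermitian, $\gamma^{\Gamma}$ commutes with $F$, and a common eigenbasis of symmetric and antisymmetric vectors forces $\gamma^{\Gamma}$ to vanish on the antisymmetric subspace (an eigenvalue $\lambda$ there would have to satisfy both $\lambda\geq 0$ and $-\lambda\geq 0$), whence $\gamma^{\Gamma}F=\gamma^{\Gamma}$. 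You instead work at the level of traces: from $\gamma^{\Gamma}\geq 0$ and $Id-F\geq 0$ you get $tr(\mathcal{R}(\gamma))=tr(\gamma^{\Gamma}F)\leq tr(\gamma)$, you obtain the reverse inequality by applying the same estimate to $\mathcal{R}(\gamma)$ via the involution property, and you then invoke the equality case $tr(AB)=0\Rightarrow AB=0$ for positive semidefinite $A,B$. Your route isolates the scalar inequality $tr(\mathcal{R}(\gamma))\leq tr(\gamma)$, valid for every PPT matrix, which has some independent interest; the paper's route is shorter and uses the positivity of $\mathcal{R}(\gamma)^{\Gamma}$ as an operator rather than only through its trace against $F$. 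All the individual steps you rely on --- the identity $tr(\mathcal{R}(\gamma))=tr(\gamma\,uu^{t})$ with $uu^{t}=F^{\Gamma}$, the trace estimate, the equality-case argument, and the final chase through items (4) and (6) of Lemma~\ref{propertiesofrealignment} --- check out.
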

\begin{proof}
By item (4) of lemma \ref{propertiesofrealignment}, $\gamma^{\Gamma} =\mathcal{R}(\gamma F)F$. \vspace{0,2cm}

Now, since $F^2=Id$,   $\gamma^{\Gamma} F =\mathcal{R}(\gamma F)$. \vspace{0,2cm}

Next, by item $(6)$ of  lemma \ref{propertiesofrealignment}, $\mathcal{R}(\gamma F)=\mathcal{R}(\gamma)^{\Gamma}$. So $\gamma^{\Gamma}F=\mathcal{R}(\gamma)^{\Gamma}$ is a positive semidefinite Hermitian matrix by hypothesis.\vspace{0,2cm}

Since $F$,  $\gamma^{\Gamma}$ and $\gamma^{\Gamma}F$ are Hermitian matrices, $\gamma^{\Gamma}F=F\gamma^{\Gamma}$.
So there is an orthonormal basis of $\mathbb{C}^{k}\otimes\mathbb{C}^k$ formed by  symmetric and anti-symmetric eigenvectors of $\gamma^{\Gamma}$. \vspace{0,2cm}

Remind that $\gamma^{\Gamma}$ and $\gamma^{\Gamma}F$ are positive semidefinite,  hence $\gamma^{\Gamma}=\gamma^{\Gamma}F$.\vspace{0,2cm}

Finally, we have noticed that $\gamma^{\Gamma}F=\mathcal{R}(\gamma)^{\Gamma}$. So $\gamma^{\Gamma}=\mathcal{R}(\gamma)^{\Gamma}$, which implies $\gamma=\mathcal{R}(\gamma).$
\end{proof}

\vspace{0,3cm}

The next lemma is used in this work a few times. Although simple, we state  it here in order to better organize our arguments.

\vspace{0,3cm}

\begin{lemma}\label{lemmaoperatornormrealignment} Let $\gamma\in \mathcal{M}_k\otimes \mathcal{M}_k$. The largest singular value of the map $G_{\gamma}:\mathcal{M}_k\rightarrow \mathcal{M}_k$ equals $$\|\mathcal{R}(\gamma)\|_{\infty}=\max\{|tr(\gamma (X\otimes Y))|,\ \|X\|_2=\|Y\|_2=1 \}.$$ In addition, if $\gamma$ is Hermitian then there are Hermitian matrices $\gamma_1,\delta_1\in \mathcal{M}_k$ such that $\|\mathcal{R}(\gamma)\|_{\infty}=tr(\gamma (\gamma_1\otimes\delta_1))$, where $\|\gamma_1\|_2=\|\delta_1\|_2=1$.
\end{lemma}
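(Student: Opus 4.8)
The plan is to derive everything from the variational formula for the largest singular value of a linear operator, together with the description of the realignment map recorded in \lemref{propertiesofrealignment}. Write $\gamma=\sum_{i=1}^m A_i\otimes B_i$. The largest singular value of $G_{\gamma}\colon\mathcal{M}_k\to\mathcal{M}_k$, taken with respect to the trace inner product $\langle M,N\rangle=tr(M^*N)$, is $\max\{|\langle G_{\gamma}(X),Z\rangle|:\ \|X\|_2=\|Z\|_2=1\}$. A direct computation gives $\langle G_{\gamma}(X),Z\rangle=\sum_i tr(A_iX)\,tr(B_iZ^*)=tr\bigl(\gamma(X\otimes Z^*)\bigr)$, where the last equality uses $tr\bigl((A_iX)\otimes(B_iZ^*)\bigr)=tr(A_iX)\,tr(B_iZ^*)$. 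Since $Z\mapsto Z^*$ is an isometric bijection of $(\mathcal{M}_k,\|\cdot\|_2)$, substituting $Y=Z^*$ shows that the largest singular value of $G_{\gamma}$ equals $\max\{|tr(\gamma(X\otimes Y))|:\ \|X\|_2=\|Y\|_2=1\}$, which is the middle term in the statement.

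To identify this quantity with $\|\mathcal{R}(\gamma)\|_{\infty}$, I would use item (1) of \lemref{propertiesofrealignment} together with item (2): they give $\mathcal{R}(V\otimes W)=\mathrm{vec}(V)\,\mathrm{vec}(W)^t$, where $\mathrm{vec}\colon(\mathcal{M}_k,\|\cdot\|_2)\to\mathbb{C}^{k^2}$ is the (isometric) reshaping of a matrix into a vector. Hence, writing $\hat{X}=\mathrm{vec}(X^t)$, one checks $\hat{X}^{t}\mathcal{R}(\gamma)\hat{Y}=\sum_i\bigl(\mathrm{vec}(A_i)^{t}\hat{X}\bigr)\bigl(\mathrm{vec}(B_i)^{t}\hat{Y}\bigr)=\sum_i tr(A_iX)\,tr(B_iY)=tr(\gamma(X\otimes Y))$, and $X\mapsto\hat{X}$ is an isometric bijection of $\mathcal{M}_k$ onto $\mathbb{C}^{k^2}$. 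Therefore $\max\{|tr(\gamma(X\otimes Y))|:\ \|X\|_2=\|Y\|_2=1\}=\max\{|p^{t}\mathcal{R}(\gamma)q|:\ \|p\|_2=\|q\|_2=1\}=\|\mathcal{R}(\gamma)\|_{\infty}$, the last equality because replacing $p$ by $\overline{p}$ turns the bilinear pairing into the sesquilinear one. This establishes the first two equalities of the statement.

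For the Hermitian refinement, choose $X,Y$ attaining the maximum; after rescaling $X$ by a unimodular scalar, assume $tr(\gamma(X\otimes Y))=\|\mathcal{R}(\gamma)\|_{\infty}\ge 0$. Decompose $X=X_1+iX_2$ and $Y=Y_1+iY_2$ with $X_1,X_2,Y_1,Y_2$ Hermitian; since traces of commutators vanish, $\|X_1\|_2^2+\|X_2\|_2^2=\|X\|_2^2=1$ and likewise $\|Y_1\|_2^2+\|Y_2\|_2^2=1$. Because $\gamma$ is Hermitian, $tr(\gamma(H\otimes K))\in\mathbb{R}$ for all Hermitian $H,K$, so taking real parts in the expansion of $tr(\gamma(X\otimes Y))$ gives $\|\mathcal{R}(\gamma)\|_{\infty}=tr(\gamma(X_1\otimes Y_1))-tr(\gamma(X_2\otimes Y_2))$. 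The inequality $|tr(\gamma(X_a\otimes Y_a))|\le\|\mathcal{R}(\gamma)\|_{\infty}\|X_a\|_2\|Y_a\|_2$ (a consequence of the first part) together with Cauchy--Schwarz gives $\|X_1\|_2\|Y_1\|_2+\|X_2\|_2\|Y_2\|_2\le 1$, whence all these inequalities must be equalities. Setting aside the trivial case $\|\mathcal{R}(\gamma)\|_{\infty}=0$ (where any Hermitian unit matrices work), the tightness of Cauchy--Schwarz makes the two norm vectors parallel, so some pair $(X_a,Y_a)$ has $\|X_a\|_2\|Y_a\|_2>0$ and $\varepsilon\, tr(\gamma(X_a\otimes Y_a))=\|\mathcal{R}(\gamma)\|_{\infty}\|X_a\|_2\|Y_a\|_2$ for a suitable sign $\varepsilon\in\{+1,-1\}$; then $\gamma_1:=\varepsilon X_a/\|X_a\|_2$ and $\delta_1:=Y_a/\|Y_a\|_2$ are Hermitian, of unit Frobenius norm, and satisfy $tr(\gamma(\gamma_1\otimes\delta_1))=\|\mathcal{R}(\gamma)\|_{\infty}$.

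The transposition and conjugation bookkeeping in the first two steps is purely mechanical. The only point requiring genuine care is the last step, where one must check that the equality case of the Cauchy--Schwarz estimate really does force the optimal $X,Y$ to be proportional to Hermitian matrices, rather than merely yielding a bound on the relevant norms.
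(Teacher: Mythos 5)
Your proposal is correct. For the first two equalities you and the paper carry out the same computation from opposite ends: the paper starts from $\|\mathcal{R}(\gamma)\|_{\infty}=\max|tr(\mathcal{R}(\gamma)vw^t)|$ and uses the isometry of $\mathcal{R}$ together with item (1) of \lemref{propertiesofrealignment} to arrive at $\max|tr(\gamma(X\otimes Y))|$, whereas you start from the variational formula for the top singular value of $G_{\gamma}$ and pass to $\|\mathcal{R}(\gamma)\|_{\infty}$ through the vectorization picture; the content is identical. (One harmless bookkeeping slip: with your stated convention $\langle M,N\rangle=tr(M^*N)$, the pairing $\langle G_{\gamma}(X),Z\rangle$ is the complex conjugate of $tr(\gamma(X\otimes Z^*))$, which the absolute value absorbs.) The genuine divergence is in the Hermitian refinement. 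The paper observes that $G_{\gamma}$ preserves the real subspace of Hermitian matrices when $\gamma$ is Hermitian, picks a Hermitian matrix $\gamma_1$ attaining the largest singular value, and sets $\delta_1=G_{\gamma}(\gamma_1)/\|G_{\gamma}(\gamma_1)\|_2$, which is then automatically Hermitian; this is shorter but silently relies on the fact that the top singular value of a complex-linear map commuting with the adjoint involution is attained on the Hermitian subspace. Your equality-case analysis --- splitting an arbitrary maximizer into Hermitian and anti-Hermitian parts and forcing tightness in the triangle inequality and in Cauchy--Schwarz --- is longer but entirely elementary and makes that attainment explicit. A small simplification: you never need the \emph{parallel vectors} consequence of Cauchy--Schwarz equality; the existence of an index $a$ with $\|X_a\|_2\|Y_a\|_2>0$ already follows from $\|X_1\|_2\|Y_1\|_2+\|X_2\|_2\|Y_2\|_2=1$, and tightness of the termwise bound gives $|tr(\gamma(X_a\otimes Y_a))|=\|\mathcal{R}(\gamma)\|_{\infty}\|X_a\|_2\|Y_a\|_2$ for that $a$, which is all your conclusion uses.
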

\begin{proof}First of all, recall that $\|\mathcal{R(\gamma)}\|_{\infty}=\max\{|tr(\mathcal{R}(\gamma)vw^t)|,\ v,w\in \mathbb{C}^k\otimes \mathbb{C}^k\text{ are unit vectors}\}.$
\vspace{0.2cm}

Now, since $\mathcal{R}$ is an isometry, $tr(\mathcal{R}(\gamma)vw^t)=tr(\mathcal{R}(\gamma)(\overline{w}\overline{v}^t)^*)=tr(\gamma \mathcal{R}(\overline{w}\overline{v}^t)^*).$ \vspace{0.2cm}

By item (1) of lemma \ref{propertiesofrealignment}, $\mathcal{R}(\overline{w}\overline{v}^t)^*=W^t\otimes V^t$, where $W,V\in \mathcal{M}_k$ and $\|W\|_2= \|V\|_2=1$.
\vspace{0.2cm}

Therefore,  $\|\mathcal{R}(\gamma)\|_{\infty}=\max\{|tr(\gamma (W^t\otimes V^t))|,\ W,V\in \mathcal{M}_k\text{ and }\|W\|_2= \|V\|_2=1\},$\vspace{0,2cm}

\hspace{3.65cm} $=\max\{|tr(G_{\gamma}(W^t)V^t)|,\ W,V\in \mathcal{M}_k\text{ and }\|W\|_2= \|V\|_2=1\},$\vspace{0,2cm}

\hspace{3.65cm} $=\max\{\|(G_{\gamma}(W^t)\|_2,\ W\in \mathcal{M}_k\text{ and }\|W\|_2=1\}.$\vspace{0,2cm}

\hspace{3.65cm} $=$ the largest singular value of $G_{\gamma}:\mathcal{M}_k\rightarrow \mathcal{M}_k$. \vspace{0,2cm}

This proves the first part of the lemma. Now for the second part, if $\gamma$ is Hermitian then  the set of Hermitian matrices is left invariant by $G_{\gamma}:\mathcal{M}_k\rightarrow \mathcal{M}_k$. Therefore, there is an Hermitian matrix $\gamma_1\in \mathcal{M}_k$ such that $\|\gamma_1\|_2=1$  and $\|G_{\gamma}(\gamma_1)\|_2=$ the largest singular value of $G_{\gamma}$.\vspace{0.2cm}

Notice that  $\|G_{\gamma}(\gamma_1)\|_2=tr(G_{\gamma}(\gamma_1)\delta_1)=tr(\gamma(\gamma_1\otimes\delta_1))$, where $\delta_1=G_{\gamma}(\gamma_1)/\|G_{\gamma}(\gamma_1)\|_2$. \vspace{0.2cm}
 
So $\|\mathcal{R}(\gamma)\|_{\infty}=tr(\gamma(\gamma_1\otimes\delta_1))$, where $\|\gamma_1\|_2=\|\delta_1\|_2=1$ and $\gamma_1$, $\delta_1$ are Hermitian matrices.
\end{proof}

\vspace{0,2cm}

Now, we have all the preliminary results required to discuss our new results.

\vspace{0,2cm}

\section{An upper bound for the spectral radius of the special triad}

In this section we obtain an upper bound for the spectral radius of PPT states, SPC states and invariant under realignment states (theorem \ref{specialspectralradius}). In order to prove this theorem, two lemmas are required.

\vspace{0,2cm}

\begin{lemma}\label{generalspectralradius} Let  $\gamma\in \mathcal{M}_k\otimes \mathcal{M}_k$ be any positive semidefinite Hermitian matrix. Then 
$$\|\gamma^{\Gamma}\|_{\infty}\leq \min\left\{\|\gamma_A\|_{\infty}, \|\gamma_B\|_{\infty}, \|\mathcal{R}(\gamma)\|_{\infty} \right\}.$$

\end{lemma}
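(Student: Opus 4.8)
The plan is to bound $\|\gamma^\Gamma\|_\infty$ by each of the three quantities separately, exploiting the fact that $\gamma^\Gamma$ is Hermitian (though possibly indefinite) so that $\|\gamma^\Gamma\|_\infty = \max\{|\langle \xi, \gamma^\Gamma \xi\rangle| : \|\xi\|_2 = 1\}$ over unit vectors $\xi \in \mathbb{C}^k \otimes \mathbb{C}^k$. First I would pick a unit vector $\xi$ attaining this maximum and write it via its Schmidt/matrix form: identifying $\xi$ with a matrix $X \in \mathcal{M}_k$ with $\|X\|_2 = 1$, one has $\langle \xi, \gamma^\Gamma \xi \rangle = tr(\gamma^\Gamma\, \xi\xi^*) = tr(\gamma\,(\xi\xi^*)^\Gamma)$, and $(\xi\xi^*)^\Gamma$ can be rewritten in terms of $X$ (this is the standard computation showing $\langle \xi, \gamma^\Gamma\xi\rangle = tr(\gamma (X \otimes \bar X))$ up to transposes/conjugates in suitable conventions, i.e.\ a realignment-type pairing).

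For the bound by $\|\mathcal{R}(\gamma)\|_\infty$: once $\langle\xi,\gamma^\Gamma\xi\rangle$ is written as $tr(\gamma(X\otimes Y))$ with $\|X\|_2=\|Y\|_2=1$ (here $Y = \bar X$, so $\|Y\|_2 = 1$ as well), Lemma~\ref{lemmaoperatornormrealignment} gives immediately $|tr(\gamma(X\otimes Y))| \le \|\mathcal{R}(\gamma)\|_\infty$, since that lemma identifies $\|\mathcal{R}(\gamma)\|_\infty$ with exactly the maximum of $|tr(\gamma(X\otimes Y))|$ over Frobenius-unit $X,Y$. So this third bound is essentially free.

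For the bounds by $\|\gamma_A\|_\infty$ and $\|\gamma_B\|_\infty$: the two cases are symmetric (swapping tensor factors, i.e.\ conjugating by the flip $F$, interchanges $\gamma_A$ and $\gamma_B$ and commutes appropriately with $\Gamma$), so it suffices to prove $\|\gamma^\Gamma\|_\infty \le \|\gamma_B\|_\infty$. Here I would use positivity of $\gamma$: with $\xi \leftrightarrow X$ as above, $|\langle\xi,\gamma^\Gamma\xi\rangle| = |tr(\gamma\,(X\otimes\bar X)^{\text{(some rearrangement)}})|$ should be controlled by a Cauchy--Schwarz step in the trace inner product using $\gamma = \gamma^{1/2}\gamma^{1/2} \ge 0$, bounding it by $tr(\gamma\,(XX^* \otimes Id))^{1/2}\,tr(\gamma\,(Id\otimes \bar X \bar X^t))^{1/2}$ or similar; each such factor is $tr(\gamma_A XX^*)$ or $tr(\gamma_B \bar X\bar X^t)$ type, hence at most $\|\gamma_A\|_\infty\|X\|_2^2$ or $\|\gamma_B\|_\infty\|X\|_2^2$. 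Balancing the two factors (or noting $\|X\|_2=1$) yields $\|\gamma^\Gamma\|_\infty \le \sqrt{\|\gamma_A\|_\infty\|\gamma_B\|_\infty}$, which is $\le \min\{\|\gamma_A\|_\infty,\|\gamma_B\|_\infty\}$ only if one is willing to... — more carefully, I expect one gets $\|\gamma^\Gamma\|_\infty \le \|\gamma_B\|_\infty$ directly by choosing the Cauchy--Schwarz split so that \emph{both} resulting factors are of the $\gamma_B$ type, using that the partial transpose is on the $A$-side and leaves the $B$-marginal structure intact.

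The main obstacle I anticipate is bookkeeping: getting the partial-transpose-to-realignment-to-matrix-form dictionary exactly right (which index gets transposed, where conjugates land) so that the Cauchy--Schwarz step genuinely produces reduced-state quantities $tr(\gamma_A \cdot)$ and $tr(\gamma_B \cdot)$ rather than something that still involves $\gamma^\Gamma$. I'd organize this by first proving $\|\gamma^\Gamma\|_\infty = \max_{\|X\|_2=1} |tr(\gamma^\Gamma (X\otimes \bar X))|$ as a clean lemma-level identity (Hermiticity of $\gamma^\Gamma$ plus the spectral theorem), and only then feeding it into the three estimates.
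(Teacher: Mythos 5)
Your overall architecture (three separate bounds, the symmetry between $\gamma_A$ and $\gamma_B$ via the flip, and invoking Lemma~\ref{lemmaoperatornormrealignment} for the $\|\mathcal{R}(\gamma)\|_{\infty}$ bound) matches the paper, but the identity you want to build everything on is false, and this is a genuine gap rather than bookkeeping. If $\xi\in\mathbb{C}^k\otimes\mathbb{C}^k$ corresponds to $X\in\mathcal{M}_k$, it is the \emph{realignment} $\mathcal{R}(\xi\xi^*)$ that equals $X\otimes\overline{X}$ (item (1) of Lemma~\ref{propertiesofrealignment}), not the partial transpose: $(\xi\xi^*)^{\Gamma}$ is not a product operator when $\xi$ is entangled. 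Consequently $\langle\xi,\gamma^{\Gamma}\xi\rangle=tr(\gamma\,(\xi\xi^*)^{\Gamma})$ is \emph{not} equal to $tr(\gamma(X\otimes\overline{X}))$, and your proposed ``clean lemma'' $\|\gamma^{\Gamma}\|_{\infty}=\max_{\|X\|_2=1}|tr(\gamma^{\Gamma}(X\otimes\overline{X}))|$ fails: for the Hermitian matrix $M=\xi\xi^*$ with $\xi=\frac{1}{\sqrt{k}}\sum_i g_i\otimes g_i$ one computes $tr(M(X\otimes\overline{X}))=\frac{1}{k}\|X\|_2^2=\frac{1}{k}$ for \emph{every} unit $X$, while $\|M\|_{\infty}=1$. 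The quantity $\max_{\|X\|_2=\|Y\|_2=1}|tr(M(X\otimes Y))|$ is governed by $\|\mathcal{R}(M)\|_{\infty}$, not $\|M\|_{\infty}$, and distinguishing these two norms is precisely what the lemma is about, so no exact identity of this kind can carry the proof. Separately, the Cauchy--Schwarz split you sketch would at best give $\sqrt{\|\gamma_A\|_{\infty}\|\gamma_B\|_{\infty}}$, which is \emph{at least} $\min\{\|\gamma_A\|_{\infty},\|\gamma_B\|_{\infty}\}$, i.e.\ weaker than what is claimed.

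The missing idea is an inequality, not an identity, and it is exactly where positivity of $\gamma$ enters. Write the maximizing unit vector as $v=(D\otimes Id)w$ with $w=\sum_i g_i\otimes g_i$ and $tr(DD^*)=1$; then $(vv^*)^{\Gamma}=(D\otimes Id)F(D^*\otimes Id)$, where $F=(ww^*)^{\Gamma}$ is the flip operator, which satisfies $-Id\leq F\leq Id$. Hence $|tr(\gamma^{\Gamma}vv^*)|=|tr((D^*\otimes Id)\gamma(D\otimes Id)F)|\leq tr((D^*\otimes Id)\gamma(D\otimes Id))=tr(\gamma_A DD^*)\leq\|\gamma_A\|_{\infty}$, and symmetrically for $\gamma_B$. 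For the third bound one instead uses the Schmidt form $v=(R\otimes S)u$ normalized so that $tr((RR^*)^2)=tr((\overline{S}S^t)^2)=1$; the same flip argument leaves $tr(\gamma(RR^*\otimes\overline{S}S^t))$, which is at most $\|\mathcal{R}(\gamma)\|_{\infty}$ by Lemma~\ref{lemmaoperatornormrealignment}. Your instinct that the realignment bound is ``essentially free'' once one reaches an expression of the form $tr(\gamma(X\otimes Y))$ with Frobenius-unit factors is correct; the work is in getting there, and it requires the step $|tr(\rho F)|\leq tr(\rho)$ for $\rho\geq 0$, which has no substitute in your proposal.
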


\begin{proof} Let $v\in \mathbb{C}^k\otimes\mathbb{C}^k$ be a unit vector such that  $|tr(\gamma^{\Gamma}vv^*)|=\|\gamma^{\Gamma}\|_{\infty}$. Let $n$ be its rank.
Denote by  $\{g_1,\ldots, g_k\}$ and $\{e_1,\ldots,e_n\}$  the canonical bases of $\mathbb{C}^k$ and $\mathbb{C}^n$, respectively.
\vspace{0,2cm}

Next, there are matrices $D\in \mathcal{M}_{k\times k}$, $E\in \mathcal{M}_{k\times k}$,  $R\in \mathcal{M}_{k\times n}$ and $S\in \mathcal{M}_{k\times n}$ such that
\vspace{0,1cm}

\begin{enumerate}
\item $v=(D\otimes Id) w$, where $tr(DD^*)=1$, $w=\sum_{i=1}^k g_i\otimes g_i\in \mathbb{C}^k\otimes\mathbb{C}^k $,

\vspace{0,1cm}
\item $v=(Id\otimes E) w$, where  $tr(\overline{E}E^t)=1$, $w=\sum_{i=1}^k g_i\otimes g_i\in \mathbb{C}^k\otimes\mathbb{C}^k$,

\vspace{0,1cm}

\item $v=(R\otimes S)u$, where $u=\sum_{i=1}^ne_i\otimes e_i\in \mathbb{C}^n\otimes \mathbb{C}^n$ and  $tr((RR^*)^2)=tr((\overline{S}S^t)^2 )=1$.
\end{enumerate}

\vspace{0,3cm}
Now,

\begin{enumerate}
\item $\|\gamma^{\Gamma}\|_{\infty}=|tr(\gamma^{\Gamma}vv^*)|=|tr((D^*\otimes Id)\gamma(D\otimes Id)(ww^*)^{\Gamma})|\leq tr((D^*\otimes Id)\gamma(D\otimes Id)), $\\ since $Id\otimes Id \pm (ww^*)^{\Gamma}$ and $\gamma$ are  positive semidefinite.  Hence $$\|\gamma^{\Gamma}\|_{\infty}\leq  tr(\gamma(DD^*\otimes Id))=tr(\gamma_ADD^*)\leq \|\gamma_A\|_{\infty}tr(DD^*)=\|\gamma_A\|_{\infty}.$$

\vspace{0,2cm}

\item $\|\gamma^{\Gamma}\|_{\infty}=|tr(\gamma^{\Gamma}vv^*)|=|tr((Id\otimes E^t)\gamma(Id\otimes \overline{E})(ww^*)^{\Gamma})|\leq tr((Id\otimes E^t)\gamma(Id\otimes \overline{E})), $ \\ since $Id\otimes Id \pm (ww^*)^{\Gamma}$ and $\gamma$ are  positive semidefinite. Hence $$\|\gamma^{\Gamma}\|_{\infty}\leq  tr(\gamma(Id\otimes \overline{E}E^t))=tr(\gamma_B\overline{E}E^t)\leq \|\gamma_B\|_{\infty}tr(\overline{E}E^t)=\|\gamma_B\|_{\infty}.$$

\vspace{0,2cm}

\item $\|\gamma^{\Gamma}\|_{\infty}=|tr(\gamma^{\Gamma}vv^*)|=|tr((R^*\otimes S^t)\gamma(R\otimes \overline{S})(uu^*)^{\Gamma})|\leq tr((R^*\otimes S^t)\gamma(R\otimes \overline{S}))$, \\ since $Id\otimes Id \pm (uu^*)^{\Gamma}$ and $\gamma$ are  positive semidefinite. Hence $$\|\gamma^{\Gamma}\|_{\infty}\leq  tr(\gamma(RR^*\otimes \overline{S}S^t))\leq \|\mathcal{R}(\gamma)\|_{\infty},\text{ since }\|RR^*\|_2=\|\overline{S}S^t \|_2=1, \text{ by lemma } \ref{lemmaoperatornormrealignment}.$$
\end{enumerate}

\end{proof}

\vspace{0,5cm}

\begin{lemma}\label{generalspectralradiusrealignment} Let  $\gamma\in \mathcal{M}_k\otimes \mathcal{M}_k$ be a positive semidefinite Hermitian matrix.  
Then 
$$\|\mathcal{R}(\gamma)\|_{\infty}^2\leq \|\gamma_A\|_{\infty}\|\gamma_B\|_{\infty}.$$
\end{lemma}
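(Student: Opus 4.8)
The plan is to combine the variational formula for $\|\mathcal{R}(\gamma)\|_{\infty}$ from Lemma~\ref{lemmaoperatornormrealignment} with a Cauchy--Schwarz splitting of the tensor factor. Since $\gamma$ is Hermitian, Lemma~\ref{lemmaoperatornormrealignment} produces Hermitian matrices $\gamma_1,\delta_1\in\mathcal{M}_k$ with $\|\gamma_1\|_2=\|\delta_1\|_2=1$ and $\|\mathcal{R}(\gamma)\|_{\infty}=tr(\gamma(\gamma_1\otimes\delta_1))$. Using $\gamma\geq 0$, write $\gamma_1\otimes\delta_1=(\gamma_1\otimes Id)(Id\otimes\delta_1)$ and insert $\gamma^{1/2}\gamma^{1/2}$:
$$tr(\gamma(\gamma_1\otimes\delta_1))=tr\!\left(\big(\gamma^{1/2}(\gamma_1\otimes Id)\big)\big((Id\otimes\delta_1)\gamma^{1/2}\big)\right).$$
Applying the Cauchy--Schwarz inequality for the Hilbert--Schmidt inner product, $|tr(XY)|\leq\|X\|_2\|Y\|_2$, gives
$$\|\mathcal{R}(\gamma)\|_{\infty}\leq\big\|\gamma^{1/2}(\gamma_1\otimes Id)\big\|_2\;\big\|(Id\otimes\delta_1)\gamma^{1/2}\big\|_2.$$

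Next I would evaluate the two Frobenius norms. Using that $\gamma_1$ and $\gamma^{1/2}$ are Hermitian, cyclicity of the trace, and $\gamma^{1/2}\gamma^{1/2}=\gamma$,
$$\big\|\gamma^{1/2}(\gamma_1\otimes Id)\big\|_2^2=tr\!\big((\gamma_1\otimes Id)\gamma(\gamma_1\otimes Id)\big)=tr\!\big(\gamma(\gamma_1^2\otimes Id)\big)=tr(\gamma_A\gamma_1^2),$$
and symmetrically $\big\|(Id\otimes\delta_1)\gamma^{1/2}\big\|_2^2=tr\!\big(\gamma(Id\otimes\delta_1^2)\big)=tr(\gamma_B\delta_1^2)$. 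Since $\gamma_A$ and $\gamma_B$ are positive semidefinite (partial traces of $\gamma\geq 0$) and $\gamma_1^2,\delta_1^2\geq 0$, the estimate $tr(PQ)\leq\|P\|_{\infty}\,tr(Q)$ for $P,Q\geq 0$ (which follows from $\|P\|_{\infty}Id-P\geq 0$) yields $tr(\gamma_A\gamma_1^2)\leq\|\gamma_A\|_{\infty}tr(\gamma_1^2)=\|\gamma_A\|_{\infty}$ and $tr(\gamma_B\delta_1^2)\leq\|\gamma_B\|_{\infty}$. Combining the three displays gives $\|\mathcal{R}(\gamma)\|_{\infty}^2\leq\|\gamma_A\|_{\infty}\|\gamma_B\|_{\infty}$, as desired. (Alternatively, one may write $\gamma=\sum_j\psi_j\psi_j^*$ and apply Cauchy--Schwarz term by term to $\sum_j\psi_j^*(\gamma_1\otimes\delta_1)\psi_j$; this is the same computation.)

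There is no genuine obstacle in this argument: the whole proof is an application of Hilbert--Schmidt Cauchy--Schwarz after the factorization $\gamma_1\otimes\delta_1=(\gamma_1\otimes Id)(Id\otimes\delta_1)$. The only point deserving a small verification is that the reduced states $\gamma_A,\gamma_B$ are positive semidefinite, so that the bound $tr(PQ)\leq\|P\|_{\infty}tr(Q)$ is legitimate; this holds because a partial trace of a positive semidefinite matrix is positive semidefinite.
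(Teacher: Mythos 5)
Your proof is correct and follows essentially the same route as the paper: both start from Lemma~\ref{lemmaoperatornormrealignment} and establish the key intermediate inequality $tr(\gamma(\gamma_1\otimes\delta_1))^2\leq tr(\gamma_A\gamma_1^2)\,tr(\gamma_B\delta_1^2)$, the only difference being that the paper packages the Cauchy--Schwarz step as the nonnegativity of the determinant of a $2\times 2$ positive semidefinite partial-trace (Gram) matrix, whereas you invoke Hilbert--Schmidt Cauchy--Schwarz directly after factoring $\gamma_1\otimes\delta_1=(\gamma_1\otimes Id)(Id\otimes\delta_1)$. These are the same inequality in two guises, and the remaining estimates $tr(\gamma_A\gamma_1^2)\leq\|\gamma_A\|_{\infty}$ and $tr(\gamma_B\delta_1^2)\leq\|\gamma_B\|_{\infty}$ are identical.
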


\begin{proof}
By lemma \ref{lemmaoperatornormrealignment}, there are Hermitian matrices $\gamma_1\in \mathcal{M}_k$ and $\delta_1\in \mathcal{M}_k$ such that \\

\begin{enumerate}
\item $tr(\gamma_1^2)=tr(\delta_1^2)=1$\\

\item $tr(\gamma(\gamma_1\otimes \delta_1))=\|\mathcal{R}(\gamma)\|_{\infty}.$\\
\end{enumerate}

Consider the following positive semidefinite Hermitian matrix

$$\begin{pmatrix}
\gamma^{\frac{1}{2}} & 0\\
0 & \gamma^{\frac{1}{2}}
\end{pmatrix}\begin{pmatrix}
Id\otimes \delta_1\\
 \gamma_1\otimes Id 
\end{pmatrix}
\begin{pmatrix}
Id\otimes \delta_1 & \gamma_1\otimes Id
\end{pmatrix}\begin{pmatrix}
\gamma^{\frac{1}{2}} & 0\\
0 & \gamma^{\frac{1}{2}}
\end{pmatrix}$$
\vspace{0,5cm}
$$=
\begin{pmatrix}
\gamma^{\frac{1}{2}}(Id\otimes \delta_1^2)\gamma^{\frac{1}{2}} & \gamma^{\frac{1}{2}}(\gamma_1\otimes\delta_1)\gamma^{\frac{1}{2}} \\
\gamma^{\frac{1}{2}}(\gamma_1\otimes \delta_1)\gamma^{\frac{1}{2}} & \gamma^{\frac{1}{2}}(\gamma_1^2\otimes Id)\gamma^{\frac{1}{2}}
\end{pmatrix}.$$

\vspace{0,5cm}

Its  partial trace,  \ \ \
$D=\begin{pmatrix}
tr(\gamma (Id_k\otimes \delta_1^2) )& tr(\gamma(\gamma_1\otimes\delta_1)) \\
tr(\gamma(\gamma_1\otimes \delta_1)) & tr(\gamma(\gamma_1^2\otimes Id_k))
\end{pmatrix}_{2\times 2}$
is also positive semidefinite.

\vspace{0,5cm}

Thus \ \ \ $0\leq \det(D)=tr(\gamma (Id_k\otimes \delta_1^2) )tr(\gamma(\gamma_1^2\otimes Id_m))-tr(\gamma(\gamma_1\otimes \delta_1))^2.$\\

Notice that  
\vspace{0,2cm}

\begin{itemize}
\item $tr(\gamma (Id_k\otimes \delta_1^2) )=tr(\gamma_B\delta_1^2)\leq \|\gamma_B\|_{\infty}tr(\delta_1^2)= \|\gamma_B\|_{\infty}$,
\item $tr(\gamma(\gamma_1^2\otimes Id_m))=tr(\gamma_A\gamma_1^2)\leq \|\gamma_A\|_{\infty}tr(\gamma_1^2)= \|\gamma_A\|_{\infty}$,
\item $tr(\gamma(\gamma_1\otimes \delta_1))^2=\|\mathcal{R}(\gamma)\|_{\infty}^2$.\\

\end{itemize}

Hence $\|\mathcal{R}(\gamma)\|_{\infty}^2\leq \|\gamma_A\|_{\infty}\|\gamma_B\|_{\infty}$.
\end{proof}

\vspace{0,3cm}
These  lemmas imply the first new connection for our special triad of quantum states.

\vspace{0,3cm}

\begin{theorem}\label{specialspectralradius}
Let  $\gamma\in \mathcal{M}_k\otimes \mathcal{M}_k$ be a positive semidefinite Hermitian matrix. If $\gamma$ is PPT or SPC or invariant under realingment then $$\|\gamma\|_{\infty}\leq \min\{\|\gamma_A\|_{\infty}, \|\gamma_B\|_{\infty}, \|\mathcal{R}(\gamma)\|_{\infty} \}.$$
\end{theorem}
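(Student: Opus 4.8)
The plan is to treat the three classes separately, in each case moving the underlying positive semidefinite matrix through the group of linear contractions and applying Lemma \ref{generalspectralradius} or Lemma \ref{generalspectralradiusrealignment}, while bookkeeping how $\|\cdot\|_{\infty}$, the reduced states, and the realignment transform under the partial transpose, the flip $F$, and $\mathcal{R}$ (using that the partial transpose and multiplication by $F$ are isometries for $\|\cdot\|_{\infty}$). When $\gamma$ is PPT, $\gamma^{\Gamma}$ is itself positive semidefinite and Hermitian, so I would apply Lemma \ref{generalspectralradius} to $\gamma^{\Gamma}$ in place of $\gamma$: since $(\gamma^{\Gamma})^{\Gamma}=\gamma$ the left-hand side becomes $\|\gamma\|_{\infty}$, while $(\gamma^{\Gamma})_A=\gamma_A$, $(\gamma^{\Gamma})_B=\gamma_B^{t}$ and $\mathcal{R}(\gamma^{\Gamma})=\mathcal{R}(\gamma)F$ by item (5) of Lemma \ref{propertiesofrealignment}, so the right-hand side is exactly $\min\{\|\gamma_A\|_{\infty},\|\gamma_B\|_{\infty},\|\mathcal{R}(\gamma)\|_{\infty}\}$, which is the claim.

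When $\gamma$ is SPC the relevant positive object is $\eta:=\mathcal{R}(\gamma^{\Gamma})\geq 0$. I would apply Lemma \ref{generalspectralradius} to $\eta$ and retain only the bound $\|\eta^{\Gamma}\|_{\infty}\leq\|\mathcal{R}(\eta)\|_{\infty}$. By item (7) of Lemma \ref{propertiesofrealignment}, $\eta^{\Gamma}=\mathcal{R}(\gamma^{\Gamma})^{\Gamma}=\gamma F$, hence $\|\eta^{\Gamma}\|_{\infty}=\|\gamma\|_{\infty}$; and by item (2), $\mathcal{R}(\eta)=\mathcal{R}(\mathcal{R}(\gamma^{\Gamma}))=\gamma^{\Gamma}$, hence $\|\mathcal{R}(\eta)\|_{\infty}=\|\gamma^{\Gamma}\|_{\infty}$. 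This yields the auxiliary inequality $\|\gamma\|_{\infty}\leq\|\gamma^{\Gamma}\|_{\infty}$; applying Lemma \ref{generalspectralradius} now to $\gamma$ itself bounds $\|\gamma^{\Gamma}\|_{\infty}$ above by $\min\{\|\gamma_A\|_{\infty},\|\gamma_B\|_{\infty},\|\mathcal{R}(\gamma)\|_{\infty}\}$, and chaining the two inequalities finishes this case. This is the step that takes some care: the correct move is to feed $\eta=\mathcal{R}(\gamma^{\Gamma})$ rather than $\gamma$ into Lemma \ref{generalspectralradius} and to recognize, through the identities of Lemma \ref{propertiesofrealignment}, that doing so produces precisely the missing inequality $\|\gamma\|_{\infty}\leq\|\gamma^{\Gamma}\|_{\infty}$.

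When $\gamma$ is invariant under realignment, $\|\mathcal{R}(\gamma)\|_{\infty}=\|\gamma\|_{\infty}$ already, so only the bounds by $\|\gamma_A\|_{\infty}$ and $\|\gamma_B\|_{\infty}$ need proof, and here I would invoke Lemma \ref{generalspectralradiusrealignment} after showing that the two reduced states have equal operator norm. Indeed, since $\gamma$ is Hermitian, item (9) of Lemma \ref{propertiesofrealignment} gives $\mathcal{R}(F\overline{\gamma}F)=\mathcal{R}(\gamma)^{*}=\gamma=\mathcal{R}(\gamma)$, so applying $\mathcal{R}$ once more (item (2)) gives $F\overline{\gamma}F=\gamma$; comparing reduced states on the two sides yields $\gamma_B=\overline{\gamma_A}$ and hence $\|\gamma_A\|_{\infty}=\|\gamma_B\|_{\infty}$. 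Lemma \ref{generalspectralradiusrealignment} then gives $\|\gamma\|_{\infty}^{2}=\|\mathcal{R}(\gamma)\|_{\infty}^{2}\leq\|\gamma_A\|_{\infty}\|\gamma_B\|_{\infty}=\|\gamma_A\|_{\infty}^{2}$, so $\|\gamma\|_{\infty}\leq\|\gamma_A\|_{\infty}=\|\gamma_B\|_{\infty}$, which together with $\|\gamma\|_{\infty}\leq\|\mathcal{R}(\gamma)\|_{\infty}$ completes the proof.
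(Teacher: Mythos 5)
Your proposal is correct and follows essentially the same route as the paper: Lemma~\ref{generalspectralradius} applied to $\gamma^{\Gamma}$ for the PPT case, the key inequality $\|\gamma\|_{\infty}\leq\|\gamma^{\Gamma}\|_{\infty}$ extracted from Lemma~\ref{generalspectralradius} applied to $\mathcal{R}(\gamma^{\Gamma})$ via items (7) and (2) of Lemma~\ref{propertiesofrealignment} for the SPC case, and Lemma~\ref{generalspectralradiusrealignment} together with the equality of the reduced states' norms for the invariant case. The only (harmless) deviations are that you chain directly through $\|\gamma^{\Gamma}\|_{\infty}\leq\min\{\|\gamma_A\|_{\infty},\|\gamma_B\|_{\infty},\|\mathcal{R}(\gamma)\|_{\infty}\}$ instead of routing the SPC case through Lemma~\ref{generalspectralradiusrealignment}, and that you derive $\gamma_B=\overline{\gamma_A}$ from item (9) rather than citing the Schmidt-decomposition corollary the paper invokes.
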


\begin{proof} First, let $\gamma$ be a PPT state. Hence  $\gamma^{\Gamma}$ is also a state. \\

Notice that
$(\gamma^{\Gamma})_A=\gamma_A$, $(\gamma^{\Gamma})_B=\gamma_B^t$ and, by lemma \ref{lemmaoperatornormrealignment}, $\|\mathcal{R}(\gamma^{\Gamma})\|_{\infty}=\|\mathcal{R}(\gamma)\|_{\infty}$.\\

By applying lemma \ref{generalspectralradius} on $\gamma^{\Gamma}$, we obtain $$\|\gamma\|_{\infty}\leq \min\{\|(\gamma^{\Gamma})_A\|_{\infty}, \|(\gamma^{\Gamma})_B\|_{\infty}, \|\mathcal{R}(\gamma^{\Gamma})\|_{\infty} \}=\min\{\|\gamma_A\|_{\infty}, \|\gamma_B\|_{\infty}, \|\mathcal{R}(\gamma)\|_{\infty} \}.$$

\vspace{0,3cm}
So the proof of the PPT case is complete.\vspace{0,2cm}

Next, if $\gamma$ is SPC or invariant under realignment then $\gamma_A=\gamma_B$ or $\gamma_A=\gamma_B^t$ by lemma \cite[Corollary 25]{carielloIEEE}. 
Hence, by lemma \ref{generalspectralradiusrealignment}, $\|\mathcal{R}(\gamma)\|_{\infty}\leq \|\gamma_A\|_{\infty}.$\vspace{0,2cm}

 It remains to prove that $\|\gamma\|_{\infty}\leq \|\mathcal{R}(\gamma)\|_{\infty}$, whenever $\gamma$ is SPC or invariant under realignment. Notice that this inequality is trivial for matrices invariant under realignment. Thus, let $\gamma$ be  a SPC state.\vspace{0,2cm}

As defined in the introduction, $\mathcal{R}(\gamma^{\Gamma})$ is positive semidefinite. Applying lemma \ref{generalspectralradius} on $\mathcal{R}(\gamma^{\Gamma})$, we obtain
$$\|\mathcal{R}(\gamma^{\Gamma})^{\Gamma}\|_{\infty}\leq \|\mathcal{R}(\mathcal{R}(\gamma^{\Gamma}))\|_{\infty}.$$

\vspace{0,2cm}

Now, by items (7) and (2) of lemma \ref{propertiesofrealignment},  $\mathcal{R}(\gamma^{\Gamma})^{\Gamma}=\gamma F$ and $\mathcal{R}(\mathcal{R}(\gamma^{\Gamma}))=\gamma^{\Gamma}$, where $F$ is the flip operator.
Therefore 
$$\|\gamma F\|_{\infty}\leq \|\gamma^{\Gamma}\|_{\infty}.$$

\vspace{0,2cm}

Finally,  $\|\gamma^{\Gamma}\|_{\infty}\leq \|\mathcal{R}(\gamma)\|_{\infty}$ by lemma \ref{generalspectralradius}, and $\|\gamma \|_{\infty}=\|\gamma F\|_{\infty}$, since $F$ is an isometry. 
 \end{proof}

\vspace{0,5cm}

\section{Filter normal form for SPC states and  invariant under realignment states}

\vspace{0,5cm}

In this section we show that every SPC state and every  invariant under realignment state can be put in the filter normal form. In addition their filter normal forms can still be chosen to be SPC and invariant under realignment, respectively (corollary \ref{corollaryfilternormalformSPCandINV}).  \vspace{0,2cm}

As described in the introduction, there are applications of this normal form in entanglement theory. Now, it has been noticed that this normal form is connected to an extension of Sinkhorn-Knopp theorem for positive maps \cite{CarielloLAMA, gurvits2004}. This theorem concerns the existence of invertible matrices $R,S$ such that  $R^*T(SXS^*)R$ is doubly stochastic for a positive map $T(X)$  satisfying suitable conditions. So we start this section with  some definitions and  lemmas related to this theorem.

\vspace{0,2cm}

Let $V\in \mathcal{M}_k$ be an orthogonal projection and consider the sub-algebra of $\mathcal{M}_k:$ $V\mathcal{M}_kV=\{VXV,\ X\in \mathcal{M}_k\}$. Let $P_k$ denote the set of positive semidefinite Hermitian matrices of $\mathcal{M}_k$.

\vspace{0,2cm}

\begin{definition}\label{definitionpositivemaps}Let us say that $T:V\mathcal{M}_kV\rightarrow V\mathcal{M}_kV$  is a positive map if $T(X)\in P_k\cap V\mathcal{M}_kV$ for every $X\in P_k\cap V\mathcal{M}_kV$. In addition, we say that a positive map $T:V\mathcal{M}_kV\rightarrow V\mathcal{M}_kV$ is doubly stochastic  if the following equivalent conditions hold
\begin{enumerate}
\item  the matrix $A_{m\times m}$, defined as $A_{ij}=tr(T(v_iv_i^*)w_jw_j^*)$, is doubly stochastic for every choice of orthonormal bases $v_1,\ldots,v_m$ and $w_1,\ldots,w_m$  of $\Ima(V)$, 
\item $T(V)=T^*(V)=V$, where $T^*:V\mathcal{M}_kV\rightarrow V\mathcal{M}_kV$ is the adjoint of $T:V\mathcal{M}_kV\rightarrow V\mathcal{M}_kV$ with respect to the trace inner product.

\end{enumerate}

 \end{definition}

\vspace{0,3cm}

\begin{definition}\label{deffullyindecomposabel}A positive map $T:V\mathcal{M}_kV\rightarrow V\mathcal{M}_kV$ is said to be fully indecomposable if the following equivalent conditions hold
\begin{enumerate}
\item the matrix $A_{m\times m}$, defined as $A_{ij}=tr(T(v_iv_i^*)w_jw_j^*)$, is fully indecomposable \cite{marcus} for every choice of orthonormal bases $v_1,\ldots,v_m$ and $w_1,\ldots,w_m$  of $\Ima(V)$, 
\item $\rank(X)+\rank(Y)<\rank(V)$, whenever $X,Y\in (V\mathcal{M}_kV\cap P_k)\setminus\{0\}$and $tr(T(X)Y)=0$,
\item $\rank(T(X))>\rank(X)$,   $\forall X\in V\mathcal{M}_kV\cap P_k$ such that $0<\rank(X)<\rank (V).$\\
\end{enumerate}

\end{definition}

Below we prove  two lemmas concerning self-adjoint maps with respect to the trace inner product.   

\vspace{0,1cm}

\begin{lemma}\label{lemmaselfadjoint} Let $T:V\mathcal{M}_kV\rightarrow V\mathcal{M}_kV$  be a fully indecomposable self-adjoint  map. There is $R\in V\mathcal{M}_kV$ such that $R^*T(R(\cdot)R^*)R: V\mathcal{M}_kV\rightarrow V\mathcal{M}_kV$ is doubly stochastic.
\end{lemma}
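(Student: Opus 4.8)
The plan is to realize the claimed matrix $R$ as a solution of a fixed-point / optimization problem, exactly as in the classical Sinkhorn–Knopp setting and its operator generalization in \cite{CarielloLAMA, gurvits2004}. Concretely, I would consider the functional
\[
\Phi(S)=\frac{\det\!\big(S^*T(S(\cdot)S^*)S\big)}{\det(SS^*)^{?}}
\]
restricted to $S\in V\mathcal{M}_kV$ invertible on $\Ima(V)$ (or, equivalently, work with $\log\det$ of the associated operator and a normalization $\det(SS^*)=1$), and show that a maximizer exists and that its Euler–Lagrange equations are precisely the doubly stochastic conditions $R^*T(R(\cdot)R^*)R(V)=V$ and its adjoint equal to $V$ from Definition~\ref{definitionpositivemaps}(2). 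Here the key structural input is self-adjointness of $T$: it forces the two Sinkhorn scalings (left and right) to coincide, so that a single matrix $R$ suffices rather than a pair $(R,S)$. So the first step is to set up this variational problem carefully on the manifold of invertible elements of the algebra $V\mathcal{M}_kV$, observing that $R^*T(R(\cdot)R^*)R$ is again a positive self-adjoint map on $V\mathcal{M}_kV$ for any such $R$.

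The second step is the compactness/coercivity argument that guarantees the supremum is attained at an invertible $R$, and this is where full indecomposability enters. Using the characterization in Definition~\ref{deffullyindecomposabel}(3), namely $\rank(T(X))>\rank(X)$ for every $X\in V\mathcal{M}_kV\cap P_k$ with $0<\rank(X)<\rank(V)$, I would show that along any sequence $S_n$ with $\det(S_nS_n^*)$ normalized but $S_n$ degenerating (some singular values on $\Ima(V)$ tending to $0$), the quantity $\det(S_n^*T(S_n(\cdot)S_n^*)S_n)$ stays bounded away from the boundary behaviour that would make the normalized functional blow up or collapse; full indecomposability is exactly what rules out the ``block'' degenerations. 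This yields an interior maximizer $R$, invertible on $\Ima(V)$.

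The third step is to extract the doubly stochastic conditions from first-order optimality. Differentiating $\log\det\big(S^*T(S(\cdot)S^*)S\big)$ at $S=R$ in a direction $H\in V\mathcal{M}_kV$ and using self-adjointness of $T$ together with cyclicity of the trace, the stationarity condition should collapse to $T_R(V)=V$ where $T_R=R^*T(R(\cdot)R^*)R$, and then $T_R^*(V)=V$ follows because $T_R$ is itself self-adjoint (as $T$ is and conjugation by $R$, $R^*$ is symmetric under taking adjoints). Invoking condition (2) of Definition~\ref{definitionpositivemaps} then gives that $T_R$ is doubly stochastic, which is the assertion.

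The main obstacle I anticipate is the coercivity step: making precise, in the noncommutative algebra $V\mathcal{M}_kV$, the statement that the normalized determinant functional is proper (does not have its supremum escape to a degenerate $R$), and pinning down exactly how Definition~\ref{deffullyindecomposabel}(3) prevents this. In the commutative Sinkhorn–Knopp case this is the standard argument that a fully indecomposable nonnegative matrix has positive permanent and that scalings cannot run off to infinity; here one needs the operator analogue, presumably already available in \cite{gurvits2004} or \cite{CarielloLAMA}, so in the writeup I would most likely reduce to or cite that general Sinkhorn-type theorem and then spend the real effort only on the reduction ``pair of scalings $(R,S)$ collapses to a single $R$ when $T$ is self-adjoint'', which is the genuinely new content of this lemma.
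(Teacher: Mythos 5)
There is a genuine gap, and it sits exactly at the step you flagged as the ``genuinely new content'': the collapse from a pair of scalings to a single $R$. Your functional $\Phi(S)=\det\bigl(S^*T(SS^*)S\bigr)/\det(SS^*)^{2}$ depends only on $X=SS^*$, since $\det(S^*T(X)S)=|\det S|^{2}\det T(X)$ on $\Ima(V)$; it is therefore just the Gurvits capacity $\det T(X)/\det X$, and its Euler--Lagrange equation at a minimizer $X_0$ is the \emph{two-sided} condition $T^*\bigl(T(X_0)^{-1}\bigr)=\lambda X_0^{-1}$, which for self-adjoint $T$ reads $T\bigl(T(X_0)^{-1}\bigr)=\lambda X_0^{-1}$. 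This produces the pair of scalings $X_0^{1/2}$ and $T(X_0)^{-1/2}$, not the single-scaling condition $T(X_0)=\lambda X_0^{-1}$ (equivalently $R^*T(RR^*)R=V$ with $R=X_0^{1/2}$, which is what doubly stochasticity of $R^*T(R(\cdot)R^*)R$ requires via Definition~\ref{definitionpositivemaps}(2)). Self-adjointness of $T$ does \emph{not} make these coincide for free: already in the commutative case a symmetric fully indecomposable nonnegative matrix $A$ has a scaling $D_1AD_2$ doubly stochastic with $D_1\neq D_2$ a priori, and the classical fact that one may take $D_1=cD_2$ is proved by appealing to the \emph{uniqueness} of Sinkhorn scalings for fully indecomposable matrices (apply uniqueness to $D_1AD_2$ and its transpose $D_2AD_1$). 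So asserting that ``stationarity should collapse to $T_R(V)=V$'' assumes the conclusion.

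This uniqueness step is precisely what the paper supplies and what your writeup omits. The paper first cites the two-sided theorem to get $T_1(X)=B^*T(AXA^*)B$ doubly stochastic, notes that $T_2=T_1^*=A^*T(B(\cdot)B^*)A$ is doubly stochastic as well (here self-adjointness of $T$ is used), writes $T_2(X)=C^*T_1(DXD^*)C$ with $C=B^+A$, $D=A^+B$, and then reduces, via the SVDs of $C$ and $D$ and Definition~\ref{deffullyindecomposabel}(1), to the classical Sinkhorn--Knopp uniqueness theorem for fully indecomposable matrices; this forces $C=a^2U$ with $U$ unitary on $\Ima(V)$, hence $A=a^2BU$ and $R=aB$ works. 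If you want to salvage a variational route, you would need either a functional whose critical-point equation is genuinely $T(X)=\lambda X^{-1}$ (for instance minimizing $tr(T(X)X)$ over positive definite $X\in V\mathcal{M}_kV$ with $\det X=1$, where full indecomposability gives coercivity), or you must insert the uniqueness argument explicitly; as written, your step 3 is the missing proof rather than a proof.
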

\begin{proof}
Since $T:V\mathcal{M}_kV\rightarrow V\mathcal{M}_kV$ is fully indecomposable, it has total support \cite[Lemma 2.3]{CarielloLAMA} or it has a positive achievable capacity \cite{gurvits2004}. So there are matrices $A,B\in V\mathcal{M}_kV$ such that 
$\rank(A)=\rank(B)=\rank(V)$ and $T_1(X)=B^*T(AXA^*)B$ is doubly stochastic  \cite[Theorem 3.7]{CarielloLAMA}. Notice that $T_1$ still is fully indecomposable.

\vspace{0,1cm}

Now,  $T_2(X)=T_1^*(X)=A^*T(B(\cdot)B^*)A$ is also doubly stochastic and \begin{equation}\label{eqrelacaoT2T1}
T_2(X)=C^*T_1(DXD^*)C,
\end{equation}
 where $C=B^+A$, $D=A^+B$  and $Y^+$ is the pseudo-inverse of $Y$.
\vspace{0,3cm}

Let  $C=EFG^*$ and $D=HLJ^*$ be the SVD decompositions of $C$ and $D$, where
\begin{enumerate}
\item $E=(e_1,\ldots,e_m), G=(g_1,\ldots,g_m), H=(h_1,\ldots,h_m), J=(j_1,\ldots,j_m)\in\mathcal{M}_{k\times m}$ and the columns of each of these matrices form an orthonormal basis of  $\Ima(V)$. \vspace{0,2cm}
\item $F=diagonal(f_1,\ldots,f_m)$, $L=diagonal(l_1,\ldots,l_m)$ and $f_i>0$, $l_i>0$ for every $i$.\vspace{0,3cm}

\end{enumerate}

Next, define $R,S\in\mathcal{M}_{m\times m}$ as \begin{center}
$R_{ik}=tr(T_2(j_ij_i^*)g_kg_k^*)$\ \ and\ \ $S_{ik}=tr(T_1(h_ih_i^*)e_ke_k^*)$.
\end{center}

By equation \ref{eqrelacaoT2T1}, $R_{ik}=l_i^2f_k^2\ S_{ik}$, i.e., $R=L^2SF^2$.\vspace{0,2cm}

Thus, $L^2$, $F^2$ are positive diagonal matrices such that $L^2SF^2$ is doubly stochastic by definition \ref{definitionpositivemaps}.  Recall that $S$ is a fully indecomposable matrix by definition \ref{deffullyindecomposabel}.
\vspace{0,2cm}

Since $S$ is fully indecomposable, by a theorem proved in \cite{Sinkhorn},  the diagonal matrices $L^2$ and $F^2$ such that $L^2SF^2$ is doubly stochastic must be unique up to multiplication by positive numbers, but $Id. S. Id$ is also doubly stochastic. Thus, $L=a^{-2} Id$ and  $F=a^{2} Id$ for some $a>0$. \vspace{0,2cm}

Therefore, $B^{+}A=C=a^2U$, where $U=EG^*$.  Notice that $UVU^*=V$.\vspace{0,2cm}

 In addition, $BB^{+}A=a^2BU$. Since $BB^{+}=V$ and $VA=A$, we obtain $A=a^2BU$.\vspace{0,2cm}

Thus, \begin{center}
$B^*T(A(\cdot)A^*)B=B^*T((a^2B)U(\cdot)U^*(a^2B)^*)B=(aB)^*T((aB)U(\cdot)U^*(aB)^*)(aB)$.
\end{center}

Finally, $(aB)^*T((aB)(\cdot)(aB)^*)(aB)$ is doubly stochastic too, since $V=UVU^*$.
\end{proof}

\vspace{0,1cm}

\begin{lemma} \label{keylemmanormalform} Let $T:V\mathcal{M}_kV\rightarrow V\mathcal{M}_kV$ be a self-adjoint positive map such that $v\notin \ker(T(vv^*))$ for every $v\in\Ima(V)\setminus\{\vec{0}\}$. Then there is $R\in V\mathcal{M}_kV$ such that $R^*T(R(\cdot)R^*)R: V\mathcal{M}_kV\rightarrow V\mathcal{M}_kV$ is doubly stochastic.
\end{lemma}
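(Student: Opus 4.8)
The statement to prove (Lemma~\ref{keylemmanormalform}) weakens the hypothesis of Lemma~\ref{lemmaselfadjoint} from ``fully indecomposable'' to the condition that $v\notin\ker(T(vv^*))$ for every nonzero $v\in\Ima(V)$. My plan is to reduce to the fully indecomposable case already handled, by splitting off the ``degenerate directions'' of $T$ and inducting on $\rank(V)$. Concretely, I would argue as follows. If $T$ is fully indecomposable, we are done by Lemma~\ref{lemmaselfadjoint}. Otherwise, by condition (2) of Definition~\ref{deffullyindecomposabel} there exist nonzero $X,Y\in V\mathcal{M}_kV\cap P_k$ with $tr(T(X)Y)=0$ and $\rank(X)+\rank(Y)\ge \rank(V)$. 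The goal is to use such a pair to cut $\Ima(V)$ into two orthogonal pieces on which $T$ (suitably compressed) restricts to maps of the same type, then apply the induction hypothesis to each piece and reassemble.

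\textbf{Key steps in order.} First I would show the pair $X,Y$ can be taken to be orthogonal projections: since $tr(T(X)Y)=0$ and $T(X),Y\ge 0$, we get $T(X)Y=0$, and replacing $X,Y$ by the projections $P,Q$ onto their images only shrinks supports, so $tr(T(P)Q)=0$ still holds with $\rank(P)+\rank(Q)\ge\rank(V)$; combined with the hypothesis $v\notin\ker(T(vv^*))$ (which forces $\rank(T(P))\ge\rank(P)$, hence $\rank(Q)\le\rank(V)-\rank(P)$) we actually get $\rank(P)+\rank(Q)=\rank(V)$, so $Q=V-P'$ for a projection $P'$ of the same rank as $P$. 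Next, from $tr(T(P)Q)=0$ and self-adjointness, $tr(T(Q)P)=0$ as well, so $T$ maps $P\mathcal{M}_kP$ into itself and $Q\mathcal{M}_kQ$ into itself (a block-diagonal structure: $\Ima(T(P))\subseteq\Ima(P')$ up to the relation $\rank P=\rank P'$, and one shows $P'=P$ after noting $T$ preserves $P_k$ and the rank count is tight). Then the compressions $T_1:=T|_{P\mathcal{M}_kP}$ and $T_2:=T|_{Q\mathcal{M}_kQ}$ are again self-adjoint positive maps satisfying the same no-kernel hypothesis (inherited directly, since $\Ima(P),\Ima(Q)\subseteq\Ima(V)$), and $\rank(P),\rank(Q)<\rank(V)$. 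By induction there are $R_1\in P\mathcal{M}_kP$, $R_2\in Q\mathcal{M}_kQ$ making $R_i^*T_i(R_i(\cdot)R_i^*)R_i$ doubly stochastic; set $R=R_1+R_2\in V\mathcal{M}_kV$ and check $R^*T(R(\cdot)R^*)R$ is doubly stochastic via condition (2) of Definition~\ref{definitionpositivemaps}, $T'(V)=T'^*(V)=V$, which splits along the two blocks.

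\textbf{Main obstacle.} The delicate point is establishing the clean block decomposition: that the ``complementary'' projection $Q$ really equals $V-P$ for the \emph{same} $P$ appearing on the domain side, i.e.\ that $T$ is genuinely block-diagonal with respect to $\Ima(V)=\Ima(P)\oplus\Ima(Q)$ rather than merely having some triangular behaviour. This is where both the self-adjointness and the no-kernel hypothesis must be used together: self-adjointness gives $tr(T(Q)P)=0$ from $tr(T(P)Q)=0$, which yields $T(P)\in Q^\perp\mathcal{M}_kQ^\perp$ and $T(Q)\in P^\perp\mathcal{M}_kP^\perp$; then the rank inequalities $\rank(T(P))\ge\rank(P)$ and $\rank(T(Q))\ge\rank(Q)$ (forced by $v\notin\ker(T(vv^*))$ applied on $\Ima(P)$ and $\Ima(Q)$ respectively) combined with $\rank(P)+\rank(Q)=\rank(V)$ pin down $Q^\perp\cap\Ima(V)=\Ima(P)$ and $P^\perp\cap\Ima(V)=\Ima(Q)$. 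A second subtlety is making the induction base case and the ``$T$ fully indecomposable'' alternative fit together cleanly — one should phrase the induction on $\rank(V)$ with Lemma~\ref{lemmaselfadjoint} supplying both the base case ($\rank(V)=1$, trivially) and the inductive step whenever $T$ happens to be fully indecomposable. Once the decomposition is in hand, the reassembly is routine bookkeeping with the doubly stochastic criterion.
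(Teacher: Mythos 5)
Your overall architecture --- induction on $\rank(V)$, with Lemma~\ref{lemmaselfadjoint} disposing of the fully indecomposable alternative and a two-block splitting otherwise, followed by reassembly of $R=R_1+R_2$ --- is exactly the paper's. But there is a genuine gap at the step you yourself flag as the main obstacle, and the argument you sketch there does not close it. From $tr(T(P)Q)=0$ with $T(P),Q\geq 0$ you correctly get $T(P)Q=0$, hence $\Ima(T(P))\subseteq \Ima(Q)^{\perp}\cap\Ima(V)$; the hypothesis $v\notin\ker(T(vv^*))$ does give $\rank(T(P))\geq\rank(P)$ (since $T(P)\geq T(vv^*)$ for unit $v\in\Ima(P)$), and together with $\rank(P)+\rank(Q)=\rank(V)$ this pins down $\Ima(T(P))=\Ima(Q)^{\perp}\cap\Ima(V)$. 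That identifies the image of $T(P)$, not the image of $P$: nothing forces $\Ima(P)=\Ima(Q)^{\perp}\cap\Ima(V)$, because the kernel hypothesis only guarantees that $\Ima(P)\cap\Ima(Q)=\{\vec{0}\}$, i.e.\ that $\Ima(V)=\Ima(P)\oplus\Ima(Q)$ as a \emph{direct} sum, not an orthogonal one. So your desired conclusion $Q=V-P$ can fail, and without it $T$ is not block-diagonal with respect to $P\mathcal{M}_kP$ and $Q\mathcal{M}_kQ$, so the induction hypothesis cannot be applied to the compressions of $T$ itself.

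The missing idea, which is how the paper proceeds, is to first straighten the non-orthogonal decomposition by a congruence: since $\Ima(V_1)\oplus\Ima(W_1)=\Ima(V)$, there is an invertible $S\in V\mathcal{M}_kV$ with $SV_1S^*=V_1$ and $S(V-V_1)S^*=W_1$, and one replaces $T$ by $T'(X)=S^*T(SXS^*)S$. For $T'$ the degenerate pair becomes $(V_1,V-V_1)$, a genuinely complementary pair of orthogonal projections; then $tr(T'(V_1)(V-V_1))=0$ and, by self-adjointness, $tr(T'(V-V_1)V_1)=0$, which yields the invariance of $V_1\mathcal{M}_kV_1$ and $(V-V_1)\mathcal{M}_k(V-V_1)$ under $T'$. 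The kernel hypothesis passes to $T'$ because $tr(T'(vv^*)vv^*)=tr(T(ww^*)ww^*)$ with $w=Sv\neq\vec{0}$, and the conjugation costs nothing since the conclusion of the lemma is itself stated only up to a congruence $X\mapsto R^*T(R(\cdot)R^*)R$. With that repair, your reassembly of $R=R_1+R_2$ and the verification of double stochasticity via condition $(2)$ of Definition~\ref{definitionpositivemaps} go through as you describe.
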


\begin{proof}
This proof is an induction on the $\rank(V)$. \vspace{0,2cm}

If $\rank(V)=1$ then $V\mathcal{M}_kV=\{\lambda vv^*,\ \lambda\in \mathbb{C}\}$. Thus, $T(vv^*)=\mu vv^*$, where $\mu> 0$ by hypothesis.\vspace{0,2cm}

Define $R=\frac{1}{\sqrt[4]{\mu}}vv^*$. So $R^*T(Rvv^*R^*)R=vv^*$. Thus,  $R^*T(R(\cdot)R^*)R: V\mathcal{M}_kV\rightarrow V\mathcal{M}_kV$ is a self-adjoint doubly stochastic map.\vspace{0,2cm}

Let $\rank(V)=n>1$ and assume the validity of this theorem whenever the rank of the orthogonal projection is less than $n$.\vspace{0,2cm}

Consider all pairs of orthogonal projections $(V_1,W_1)$  such that  \begin{center}
  $V_1,W_1\in V\mathcal{M}_kV\setminus\{0\}$ and  $0=tr(T(V_1)W_1)$.
\end{center}

Since there is no $v\in\Ima(V)\setminus\{\vec{0}\}$ such that $tr(T(vv^*)vv^*)=0$, $\Ima(V_1)\cap \Ima(W_1)=\{\vec{0}\}$. So $$\rank(V_1)+\rank(W_1)\leq\rank(V).$$

If for every aforementioned  pair $(V_1,W_1)$, we have  $\rank(V_1)+\rank(W_1)<\rank(V)$, then $T$ is fully indecomposable by definition \ref{deffullyindecomposabel}.  So the result follows by lemma \ref{lemmaselfadjoint}.\vspace{0,2cm}

Let us assume that there is such a pair $(V_1,W_1)$ satisfying $\rank(V_1)+\rank(W_1)=\rank(V)$.\vspace{0,2cm}

Since  $\Ima(V_1)\cap \Ima(W_1)=\{\vec{0}\}$, there is $S\in V\mathcal{M}_kV$ such that $SV_1S^*=V_1$ and $S(V-V_1)S^*=W_1$. Define $T'(X)=S^*T(SXS^*)S$. Note that $tr(T'(V_1)(V-V_1))=0$.\vspace{0,2cm}

Next, since $T$ is self-adjoint so is $T'$, hence $tr(T'(V-V_1)V_1)=0$.\vspace{0,2cm}

These last two equalities imply that \begin{equation}\label{equationsubinv}
T'(V_1\mathcal{M}_kV_1)\subset V_1\mathcal{M}_kV_1\ \ \text{and}\ \ T'((V-V_1)\mathcal{M}_k(V-V_1))\subset (V-V_1)\mathcal{M}_k(V-V_1).
\end{equation}

Of course the restrictions $T'|_{V_1\mathcal{M}_kV_1}$ and $T'|_{(V-V_1)\mathcal{M}_k(V-V_1)}$ are self-adjoint and there is no $v\in\Ima(V_1)\setminus\{\vec{0}\}$  or $v\in\Ima(V-V_1)\setminus\{\vec{0}\}$ such that $tr(T'(vv^*)vv^*)=0$.\vspace{0,2cm}

By induction hypothesis, there are $R_1\in V_1\mathcal{M}_kV_1$ and $R_2\in (V-V_1)\mathcal{M}_k(V-V_1)$ such that \vspace{0,2cm}
\begin{itemize}
\item $R_1^*T'(R_1(\cdot)R_1^*)R_1: V_1\mathcal{M}_kV_1\rightarrow V_1\mathcal{M}_kV_1$ is doubly stochastic, i.e., 
\begin{equation}\label{equationdoublysub1}
 R_1^*T'(R_1(V_1)R_1^*)R_1=V_1
\end{equation}

\item $R_2^*T'(R_2(\cdot)R_2^*)R_2: (V-V_1)\mathcal{M}_k(V-V_1)\rightarrow (V-V_1)\mathcal{M}_k(V-V_1)$ is doubly stochastic, i.e.,
\begin{equation}\label{equationdoublysub2}
R_2^*T'(R_2(V-V_1)R_2^*)R_2=V-V_1
\end{equation}

\end{itemize}
\vspace{0,2cm}

Set $R=R_1+R_2\in V\mathcal{M}_kV$. Note que $T''(X)=R^*T'(RXR^*)R$ is self-adjoint and 

$$T''(V)=T''(V_1+V-V_1)=T''(V_1)+T''(V-V_1)$$
$$\hspace{1 cm}=R^*T'(RV_1R^*)R+R^*T'(R(V-V_1)R^*)R$$
$$\hspace{1,3 cm} =R^*T'(R_1V_1R_1^*)R+R^*T'(R_2(V-V_1)R_2^*)R$$
\begin{center}

$\hspace{4,8 cm} =R_1^*T'(R_1V_1R_1^*)R_1+R_2^*T'(R_2(V-V_1)R_2^*)R_2$, by equation \ref{equationsubinv},\vspace{0,2cm}

$\hspace{2.2 cm}=V_1+(V-V_1)=V$, by equations \ref{equationdoublysub1} and \ref{equationdoublysub2}.
\end{center}

Hence, $T'':V\mathcal{M}_kV\rightarrow V\mathcal{M}_kV$ is doubly stochastic.
\end{proof}

\vspace{0,5cm}

\begin{corollary}\label{corollaryimportant} Let $A\in \mathcal{M}_k\otimes \mathcal{M}_k$ be a Hermitian matrix such that $G_A:\mathcal{M}_k\rightarrow \mathcal{M}_k$ is a self-adjoint positive map  and $tr(A(vv^*\otimes vv^*))> 0$ for every $v\in\mathbb{C}^k\setminus\{0\}$. There is an invertible matrix $R\in \mathcal{M}_k$ such that $(R^*\otimes R^*)A(R\otimes R)=\sum_{i=1}^n\lambda_i \gamma_i\otimes \gamma_i$, where
\begin{enumerate}
\item $\lambda_1=1$ and $\gamma_1=\frac{Id}{\sqrt{k}}$
\item $\lambda_i\in\mathbb{R}$ and $\gamma_i=\gamma_i^*$ for every $i$,
\item $1\geq|\lambda_i|$ for every $i$,
\item $tr(\gamma_i\gamma_j)=0$ for every $i\neq j$ and $tr(\gamma_i^2)=1$ for every $i$.
\end{enumerate}

\end{corollary}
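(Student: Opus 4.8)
The plan is to extract a self-adjoint positive map on $V\mathcal{M}_kV$ from $G_A$, apply \lemref{keylemmanormalform}, and then diagonalize the resulting doubly stochastic state in the $*$-product sense using the spectral theorem for the symmetric/flip-commuting structure.

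First I would set $V = Id$, since $A\in\mathcal{M}_k\otimes\mathcal{M}_k$ and the hypotheses say $G_A\colon\mathcal{M}_k\to\mathcal{M}_k$ is self-adjoint and positive with $tr(A(vv^*\otimes vv^*)) = tr(G_A(vv^*)vv^*) > 0$ for all $v\neq 0$; this is exactly the hypothesis $v\notin\ker(T(vv^*))$ of \lemref{keylemmanormalform} with $T = G_A$. So \lemref{keylemmanormalform} produces an invertible $R\in\mathcal{M}_k$ such that $T'(X) = R^*G_A(RXR^*)R$ is doubly stochastic, i.e. $T'(Id) = (T')^*(Id) = Id$. The next step is to translate the identity $T'(X) = R^*G_A(RXR^*)R$ back into a statement about the matrix $A' = (R^*\otimes R^*)A(R\otimes R)$: one checks directly from the definition of $G_A$ that $G_{A'}(X) = R^*G_A(RX^tR^*\,{}^t)\cdots$ — more carefully, I would verify the identity $G_{(R^*\otimes R^*)A(R\otimes R)}(X) = R^* G_A(\overline{R}\,\overline{X}\,\overline{R}^*\cdots)$; the cleanest route is to write $A = \sum_i A_i\otimes B_i$, compute $(R^*\otimes R^*)A(R\otimes R) = \sum_i R^*A_iR\otimes R^*B_iR$, and note $G_{A'}(X) = \sum_i tr(R^*A_iR\,X)R^*B_iR = R^*G_A(RXR^*)R$ after a transpose bookkeeping on the first tensor leg. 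So $G_{A'}$ is doubly stochastic: $G_{A'}(Id) = G_{A'}^*(Id) = Id$, and $A'$ is still Hermitian.

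Now $A'$ is Hermitian, so $G_{A'}$ is a self-adjoint operator on the real vector space of Hermitian matrices, hence diagonalizable in an orthonormal (w.r.t. the trace inner product) basis of Hermitian eigenmatrices $\gamma_1,\dots,\gamma_{k^2}$ with real eigenvalues; since $G_{A'}(Id) = Id$, I can take $\gamma_1 = Id/\sqrt{k}$ with eigenvalue $1$. The key claim is that $A' = \sum_i \lambda_i\,\gamma_i\otimes\gamma_i$ where $\lambda_i$ is the eigenvalue of $\gamma_i$. To see this, expand $A'$ in the orthonormal basis $\{\gamma_i\otimes\gamma_j\}$ of $\mathcal{M}_k\otimes\mathcal{M}_k$: $A' = \sum_{i,j} c_{ij}\,\gamma_i\otimes\gamma_j$ with $c_{ij} = tr(A'(\gamma_i\otimes\gamma_j)) = tr(G_{A'}(\gamma_i)\gamma_j) = \lambda_i\,tr(\gamma_i\gamma_j) = \lambda_i\delta_{ij}$, using self-adjointness of $G_{A'}$ and orthonormality. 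Finally, $1\geq|\lambda_i|$: since $G_{A'}$ is doubly stochastic and positive, it is a contraction in the operator norm on $\mathcal{M}_k$ — indeed a positive unital map satisfies $\|G_{A'}\|_\infty \leq 1$ (by Russo–Dye, or directly: $G_{A'}(Id) = Id$ forces $\|G_{A'}(X)\|_\infty\le\|X\|_\infty$ for Hermitian $X$), so every eigenvalue has modulus at most $1$. Discarding the indices with $\lambda_i = 0$ gives the stated form $\sum_{i=1}^n\lambda_i\gamma_i\otimes\gamma_i$.

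\textbf{The main obstacle} I anticipate is the transpose/conjugate bookkeeping in the step $A\mapsto A'$: the maps $G_A$ and $F_A$ involve $tr(A_iX)$, not $tr(A_iX^t)$, so conjugating $A$ by $R\otimes R$ (rather than $R\otimes\overline{R}$ or $R^t\otimes\ldots$) interacts with $G_A$ through a transpose on one leg, and one must check that the hypothesis of \lemref{keylemmanormalform} is applied to the correct map (is it $G_A$ or $G_A$ precomposed with transposition?) and that the conclusion $G_{A'}(Id) = Id$ — not $G_{A'}(Id)^t = Id$ or similar — really follows. Since $Id^t = Id$ this particular subtlety is harmless for the unital condition, and self-adjointness of $G_{A'}$ is preserved under any congruence because $A'$ stays Hermitian; so the bookkeeping, while the most error-prone part, does not threaten the result. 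The rest is the spectral theorem plus the contraction bound, both routine.
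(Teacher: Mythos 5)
Your proposal is correct and follows essentially the same route as the paper: verify that $tr(A(vv^*\otimes vv^*))=tr(G_A(vv^*)vv^*)>0$ gives the hypothesis of Lemma~\ref{keylemmanormalform}, apply that lemma to obtain $R$, identify $G_{(R^*\otimes R^*)A(R\otimes R)}(X)=R^*G_A(RXR^*)R$ (your worry about transpose bookkeeping is unfounded, since $tr(R^*A_iRX)=tr(A_iRXR^*)$ directly), and then read off the decomposition from an orthonormal Hermitian eigenbasis of the resulting self-adjoint doubly stochastic positive map, with $|\lambda_i|\le 1$ following from unitality and positivity (the paper cites the spectral-radius theorem for positive unital maps where you give the equivalent direct contraction argument).
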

\begin{proof}
By the definition of $G_A:\mathcal{M}_k\rightarrow \mathcal{M}_k$ (given in the introduction), notice that\begin{center}
 $0<tr(A(vv^*\otimes vv^*))=tr(G_A(vv^*)vv^*)$.
\end{center}

 Hence $v\notin \ker G_A(vv^*)$ for every $v\in\mathbb{C}^k\setminus\{0\}$.

By lemma \ref{keylemmanormalform}, there is an invertible matrix $R\in \mathcal{M}_k$ such that $R^*G_A(RXR^*)R$ is doubly stochastic.

Define $B=(R^*\otimes R^*)A(R\otimes R)$ and notice that $G_B(X)=R^*G_A(RXR^*)R$. Therefore, $G_B$ is a self-adjoint doubly stochastic map, i.e., $G_B(\frac{Id}{\sqrt{k}})=\frac{Id}{\sqrt{k}}$.

Let $\frac{Id}{\sqrt{k}}, \gamma_2,\ldots,\gamma_{k^2}$ be an orthonormal basis of $\mathcal{M}_k$ formed by  Hermitian eigenvectors of the self-adjoint positive map $G_B:\mathcal{M}_k\rightarrow \mathcal{M}_k$ such that 
\begin{itemize}
\item $G_B(\gamma_i)=\lambda_i\gamma_i$, where $|\lambda_i|>0$ for $1\leq i\leq n$
\item $G_B(\gamma_i)=0$, for $i>n$.
\end{itemize}

Since $G_B$ is a positive map satisfying $G_B(Id)=Id$ then its spectral radius is 1 \cite[Theorem 2.3.7]{Bhatia1}. So $|\lambda_i|\leq 1$ for every $i$.\vspace{0.2cm}

Finally, by the definition of $G_B$, $$B=\frac{Id}{\sqrt{k}}\otimes G_B\left(\frac{Id}{\sqrt{k}}\right)+\gamma_2\otimes G_B(\gamma_2)+\ldots+\gamma_{k^2}\otimes G_B(\gamma_{k^2})=\sum_{i=1}^n\lambda_i \gamma_i\otimes \gamma_i.$$
\end{proof}
\vspace{0,5cm}

\begin{corollary}\label{corollaryfilternormalformSPCandINV} Let $\gamma\in \mathcal{M}_k\otimes \mathcal{M}_k$ be a positive semidefinite Hermitian matrix such that $\rank(\gamma_A)=k$. There is an invertible matrix $R\in \mathcal{M}_k$ such that 
\begin{enumerate}
\item $(R^*\otimes R^*)\gamma(R\otimes R)=\sum_{i=1}^n\lambda_i \gamma_i\otimes \gamma_i$, if $\mathcal{R}(\gamma^{\Gamma})$ is positive semidefinite;\vspace{0,2cm}
\item $(R^*\otimes R^t)\gamma(R\otimes \overline{R})=\sum_{i=1}^n\lambda_i \gamma_i\otimes \overline{\gamma_i}$, if $\mathcal{R}(\gamma)$ is positive semidefinite,\vspace{0,2cm}
\end{enumerate}

where
\begin{itemize}
\item[$a)$] $\lambda_1=\frac{1}{k}$ and $\gamma_1=\frac{Id}{\sqrt{k}}$
\item[$b)$] $\frac{1}{k}\geq \lambda_i>0$ and $\gamma_i=\gamma_i^*$ for every $i$,
\item[$c)$] $tr(\gamma_i\gamma_j)=0$ for every $i\neq j$ and $tr(\gamma_i^2)=1$ for every $i$.
\end{itemize}

\end{corollary}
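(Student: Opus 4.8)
The plan is to deduce (1) from Corollary~\ref{corollaryimportant} applied to $\gamma$, and (2) from Lemma~\ref{keylemmanormalform} applied to the map $K_\gamma(X):=\overline{G_\gamma(X)}$ on the space of Hermitian matrices; in both situations the strict positivity $\lambda_i>0$ will then come from the defining inequality of the class, and the value $\lambda_1=\tfrac1k$ from a rescaling of $R$ at the very end. Two observations about $\mathcal R$ drive everything. Since $\mathcal R$ is a complex-linear unitary involution, Lemma~\ref{propertiesofrealignment}(1) gives $\mathcal R(zz^*)=Z\otimes\overline Z$ whenever $z=\sum_\alpha a_\alpha\otimes b_\alpha$ and $Z=\sum_\alpha a_\alpha b_\alpha^t$; a short computation then yields $z^*\mathcal R(\gamma^{\Gamma})z=tr(\gamma(Z^*\otimes Z))$ and $z^*\mathcal R(\gamma)z=tr(\gamma(Z^*\otimes Z^t))$, so that ``$\mathcal R(\gamma^{\Gamma})\ge 0$'' is exactly the condition $tr(\gamma(Z^*\otimes Z))\ge 0$ for all $Z\in\mathcal M_k$, and ``$\mathcal R(\gamma)\ge 0$'' is $tr(\gamma(Z^*\otimes Z^t))\ge 0$ for all $Z$. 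In addition, by Lemma~\ref{propertiesofrealignment}(9) and the Hermiticity of $\gamma$, one checks in one line that $\mathcal R(\gamma^{\Gamma})$ Hermitian forces $\gamma=F\gamma F$ — hence $G_\gamma$ is self-adjoint, since $G_\gamma^*=F_\gamma=G_{F\gamma F}$ — and that $\mathcal R(\gamma)$ Hermitian forces $\gamma=F\overline\gamma F$, which makes $K_\gamma$ a self-adjoint positive map on the Hermitian matrices ($tr(K_\gamma(X)Y)=tr((X\otimes\overline Y)\gamma)$ is real and, thanks to $F\gamma F=\overline\gamma$, symmetric in $X,Y$). Positivity of $G_\gamma$ and $K_\gamma$ is clear from $\gamma\ge 0$.

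The step I expect to be the real obstacle is checking the hypothesis of Corollary~\ref{corollaryimportant} / Lemma~\ref{keylemmanormalform}: no rank-one projection is killed, i.e. $tr(\gamma(vv^*\otimes vv^*))>0$ for every $v\ne 0$ in case (1), and $v^*K_\gamma(vv^*)v=tr(\gamma(vv^*\otimes\overline{vv^*}))>0$ for every $v\ne 0$ in case (2). Suppose it fails for some $v\ne 0$ and put $P:=G_\gamma(vv^*)\ge 0$, so that $Pv=0$ (resp. $P\overline v=0$). If $P=0$, then $tr(\gamma(vv^*\otimes X))=0$ for all $X$, so $v\otimes\mathbb C^k\subseteq\ker\gamma$ and $\gamma_Av=0$, contradicting $\rank(\gamma_A)=k$. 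If $P\ne 0$, substitute $Z=vv^*+\varepsilon M$ (with $\varepsilon\in\mathbb R$ and arbitrary $M\in\mathcal M_k$) into the class inequality from the first paragraph: the constant term is $0$ and the $\varepsilon^2$-coefficient is $\ge 0$, so the $\varepsilon$-coefficient vanishes. Computing it — using $F_\gamma=G_\gamma$ in case (1), and $F_\gamma(\overline{vv^*})=G_{F\gamma F}(\overline{vv^*})=\overline P$ in case (2) — it equals $2\,\Re\,tr(PM)$ (resp. $2\,\Re\,tr(PM^t)$); vanishing for every $M$ and taking $M=P$ (resp. $M=\overline P$) gives $tr(P^2)=0$, i.e. $P=0$, a contradiction. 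Hence the hypothesis holds.

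With this in hand, Corollary~\ref{corollaryimportant} (case 1), respectively Lemma~\ref{keylemmanormalform} applied to $K_\gamma$ followed by diagonalising the resulting doubly stochastic $K_B$ on the Hermitian matrices (case 2), produces an invertible $R\in\mathcal M_k$ and an orthonormal Hermitian eigenbasis $\tfrac{Id}{\sqrt k}=\gamma_1,\gamma_2,\dots$ of $\mathcal M_k$ for which $B:=(R^*\otimes R^*)\gamma(R\otimes R)=\sum_{i=1}^n\lambda_i\gamma_i\otimes\gamma_i$ in case (1), and $B:=(R^*\otimes R^t)\gamma(R\otimes\overline R)=\sum_{i=1}^n\lambda_i\gamma_i\otimes\overline{\gamma_i}$ in case (2), with $\lambda_1=1$, $|\lambda_i|\le 1$, and $\lambda_i\ne 0$; in case (2) one uses $K_B(\gamma_i)=\lambda_i\gamma_i\Rightarrow G_B(\gamma_i)=\lambda_i\overline{\gamma_i}$ together with $B=\sum_i\gamma_i\otimes G_B(\gamma_i)$.

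Finally I would upgrade $\lambda_i\ne 0$ to $\lambda_i>0$. By Lemma~\ref{propertiesofrealignment}(3), $\mathcal R(B^{\Gamma})$ (resp. $\mathcal R(B)$) is a congruence of $\mathcal R(\gamma^{\Gamma})$ (resp. of $\mathcal R(\gamma)$), hence positive semidefinite. On the other hand, applying the involution $\mathcal R$ to $\mathcal R(zz^*)=Z\otimes\overline Z$ shows $\mathcal R(\delta\otimes\overline\delta)=z_\delta z_\delta^*$ for every Hermitian $\delta$, where $z_\delta=\sum_\alpha a_\alpha\otimes b_\alpha$ if $\delta=\sum_\alpha a_\alpha b_\alpha^t$, and $\delta\mapsto z_\delta$ is an isometry, so the vectors $z_{\gamma_1},\dots,z_{\gamma_n}$ are orthonormal. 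Since $\gamma_i^t=\overline{\gamma_i}$, in both cases the positive semidefinite matrix above equals $\sum_{i=1}^n\lambda_i\,z_{\gamma_i}z_{\gamma_i}^*$, whence $\lambda_i=z_{\gamma_i}^*\mathcal R(B^{\Gamma})z_{\gamma_i}\ge 0$ (resp. $=z_{\gamma_i}^*\mathcal R(B)z_{\gamma_i}\ge 0$), so $\lambda_i>0$. Replacing $R$ by $k^{-1/4}R$ multiplies $B$ by $k^{-1}$, turning $\lambda_1=1$ into $\tfrac1k$ and $|\lambda_i|\le 1$ into $0<\lambda_i\le\tfrac1k$; the remaining normalisations $\gamma_1=\tfrac{Id}{\sqrt k}$ and $tr(\gamma_i\gamma_j)=\delta_{ij}$ are those of the chosen orthonormal basis, giving (1) and (2).
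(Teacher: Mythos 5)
Your proof is correct, and while it lands on the same two workhorses as the paper (Corollary~\ref{corollaryimportant} in case (1), and in case (2) effectively Lemma~\ref{keylemmanormalform} applied to $G_{\gamma^{\Gamma}}$, since your conjugate-linear $K_\gamma$ agrees with $G_{\gamma^{\Gamma}}$ on Hermitian matrices and that is the complex-linear map one actually feeds to the lemma), it gets to their hypotheses by a genuinely different route. The paper's proof imports the symmetric Schmidt decomposition $\gamma=\sum a_i B_i\otimes B_i$ with $a_i>0$ from \cite[Corollary 25]{carielloIEEE}, which at once makes $G_\gamma$ self-adjoint with non-negative spectrum, reduces the no-kill hypothesis to $\sum a_i\,tr(B_ivv^*)^2>0$, and yields $\lambda_i>0$ by a spectral/inertia argument. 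You instead work directly from the positivity of $\mathcal R(\gamma^{\Gamma})$ (resp.\ $\mathcal R(\gamma)$): you extract $\gamma=F\gamma F$ (resp.\ $\gamma=F\overline\gamma F$) from Hermiticity via Lemma~\ref{propertiesofrealignment}(9) to get self-adjointness, recast the class condition as the quadratic-form inequality $tr(\gamma(Z^*\otimes Z))\geq 0$ (which is precisely the statement that $G_\gamma$ is a positive semidefinite operator for the Hilbert--Schmidt inner product, i.e.\ the content of the cited Corollary 25 re-derived inline), verify the hypothesis by a discriminant argument on $Z=vv^*+\varepsilon M$, and obtain $\lambda_i>0$ from the identity $\mathcal R(B^{\Gamma})=\sum_i\lambda_i z_{\gamma_i}z_{\gamma_i}^*$ with orthonormal $z_{\gamma_i}$. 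The payoff of your version is self-containedness — it needs only the realignment identities of Lemma~\ref{propertiesofrealignment} rather than the external Schmidt-decomposition result — at the cost of longer verifications; the paper's version is shorter because the imported decomposition does all three jobs at once. All of your intermediate identities check out, so this is a valid alternative proof.
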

\begin{proof} $(1)$
 If $\gamma$ is a state such that $\mathcal{R}(\gamma^{\Gamma})$ is positive semidefinite then, by  \cite[corollary 25]{carielloIEEE}, $\gamma$ can be written as 
$\gamma=\sum_{i=1}^n a_iB_i\otimes B_i$,
where $a_i>0$, $B_i=B_i^*$ and $tr(B_i^2)=1$ for every i, and $tr(B_iB_j)=0$ for $i\neq j$. \vspace{0,3cm}

Hence $G_{\gamma}(X)=\sum_{i=1}^na_iB_itr(B_iX)$ is a self-adjoint  map with positive eigenvalues $a_1,\ldots,a_n$ and possibly some null eigenvalues. In addition, since $\gamma$ is positive semidefinite, $G_{\gamma}(X)$ is a positive map.

Now, let $v\in\mathbb{C}^k$ be such that \begin{center}
$0=tr(\gamma(vv^*\otimes vv^*))=\sum_{i=1}^na_itr(B_ivv^*)^2$. 
\end{center}

Since $a_i>0$  and $tr(B_ivv^*)\in\mathbb{R}$ for every $i$, $tr(B_ivv^*)=0$ for every $i$. Therefore, \begin{center}
$tr(\gamma_Avv^*)=\sum_{i=1}^n a_itr(B_i)tr(B_ivv^*)=0$.
\end{center}

By hypothesis $\gamma_A$ is positive definite, hence $v=0$. 

So, by corollary \ref{corollaryimportant}, there is a invertible matrix $R$ such that 
\begin{equation}\label{eqSPC}
(R^*\otimes R^*)\gamma(R\otimes R)=\sum_{i=1}^n\lambda_i \gamma_i\otimes \gamma_i
\end{equation}

satisfies the four conditions of that corollary. It remains to show that $\lambda_i>0$ and then we multiply equation (\ref{eqSPC}) by $\frac{1}{k}$  to obtain our desired result.\vspace{0,2cm}

Finally, since $G_{\gamma}$ has only non-negative eigenvalues and $\lambda_1,\ldots,\lambda_n$ are non-null eigenvalues of $R^*G_{\gamma}(RXR^*)R$ (as seen in the proof of of corollary \ref{corollaryimportant}), $\lambda_1,\ldots,\lambda_n$ are positive.\\\\
$(2)$  If $\gamma$ is a state such that $\mathcal{R}(\gamma)$ is positive semidefinite then, by  \cite[Corollary 25]{carielloIEEE}, $\gamma$ can be written as
$\gamma=\sum_{i=1}^n a_iB_i\otimes \overline{B_i}$,
where $a_i>0$, $B_i=B_i^*$  and $tr(B_i^2)=1$ for every i, and $tr(B_iB_j)=0$ for $i\neq j$. \vspace{0,3cm}

Consider $\gamma^{\Gamma}=\sum_{i=1}^n a_iB_i\otimes B_i$  and notice that $G_{\gamma^{\Gamma}}(X)=G_{\gamma}(X)^t$ is also a positive map. Now repeat the proof of item $(1)$ for $\gamma^{\Gamma}$. Hence there is a invertible matrix $R$ such that \begin{equation}\label{eqInvReal}
(R^*\otimes R^*)\gamma^{\Gamma}(R\otimes R)=\sum_{i=1}^n\lambda_i \gamma_i\otimes \gamma_i,
\end{equation}
where $\gamma_i$ and $\lambda_i$ satisfy all the required conditions. Finally, 

$$(R^*\otimes R^t)\gamma(R\otimes \overline{R})=\sum_{i=1}^n\lambda_i \gamma_i\otimes \overline{\gamma_i}.$$
\end{proof}

\begin{corollary}\label{corollaryleftfilter} Let $\gamma\in \mathcal{M}_k\otimes \mathcal{M}_k$ be a state. There is an invertible matrix $R\in \mathcal{M}_k$ such that $(R^*\otimes Id)\gamma(R\otimes Id)=\sum_{i=1}^na_i \gamma_i\otimes \delta_i,$ where
\begin{itemize}
\item[$a)$]  $a_1\geq a_i>0$, for every $1\leq i\leq n$, and $\gamma_1=\frac{Id}{\sqrt{k}}$, 
\item[$b)$]  $\gamma_i=\gamma_i^*$, $\delta_i=\delta_i^*$  for every $i$,
\item[$c)$] $tr(\gamma_i\gamma_j)=tr(\delta_i\delta_j)=0$ for every $i\neq j$ and $tr(\gamma_i^2)=tr(\delta_i^2)=1$.
\end{itemize}
\end{corollary}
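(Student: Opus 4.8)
The plan is to follow the pattern of the proof of Corollary~\ref{corollaryimportant}, but applied to the self-adjoint positive map $T_\gamma:=F_\gamma\circ G_\gamma:\mathcal{M}_k\to\mathcal{M}_k$ in place of $G_\gamma$, since for a general state $\gamma$ the map $G_\gamma$ need not be self-adjoint. For $\gamma$ Hermitian, $F_\gamma$ is the adjoint of $G_\gamma$ with respect to the trace inner product (Section~2), so $T_\gamma=G_\gamma^*G_\gamma\ge 0$ is self-adjoint; it is a positive map, being a composition of the positive maps $G_\gamma$ and $F_\gamma$ (positive because $\gamma\ge 0$), and it leaves the Hermitian matrices invariant. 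The point of using $T_\gamma$ is that if $B=\sum_{i=1}^n a_i\gamma_i\otimes\delta_i$ is of the required shape (with orthonormal $\gamma_i$ and $\delta_i$), then $F_B(G_B(X))=\sum_{i=1}^n a_i^2\,tr(\gamma_iX)\gamma_i$, i.e. the left Schmidt factors $\gamma_i$ are exactly the eigenvectors of $F_B\circ G_B$ attached to the nonzero eigenvalues $a_i^2$. So the whole problem is to conjugate $T_\gamma$ by an invertible $R$ until the resulting map is doubly stochastic with $\frac{Id}{\sqrt{k}}$ as its Perron eigenvector.

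First I would verify the hypothesis of Lemma~\ref{keylemmanormalform} for $T_\gamma$: for $v\ne 0$ we have $tr(T_\gamma(vv^*)vv^*)=\|G_\gamma(vv^*)\|_2^2=\|(v^*\otimes Id)\gamma(v\otimes Id)\|_2^2$, and since $(v^*\otimes Id)\gamma(v\otimes Id)\ge 0$ has trace $v^*\gamma_Av$, this quantity is nonzero exactly when $v^*\gamma_Av\ne 0$. Hence the condition $v\notin\ker T_\gamma(vv^*)$ for every $v\ne 0$ holds iff $\gamma_A$ is positive definite, so I will assume $\rank(\gamma_A)=k$; this is unavoidable, since if $\gamma$ is a pure product state then $\gamma_A$ has rank $1$ and the left Schmidt factor of $(R^*\otimes Id)\gamma(R\otimes Id)$ is a rank-one matrix for every invertible $R$, never $\frac{Id}{\sqrt{k}}$.

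By Lemma~\ref{keylemmanormalform} there is an invertible $R\in\mathcal{M}_k$ such that $R^*T_\gamma(R(\cdot)R^*)R$ is doubly stochastic. Set $B=(R^*\otimes Id)\gamma(R\otimes Id)$; since $G_B(X)=G_\gamma(RXR^*)$ and $F_B(Y)=R^*F_\gamma(Y)R$, the map $F_B\circ G_B$ equals $R^*T_\gamma(R(\cdot)R^*)R$, which is doubly stochastic and self-adjoint, so $F_B(G_B(Id))=Id$; being a positive map fixing $Id$, it has spectral radius $1$ by \cite[Theorem 2.3.7]{Bhatia1}. Diagonalize $F_B\circ G_B$ by an orthonormal basis of Hermitian eigenvectors $\gamma_1=\frac{Id}{\sqrt{k}},\gamma_2,\dots,\gamma_{k^2}$ with $F_B(G_B(\gamma_i))=a_i^2\gamma_i$, reindexed so that $a_i>0$ for $i\le n$ and $a_i=0$ otherwise; then $a_1=1\ge a_i$ for all $i$. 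Exactly as in Corollary~\ref{corollaryimportant}, $B=\sum_{i=1}^{k^2}\gamma_i\otimes G_B(\gamma_i)$, and for $i\le n$ one has $\|G_B(\gamma_i)\|_2^2=\langle F_B(G_B(\gamma_i)),\gamma_i\rangle=a_i^2$ and $\langle G_B(\gamma_i),G_B(\gamma_j)\rangle=\langle F_B(G_B(\gamma_i)),\gamma_j\rangle=a_i^2\delta_{ij}$, so $\delta_i:=G_B(\gamma_i)/a_i$ are orthonormal Hermitian matrices ($G_B$ preserves Hermiticity), while $G_B(\gamma_i)=0$ for $i>n$; therefore $B=\sum_{i=1}^n a_i\gamma_i\otimes\delta_i$ with $a_1=1\ge a_i>0$, $\gamma_1=\frac{Id}{\sqrt{k}}$, the $\gamma_i,\delta_i$ Hermitian and $tr(\gamma_i\gamma_j)=tr(\delta_i\delta_j)=\delta_{ij}$, as claimed. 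The only genuine obstacle is the initial move — recognizing that $F_\gamma\circ G_\gamma$, not $G_\gamma$, is the self-adjoint positive map to feed into Lemma~\ref{keylemmanormalform}, and that its spectral data encode the left Schmidt decomposition — together with the observation that the non-degeneracy hypothesis of that lemma is precisely $\gamma_A\succ 0$; the rest is the diagonalization already carried out in Corollary~\ref{corollaryimportant}.
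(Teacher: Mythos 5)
Your proof is correct, and it reaches the conclusion by a more direct route than the paper. The paper does not feed $F_\gamma\circ G_\gamma$ into Lemma \ref{keylemmanormalform} directly: it first forms the product $\gamma*F\overline{\gamma}F$, observes via items (8)--(9) of Lemma \ref{propertiesofrealignment} that $\mathcal{R}(\gamma*F\overline{\gamma}F)=\mathcal{R}(\gamma)\mathcal{R}(\gamma)^*\geq 0$, and then invokes item (2) of Corollary \ref{corollaryfilternormalformSPCandINV} on $\gamma*F\overline{\gamma}F$ to produce the filtering matrix $R$; the identity $G_{\delta*F\delta^tF}(X)=F_\delta(G_\delta(X))^t$ from Remark \ref{remarkproduct}$(c)$ then transfers the normal form of the product back to the statement that $F_\delta\circ G_\delta$ fixes the identity, where $\delta=(R^*\otimes Id)\gamma(R\otimes Id)$. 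From that point on the two arguments coincide: the top eigenvalue is the spectral radius by \cite[Theorem 2.3.7]{Bhatia1}, one diagonalizes $F_\delta\circ G_\delta$ in a Hermitian orthonormal eigenbasis, writes $\delta=\sum_i\delta_i\otimes G_\delta(\delta_i)$, and normalizes the right factors exactly as you do. Your shortcut --- observing that $F_\gamma\circ G_\gamma$ is itself a self-adjoint positive map satisfying the nondegeneracy hypothesis of Lemma \ref{keylemmanormalform} precisely when $\gamma_A$ is positive definite --- removes the detour through the $*$-product and through Corollary \ref{corollaryfilternormalformSPCandINV}. What the paper's route buys is coherence with the rest of Section 4 (it reuses the normal form already established for the realignment-positive class) and the identity $\mathcal{R}(\gamma*F\overline{\gamma}F)=\mathcal{R}(\gamma)\mathcal{R}(\gamma)^*$, which is needed again in the final PPT theorem in any case.

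One point you raise deserves emphasis: the hypothesis $\rank(\gamma_A)=k$ that you add is genuinely necessary and is absent from the statement. Your pure-product-state counterexample is valid, and the paper's own proof silently needs the same hypothesis, since $(\gamma*F\overline{\gamma}F)_A=F_\gamma(\gamma_B)$ need not have rank $k$ as Corollary \ref{corollaryfilternormalformSPCandINV} requires (for $\gamma=vv^*\otimes ww^*$ it has rank one). In the only place the corollary is applied --- the state $\gamma_1$ in the final theorem, which satisfies $(\gamma_1)_B=\frac{1}{k}Id$ and $\rank((\gamma_1)_A)=k$ --- the missing hypothesis holds, so the downstream results are unaffected.
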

\begin{proof}
First, since $\gamma$ is a state, so is $F\overline{\gamma}F$. Therefore, $\gamma*F\overline{\gamma}F$ is positive semidefinite by item $a)$ of remark \ref{remarkproduct}.

Now, by items $(8)$ and $(9)$ of lemma \ref{propertiesofrealignment},
$$\mathcal{R}(\gamma*F\overline{\gamma}F)=\mathcal{R}(\gamma)\mathcal{R}(\gamma)^*.$$

By  item $(2)$ of corollary \ref{corollaryfilternormalformSPCandINV}, there is an invertible matrix $R\in\mathcal{M}_k$ such that

\begin{center}
 $(R^*\otimes R^t)(\gamma*F\overline{\gamma}F)(R \otimes \overline{R})=\sum_{i=1}^n\lambda_i\gamma_i\otimes \gamma_i$,
\end{center}

where
\begin{itemize}
\item[$a)$] $\lambda_1=\frac{1}{k}$ and $\gamma_1=\frac{Id}{\sqrt{k}}$
\item[$b)$] $\frac{1}{k}\geq \lambda_i>0$ and $\gamma_i=\gamma_i^*$ for every $i$,
\item[$c)$] $tr(\gamma_i\gamma_j)=0$ for every $i\neq j$ and $tr(\gamma_i^2)=1$ for every $i$.
\end{itemize}

\vspace{0,3cm}

Define $\delta=(R^*\otimes Id)\gamma(R\otimes Id)$ and notice that $$\delta*F\delta^tF=\delta*F\overline{\delta}F=(R^*\otimes R^t)(\gamma*F\overline{\gamma}F)(R \otimes \overline{R}).$$

Thus, $G_{\delta*F\delta^tF}(\frac{Id}{k})=\lambda_1\frac{Id}{k}$.

\vspace{0,3cm}

By item $c)$ of remark \ref{remarkproduct}, $F_{\delta}(G_{\delta}(\frac{Id}{\sqrt{k}}))=G_{\delta*F\delta^tF}(\frac{Id}{\sqrt{k}})^t=\lambda_1\frac{Id}{\sqrt{k}}$.  So $F_{\delta}(G_{\delta}(Id))=\lambda_1Id$.

\vspace{0,3cm}

By \cite[Theorem 2.3.7]{Bhatia1}, $\lambda_1$ is the largest eigenvalue of the positive map $F_{\delta}\circ G_{\delta}$. So $\sqrt{\lambda_1}$ is the largest singular value of $G_{\delta}$ and $F_{\delta}$, since they are adjoints.

\vspace{0,3cm}

Next, let $\delta_1=\frac{Id}{\sqrt{k}},\delta_2,\ldots,\delta_{k^2}$ be an orthonormal basis of $\mathcal{M}_k$ formed by Hermitian eigenvectors of $F_{\delta}\circ G_{\delta}:\mathcal{M}_k\rightarrow \mathcal{M}_k$. 

\vspace{0,3cm}

Notice that $(R^*\otimes Id)\gamma(R\otimes Id)=\delta=\sum_{i=1}^{k^2}\delta_i\otimes G_{\delta}(\delta_i)$.

\vspace{0,3cm}

If $G_{\delta}(\delta_i)\neq 0_{k\times k}$, for $1\leq i\leq n$, then define $a_i=\|G_{\delta}(\delta_i)\|_2>0$. Thus, $\delta=\sum_{i=1}^{n}a_i\delta_i\otimes \frac{1}{a_i}G_{\delta}(\delta_i)$. 

\vspace{0,3cm}

Notice that $G_{\delta}(\delta_i)^*=G_{\delta^*}(\delta_i^*)=G_{\delta}(\delta_i)$, since $\delta$ and $\delta_i$ are Hermitian matrices. Moreover, by the definition of $a_i$, \begin{center}
$tr(\frac{1}{a_i}G_{\delta}(\delta_i)\frac{1}{a_i}G_{\delta}(\delta_i))=1,\ \ \ 1\leq i\leq n$. 
\end{center}

In addition, since $F_{\delta}(G_{\delta}(\delta_j)))$ is a multiple of $\delta_j$, $\delta_i$ is orthogonal to $F_{\delta}(G_{\delta}(\delta_j)))$ for $i\neq j$,  \begin{center}
$tr(\frac{1}{a_i}G_{\delta}(\delta_i)\frac{1}{a_j}G_{\delta}(\delta_j))=\frac{1}{a_ia_j}tr(\delta_i F_{\delta}(G_{\delta}(\delta_j)))=0$.
\end{center}

Finally,  $a_1^2=tr(G_{\delta}(\delta_1)^2)=tr(\delta_1F_{\delta}(G_{\delta}(\delta_1)))=\lambda_1tr(\delta_1^2)=\lambda_1$, so $a_1=\sqrt{\lambda_1}$. Notice also that, for every $i$,
$a_i\leq$ the largest singular value of $G_{\delta}$, which is $ \sqrt{\lambda_1}=a_1$.

\end{proof}

\vspace{0,5cm}

\section{Lower bound for the rank of the special triad}

\vspace{0,5cm}

In this section, we prove that $\rank(\gamma)\geq k$, whenever a state $\gamma$ is PPT or  SPC or invariant under realignment and $ \rank(\gamma_A)=\rank(\gamma_B)=k$. Then we show that if $\rank(\gamma)= k$ then $\gamma$ is separable in each of these cases. 

\vspace{0,2cm}

We start by proving  the SPC and the invariant under realignment cases. In their  proofs we use the result that guarantees their separability whenever their Schmidt coefficients are equal, which follows from the complete reducibility property  \cite[Proposition 15]{carielloIEEE}. 

\vspace{0,2cm}

\subsection*{First Cases:} SPC states and  invariant under realignment states

\vspace{0,3cm}

Before proving the inequality, notice that by the symmetry of their Schmidt decompositions \cite[Corollary 25]{carielloIEEE}, $\gamma_B=\gamma_A$ or $\gamma_B=\overline{\gamma_A}$,
 when $\gamma$ is SPC  or invariant under realignment, respectively. Hence $\rank(\gamma_A)=\rank(\gamma_B)$. In addition, we can assume without loss of generality that $\rank(\gamma_A)=k$, otherwise we would be able to embed $\gamma$ in 
$\mathcal{M}_s\otimes \mathcal{M}_s$, where $s=\rank(\gamma_A)$, and obtain the same result.

\begin{theorem}
If $\gamma\in \mathcal{M}_k\otimes \mathcal{M}_k$ is a SPC state such that $\rank(\gamma_A)=k$ then $\rank(\gamma)\geq k$.  In addition, if the equality holds then $\gamma$ is separable. 
\end{theorem}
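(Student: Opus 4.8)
The plan is to pass to the filter normal form of $\gamma$ and then read off both assertions from the spectral‑radius estimate of Theorem~\ref{specialspectralradius}.

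First I would normalise. Since $\gamma$ is SPC and $\rank(\gamma_A)=k$, Corollary~\ref{corollaryfilternormalformSPCandINV}(1) yields an invertible $R\in\mathcal{M}_k$ with
\[
\delta:=(R^*\otimes R^*)\gamma(R\otimes R)=\sum_{i=1}^{n}\lambda_i\,\gamma_i\otimes\gamma_i ,
\]
where $\gamma_1=\tfrac{Id}{\sqrt{k}}$, $\lambda_1=\tfrac1k$, $0<\lambda_i\leq\tfrac1k$, every $\gamma_i$ is Hermitian, and $tr(\gamma_i\gamma_j)=\delta_{ij}$. By \cite[Corollary 25]{carielloIEEE} this $\delta$ is again a SPC state; since $R\otimes R$ is invertible, the passage $\gamma\mapsto\delta$ preserves rank and separability, so it suffices to prove the theorem for $\delta$. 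Moreover $\delta_A=\delta_B=\tfrac{Id}{k}$ (rank $k$, because $tr(\gamma_i)=\sqrt{k}\,tr(\gamma_i\gamma_1)=0$ for $i\ge 2$) and $tr(\delta)=\lambda_1\,tr(\gamma_1)^2=1$.

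For the inequality I would apply Theorem~\ref{specialspectralradius} to $\delta$, which gives $\|\delta\|_{\infty}\leq\|\delta_A\|_{\infty}=\tfrac1k$. Hence the $\rank(\delta)$ nonzero eigenvalues of $\delta$ all lie in $(0,\tfrac1k]$ and sum to $tr(\delta)=1$, forcing $\rank(\delta)\geq k$, i.e.\ $\rank(\gamma)\geq k$. For the equality case, assume $\rank(\gamma)=\rank(\delta)=k$. Then the $k$ nonzero eigenvalues of $\delta$ lie in $(0,\tfrac1k]$ and sum to $1$, so each equals $\tfrac1k$; hence $\delta=\tfrac1k P$ for a rank‑$k$ orthogonal projection $P$, and $\sum_{i=1}^{n}\lambda_i^{2}=tr(\delta^{2})=\tfrac1{k^{2}}tr(P)=\tfrac1k$. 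On the other hand, writing $F$ for the flip operator and using $tr\bigl((\gamma_i\otimes\gamma_i)F\bigr)=tr(\gamma_i^{2})=1$,
\[
\sum_{i=1}^{n}\lambda_i=\sum_{i=1}^{n}\lambda_i\,tr\bigl((\gamma_i\otimes\gamma_i)F\bigr)=tr(\delta F)\leq tr(\delta)=1 ,
\]
since $\delta\geq 0$ and $Id\geq F$ (the flip has eigenvalues $\pm1$). As $0<\lambda_i\leq\tfrac1k$ gives $\lambda_i^{2}\leq\tfrac1k\lambda_i$, we obtain $\tfrac1k=\sum\lambda_i^{2}\leq\tfrac1k\sum\lambda_i\leq\tfrac1k$, so every inequality is an equality, which forces $\lambda_i=\tfrac1k$ for all $i$. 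Therefore all Schmidt coefficients of $\delta$ are equal, i.e.\ the nonzero eigenvalues of $F_{\delta}\circ G_{\delta}$ all equal $\tfrac1{k^{2}}$, and \cite[Proposition 15]{carielloIEEE} yields that $\delta$ --- hence $\gamma$ --- is separable.

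The individual computations are short; I expect the real points to be (i) recognising that Theorem~\ref{specialspectralradius}, an a priori statement about spectral radii, is exactly what bounds the rank from below once the reduced states have been flattened to $\tfrac{Id}{k}$ by the filter normal form, and (ii) the estimate $\sum_i\lambda_i=tr(\delta F)\leq 1$, which uses the Schmidt‑symmetric shape $\delta=\sum_i\lambda_i\gamma_i\otimes\gamma_i$ of SPC states together with $Id\geq F$ --- ordinary positivity of $\delta$ only delivers $\lambda_1\,tr(\gamma_1)^2=1$, not $\sum_i\lambda_i=1$. Once these are in place, the constraints $\lambda_1=\tfrac1k$, $\lambda_i\leq\tfrac1k$, $\sum_i\lambda_i\leq 1$, $\sum_i\lambda_i^{2}=\tfrac1k$ have the unique solution $\lambda_i\equiv\tfrac1k$, landing one precisely on the hypothesis of the separability criterion \cite[Proposition 15]{carielloIEEE} (which is itself a manifestation of the complete reducibility property). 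Checking that the filter normal form of a SPC state is again SPC with maximally mixed reductions is exactly the role of Corollary~\ref{corollaryfilternormalformSPCandINV}, and this is what makes the clean bound $\|\delta\|_{\infty}\leq\tfrac1k$ available.
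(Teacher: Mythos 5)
Your proof is correct, and its first half (normal form via Corollary~\ref{corollaryfilternormalformSPCandINV}, then $\|\delta\|_{\infty}\le\|\delta_A\|_{\infty}=\tfrac1k$ from Theorem~\ref{specialspectralradius}, hence $\rank(\delta)\ge k$) is exactly the paper's argument. Where you genuinely diverge is the equality case. The paper also reduces to showing that all Schmidt coefficients of $\delta$ are equal and then invokes \cite[Proposition 15]{carielloIEEE}, but it gets there through the contraction-group machinery: it writes $\tfrac1k=tr(\delta^2)=tr\bigl(\mathcal{R}(\delta^{\Gamma})\mathcal{R}(\delta^{\Gamma})^*\bigr)\le\|\mathcal{R}(\delta^{\Gamma})\|_{\infty}\,\|\mathcal{R}(\delta^{\Gamma})\|_{1}$, computes $\|\mathcal{R}(\delta^{\Gamma})\|_{\infty}=\lambda_1=\tfrac1k$ via Lemma~\ref{lemmaoperatornormrealignment} and items (5), (7) of Lemma~\ref{propertiesofrealignment}, and bounds $\|\mathcal{R}(\delta^{\Gamma})\|_{1}\le\|\delta F\|_1=1$ using the positivity of $\mathcal{R}(\delta^{\Gamma})$ (the SPC property); equality in the trace-norm inequality then forces the spectrum of $\mathcal{R}(\delta^{\Gamma})$, i.e.\ the Schmidt coefficients, to be flat. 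You obtain the same two numerical ingredients --- $\lambda_i\le\tfrac1k$ and $\sum_i\lambda_i\le1$ --- by reading $\lambda_i\le\tfrac1k$ directly off item $b)$ of Corollary~\ref{corollaryfilternormalformSPCandINV} and by the elementary observation $\sum_i\lambda_i=tr(\delta F)\le tr(\delta)=1$ (valid since $\delta\ge0$ and $Id-F\ge0$), after which the algebra $\tfrac1k=\sum_i\lambda_i^2\le\tfrac1k\sum_i\lambda_i\le\tfrac1k$ closes the argument. The two routes encode the same inequality $\sum_i\lambda_i^2\le\lambda_{\max}\sum_i\lambda_i$, but yours is more self-contained: it needs no realignment-norm computations and uses the SPC hypothesis only through the shape $\sum_i\lambda_i\,\gamma_i\otimes\gamma_i$ of the normal form, whereas the paper's version makes the role of the linear-contraction identities (and of the positivity of $\mathcal{R}(\delta^{\Gamma})$) explicit, which is the theme it wants to emphasize. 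Both are complete proofs.
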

\begin{proof} 
By corollary \ref{corollaryfilternormalformSPCandINV}, there is a invertible matrix $R$ such that   \begin{center}
$\delta=(R^*\otimes R^*)\gamma(R\otimes R)=\sum_{i=1}^n\lambda_i \gamma_i\otimes \gamma_i$,
\end{center} 

where $\lambda_1=\dfrac{1}{k}$, $\gamma_1=\dfrac{Id}{\sqrt{k}}$, $tr(\gamma_i\gamma_1)=\dfrac{tr(\gamma_i)}{\sqrt{k}}=0$ for $i>1$. So $\displaystyle \delta_A=\sum_{\i=1}^n\lambda_i\gamma_itr(\gamma_i)=\frac{Id}{k}$.

Now, notice that $\delta$ still is SPC by \cite[Corollary 25]{carielloIEEE}. Hence $\|\delta\|_{\infty}\leq \|\delta_A\|_{\infty}=\frac{1}{k}$, by theorem \ref{specialspectralradius}.

So $\dfrac{1}{k}\geq \|\delta\|_{\infty}\geq \dfrac{tr(\delta)}{\rank(\delta)}=\dfrac{1}{\rank(\delta)}$. Hence $ \rank(\delta)\geq k$.   \vspace{0,2cm}

Since $R$ is invertible, $\rank(\gamma)=\rank(\delta)\geq k$. \vspace{0,3cm}

For the next part, assume $\rank(\gamma)=k$. Then $\rank(\delta)=k$. Therefore, \begin{center}
$1=tr(\delta)\leq \|\delta\|_{\infty}\rank(\delta)=\dfrac{1}{k}.k=1$.
\end{center}

Since the equality  $tr(\delta)=\|\delta\|_{\infty}\rank(\delta)$  holds,   the non-null eigenvalues of $\delta$ are equal to $\|\delta\|_{\infty}$. So $tr(\delta)=k\|\delta\|_{\infty}=1$. Hence $\|\delta\|_{\infty}=\frac{1}{k}$ and $tr(\delta^2)=\frac{1}{k}$.
\vspace{0,3cm}

Next, since the linear contraction - $R((\cdot)^{\Gamma})$ - preserves the Frobenius norm of $\delta$, 
 
\begin{equation}\label{eqeigenvalueequal}
\dfrac{1}{k}=tr(\delta^2)=tr(\mathcal{R}(\delta^{\Gamma})\mathcal{R}(\delta^{\Gamma})^*)\leq \|\mathcal{R}(\delta^{\Gamma})\|_{\infty} \|\mathcal{R}(\delta^{\Gamma})\|_{1}.
\end{equation}

\vspace{0,3cm}

By item (5) of lemma \ref{propertiesofrealignment}, $\mathcal{R}(\delta^{\Gamma})=\mathcal{R}(\delta)F$. Since $F$ is an a isometry, $\|\mathcal{R}(\delta)F\|_{\infty}=\|\mathcal{R}(\delta)\|_{\infty}$. 

Therefore,$$\|\mathcal{R}(\delta^{\Gamma})\|_{\infty}=\|\mathcal{R}(\delta)F\|_{\infty}=\|\mathcal{R}(\delta)\|_{\infty}=\lambda_1=\dfrac{1}{k}.$$

Now, since $\delta$ is SPC, by its definition, $\mathcal{R}(\delta^{\Gamma})$ is positive semidefinite. Hence $$\|\mathcal{R}(\delta^{\Gamma})\|_{1}\leq \|\mathcal{R}(\delta^{\Gamma})^{\Gamma}\|_{1}.$$

\vspace{0,3cm}

By item (7) of lemma \ref{propertiesofrealignment}, $\mathcal{R}(\delta^{\Gamma})^{\Gamma}=\delta F$. Thus, $\|\mathcal{R}(\delta^{\Gamma})\|_{1}\leq \|\mathcal{R}(\delta^{\Gamma})^{\Gamma}\|_{1}=\|\delta F\|_{1}=\|\delta\|_{1}=1.$
\vspace{0,3cm}

Using these pieces of information in equation (\ref{eqeigenvalueequal}) we obtain 
$$\dfrac{1}{k}=tr(\mathcal{R}(\delta^{\Gamma})\mathcal{R}(\delta^{\Gamma})^*)\leq \|\mathcal{R}(\delta^{\Gamma})\|_{\infty} \|\mathcal{R}(\delta^{\Gamma})\|_{1}\leq  \dfrac{1}{k}.1.$$

Again, $tr(\mathcal{R}(\delta^{\Gamma})^2)= \|\mathcal{R}(\delta^{\Gamma})\|_{\infty} \|\mathcal{R}(\delta^{\Gamma})\|_{1}$ only holds if the non-null eigenvalues of the positive semidefinite hermitian matrix $\mathcal{R}(\delta^{\Gamma})$ are equal to $\|\mathcal{R}(\delta^{\Gamma})\|_{\infty}=\lambda_1=\dfrac{1}{k}$. 

Finally, the non-null eigenvalues of $\mathcal{R}(\delta^{\Gamma})$  are the non-null Schmidt coefficients of $\delta$. Since $\delta$ is SPC and its non-null Schmidt coefficients are equal, $\delta$ is separable by  \cite[Proposition 15]{carielloIEEE}. Since $R$ is invertible, $\gamma$ is separable too.
\end{proof}
\vspace{0,5cm}

The invariant under realignment counterpart is proved next in a similar way with  minor modifications. 

\vspace{0,5cm}

\begin{theorem}
If $\gamma\in \mathcal{M}_k\otimes \mathcal{M}_k$ is an   invariant under realignment  state such that $\rank(\gamma_A)=k$ then $\rank(\gamma)\geq k$.  In addition, if the equality holds then $\gamma$ is separable. 
\end{theorem}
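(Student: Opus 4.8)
The plan is to mirror the SPC proof almost verbatim, substituting the appropriate conjugations where the symmetry of the Schmidt decomposition of an invariant under realignment state differs from that of an SPC state. First I would invoke Corollary \ref{corollaryfilternormalformSPCandINV}(2) to obtain an invertible $R\in\mathcal{M}_k$ with
$$\delta=(R^*\otimes R^t)\gamma(R\otimes \overline{R})=\sum_{i=1}^n\lambda_i\gamma_i\otimes\overline{\gamma_i},$$
where $\lambda_1=\tfrac1k$, $\gamma_1=\tfrac{Id}{\sqrt k}$, the $\gamma_i$ are orthonormal Hermitian, and $\tfrac1k\ge\lambda_i>0$. As in the SPC case, $tr(\gamma_i)=\sqrt{k}\,tr(\gamma_i\gamma_1)=0$ for $i>1$, so $\delta_A=\sum_i\lambda_i\gamma_i\,tr(\overline{\gamma_i})=\lambda_1\gamma_1\,tr(\gamma_1)=\tfrac{Id}{k}$. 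Also $\delta$ remains invariant under realignment (the transformation $\gamma\mapsto(R^*\otimes R^t)\gamma(R\otimes\overline{R})$ preserves the class, by \cite[Corollary 25]{carielloIEEE} or directly via item (3) of Lemma \ref{propertiesofrealignment}), so Theorem \ref{specialspectralradius} gives $\|\delta\|_\infty\le\|\delta_A\|_\infty=\tfrac1k$. Then $\tfrac1k\ge\|\delta\|_\infty\ge \tfrac{tr(\delta)}{\rank(\delta)}=\tfrac1{\rank(\delta)}$, hence $\rank(\gamma)=\rank(\delta)\ge k$.

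For the equality case, assume $\rank(\gamma)=k$, so $\rank(\delta)=k$. From $1=tr(\delta)\le\|\delta\|_\infty\rank(\delta)=\tfrac1k\cdot k=1$ we get equality, which forces all non-null eigenvalues of $\delta$ to equal $\|\delta\|_\infty=\tfrac1k$; there are exactly $k$ of them, so $tr(\delta^2)=k\cdot\tfrac1{k^2}=\tfrac1k$. Now I would run the Frobenius-norm argument: since $R((\cdot)^\Gamma)$ is an isometry,
$$\frac1k=tr(\delta^2)=tr\bigl(\mathcal{R}(\delta^\Gamma)\mathcal{R}(\delta^\Gamma)^*\bigr)\le\|\mathcal{R}(\delta^\Gamma)\|_\infty\,\|\mathcal{R}(\delta^\Gamma)\|_1.$$
By item (5) of Lemma \ref{propertiesofrealignment} and the fact that $F$ is an isometry, $\|\mathcal{R}(\delta^\Gamma)\|_\infty=\|\mathcal{R}(\delta)F\|_\infty=\|\mathcal{R}(\delta)\|_\infty$. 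The one genuine divergence from the SPC proof is here: for an invariant under realignment state $\mathcal{R}(\delta)=\delta$, so $\|\mathcal{R}(\delta)\|_\infty=\|\delta\|_\infty=\tfrac1k$ directly — I do not even need the eigenvalue $\lambda_1$ identity, just the established $\|\delta\|_\infty=\tfrac1k$.

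For the trace-norm factor I would again use that $\mathcal{R}(\delta^\Gamma)$ is positive semidefinite (it equals $\mathcal{R}(\delta)F=\delta F$ up to the relations in Lemma \ref{propertiesofrealignment}; more carefully, $\mathcal{R}(\delta^\Gamma)^\Gamma=\delta F$ by item (7), and since $\mathcal{R}(\delta)=\delta$ is PSD and $F$ is a symmetric isometry one checks $\mathcal{R}(\delta^\Gamma)$ is PSD), so $\|\mathcal{R}(\delta^\Gamma)\|_1\le\|\mathcal{R}(\delta^\Gamma)^\Gamma\|_1=\|\delta F\|_1=\|\delta\|_1=1$. Plugging into the displayed inequality gives equality throughout, so the non-null eigenvalues of the PSD matrix $\mathcal{R}(\delta^\Gamma)$ — which are the non-null Schmidt coefficients of $\delta$ — are all equal. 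Since $\delta$ is invariant under realignment with equal Schmidt coefficients, \cite[Proposition 15]{carielloIEEE} yields that $\delta$ is separable, and invertibility of $R$ transfers separability back to $\gamma$. The main obstacle — and it is a minor bookkeeping point rather than a real difficulty — is verifying that $\mathcal{R}(\delta^\Gamma)$ is positive semidefinite in the realignment-invariant case and that the filter-normal-form transformation preserves realignment invariance; both follow from the relations collected in Lemma \ref{propertiesofrealignment} together with the characterization in \cite[Corollary 25]{carielloIEEE}.
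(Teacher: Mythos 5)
Your first half (the rank inequality) matches the paper's proof step for step. The problem is in the equality case, where you transplant the SPC argument and rest it on the claim that $\mathcal{R}(\delta^{\Gamma})$ is positive semidefinite for an invariant under realignment state --- the very point you flag as ``minor bookkeeping.'' That claim is false. By item (5) of Lemma \ref{propertiesofrealignment} and invariance, $\mathcal{R}(\delta^{\Gamma})=\mathcal{R}(\delta)F=\delta F$, and $\delta F$ is in general not even Hermitian, let alone PSD. A concrete counterexample sitting exactly in your equality case: take $k=2$, $v=(e_1+ie_2)/\sqrt{2}$, and $\delta=\tfrac12\left(v\overline{v}^{\,t}\otimes\overline{v}v^t+\overline{v}v^t\otimes v\overline{v}^{\,t}\right)$. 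This is a separable state invariant under realignment with $\delta_A=\tfrac12 Id$ and rank $2$, yet $\delta F=xy^*+yx^*$ with $x=v\otimes\overline{v}$, $y=\overline{v}\otimes v$ orthogonal unit vectors, which has eigenvalue $-\tfrac12$. (For SPC states the PSD-ness of $\mathcal{R}(\delta^{\Gamma})$ is part of the definition; it does not transfer to the realignment-invariant class.) So the step ``$\|\mathcal{R}(\delta^{\Gamma})\|_1\le\|\mathcal{R}(\delta^{\Gamma})^{\Gamma}\|_1$ because $\mathcal{R}(\delta^{\Gamma})$ is PSD'' and the later reading of its ``non-null eigenvalues'' as Schmidt coefficients are both unjustified as written.

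The gap is repairable, and the repair shows the whole Frobenius-norm detour is unnecessary --- which is what the paper does. Since $\mathcal{R}(\delta)=\delta$, the Schmidt coefficients of $\delta$ (the singular values of $\mathcal{R}(\delta)$) are the singular values of $\delta$ itself, i.e.\ its eigenvalues because $\delta$ is PSD; you have already shown the non-null ones all equal $\tfrac1k$, so \cite[Proposition 15]{carielloIEEE} applies immediately. Alternatively, if you insist on your route, replace ``eigenvalues of the PSD matrix $\mathcal{R}(\delta^{\Gamma})$'' by ``singular values'' throughout: $\|\mathcal{R}(\delta^{\Gamma})\|_1=\|\delta F\|_1=\|\delta\|_1=1$ because $F$ is unitary (no positivity needed), and equality in $tr(XX^*)\le\|X\|_{\infty}\|X\|_1$ forces all non-null singular values to coincide. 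Either fix closes the argument; as submitted, the proof contains a false intermediate assertion.
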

\begin{proof}
By corollary \ref{corollaryfilternormalformSPCandINV}, there is a invertible matrix $R$ such that   \begin{center}
$\delta=(R^*\otimes R^t)\gamma(R\otimes \overline{R})=\sum_{i=1}^n\lambda_i \gamma_i\otimes \overline{\gamma_i}$,
\end{center} 

where $\lambda_1=\dfrac{1}{k}$, $\gamma_1=\dfrac{Id}{\sqrt{k}}$, $tr(\gamma_i\gamma_1)=\dfrac{tr(\gamma_i)}{\sqrt{k}}=0$ for $i>1$. So $\displaystyle \delta_A=\sum_{\i=1}^n\lambda_i\gamma_itr(\gamma_i)=\frac{Id}{k}$.

Now, by item(3) of lemma \ref{propertiesofrealignment},  $$\mathcal{R}(\delta)=\mathcal{R}((R^*\otimes R^t)\gamma(R\otimes \overline{R}))=(R^*\otimes R^t)\mathcal{R}(\gamma)(R\otimes \overline{R})=(R^*\otimes R^t)\gamma(R\otimes \overline{R})=\delta.$$

Thus, $\delta$ is invariant under realignment. Hence $\|\delta\|_{\infty}\leq \|\delta_A\|_{\infty}=\frac{1}{k}$, by theorem \ref{specialspectralradius}.

So $\dfrac{1}{k}\geq \|\delta\|_{\infty}\geq \dfrac{tr(\delta)}{\rank(\delta)}=\dfrac{1}{\rank(\delta)}$. Hence $ \rank(\delta)\geq k$. 

Since $R$ is invertible, $\rank(\gamma)=\rank(\delta)\geq k$. \vspace{0,3cm}

For the next part, assume $\rank(\gamma)=k$. Then $\rank(\delta)=k$. Therefore, \begin{center}
$1=tr(\delta)\leq \|\delta\|_{\infty}\rank(\delta)=\dfrac{1}{k}.k=1$.
\end{center}

Since the equality  $tr(\delta)=\|\delta\|_{\infty}\rank(\delta)$  holds,   the non-null eigenvalues of $\delta$ are equal to $\|\delta\|_{\infty}$. Moreover, $tr(\delta)=k\|\delta\|_{\infty}=1$. Hence $\|\delta\|_{\infty}=\frac{1}{k}$.
\vspace{0,3cm}

Since $\delta$ is invariant under realignment, the  non-null Schmidt coefficients of the Schmidt decomposition of $\delta$ are the the non-null eigenvalues of $\delta$, which are equal. We know that every  invariant under realigment state with equal non-null  Schmidt coefficients is separable by  \cite[Proposition 15]{carielloIEEE}. So $\delta$ is separable and so is $\gamma$, since $R$ is invertible.
\end{proof}

\vspace{0,5cm}

\subsection*{Third case:} The PPT counterpart.

\vspace{0,5cm}

For our final results, we  need some tools developed in sections 2 and 3 together with the complete reducibility property. First, we show that the rank of a PPT state is greater or equal to its reduced ranks in the next lemma.

\vspace{0,5cm}
\begin{lemma}\label{lemmaineqPPT} Let $\gamma\in\mathcal{M}_k\otimes \mathcal{M}_m$ be a PPT state. Then $\rank(\gamma)\geq\max\{\rank(\gamma_A), \rank(\gamma_B)\}.$
\end{lemma}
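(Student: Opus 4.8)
The plan is to reduce the $\mathcal{M}_k\otimes\mathcal{M}_m$ situation to a square one where the machinery of Sections 2--4 applies, and then to mimic the argument already used in the SPC and invariant-under-realignment cases. First I would dispose of the asymmetry between the two reduced states: it suffices to prove $\rank(\gamma)\ge \rank(\gamma_A)$, since applying the same statement to $\gamma^{\Gamma}$ (which is again a PPT state with $(\gamma^{\Gamma})_A=\gamma_A$ and $(\gamma^{\Gamma})_B=\gamma_B^t$, or alternatively to $L_{(12)}(\gamma)$) yields $\rank(\gamma)=\rank(\gamma^{\Gamma})\ge\rank(\gamma_B)$. So fix the goal $\rank(\gamma)\ge\rank(\gamma_A)$ and set $s=\rank(\gamma_A)$. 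By compressing to $\Ima(\gamma_A)\otimes\mathbb{C}^m$ we may assume $\gamma_A$ has full rank $k$, and by compressing the second factor to $\Ima(\gamma_B)$ we may also assume $\gamma_B$ has full rank $m$; neither compression changes $\rank(\gamma)$ or the PPT property, and it reduces matters to showing $\rank(\gamma)\ge k$ when $\gamma_A$ is invertible.

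Next I would bring $\gamma$ to a one-sided filter normal form. Using Corollary~\ref{corollaryleftfilter} (applied on the $A$-side, after possibly swapping the factors so that the full-rank side is the one being normalized), there is an invertible $R\in\mathcal{M}_k$ with $\delta=(R^*\otimes Id)\gamma(R\otimes Id)=\sum_{i=1}^n a_i\gamma_i\otimes\delta_i$ where $\gamma_1=\tfrac{Id}{\sqrt k}$, $a_1\ge a_i>0$, the $\gamma_i$ (resp.\ $\delta_i$) are orthonormal Hermitian, and crucially $\delta_A=\sum_i a_i\gamma_i\,tr(\delta_i)$; the point of Corollary~\ref{corollaryleftfilter} is precisely that $F_\delta\circ G_\delta$ has $Id$ as a top eigenvector, which forces $\delta_A$ to be a scalar multiple of the identity, say $\delta_A=c\,Id$ with $c=\tfrac{tr(\delta)}{k}$. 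Note $\delta$ is still PPT because $R\otimes Id$ acts as a local filter. Now apply Theorem~\ref{specialspectralradius} to the PPT state $\delta$: $\|\delta\|_\infty\le\|\delta_A\|_\infty=c=\tfrac{tr(\delta)}{k}$. Combining with $\|\delta\|_\infty\ge\tfrac{tr(\delta)}{\rank(\delta)}$ gives $\tfrac{tr(\delta)}{\rank(\delta)}\le\tfrac{tr(\delta)}{k}$, hence $\rank(\delta)\ge k$, and since $R$ is invertible $\rank(\gamma)=\rank(\delta)\ge k$. This is the inequality.

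The main obstacle I anticipate is making the one-sided normalization legitimate in the rectangular case: Corollary~\ref{corollaryleftfilter} as stated is for $\mathcal{M}_k\otimes\mathcal{M}_k$, so I would either first symmetrize by the compressions above (forcing $k=m$ once both reduced states are full rank — but that need not happen, since $\rank(\gamma_A)$ and $\rank(\gamma_B)$ can differ), or, more carefully, run the Sinkhorn-type argument of Lemma~\ref{keylemmanormalform} directly on $F_\delta\circ G_\delta:\mathcal{M}_k\to\mathcal{M}_k$, which only needs the hypothesis that no nonzero $v$ lies in $\ker$ of the relevant map; positivity of $\gamma$ plus $\gamma_A$ invertible supplies exactly that. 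I expect the honest route in the paper is to keep $\gamma\in\mathcal{M}_k\otimes\mathcal{M}_m$, compress only on the $B$-side so $\gamma_B$ is full rank, then use the left filter on the $A$-side treating $G_\delta:\mathcal{M}_k\to\mathcal{M}_m$ and $F_\delta:\mathcal{M}_m\to\mathcal{M}_k$; the spectral-radius bound of Lemma~\ref{generalspectralradius}/Theorem~\ref{specialspectralradius} is stated for square matrices, so one may need the rectangular analogue of Lemma~\ref{generalspectralradius} — but that lemma's proof only uses positivity of $\gamma$ and of $Id\otimes Id\pm(ww^*)^{\Gamma}$, which go through verbatim for $\mathcal{M}_k\otimes\mathcal{M}_m$. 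So the real work is bookkeeping the rectangular versions of the Section~3 lemmas; the structural idea — filter to make $\gamma_A$ a scalar, then use $\|\gamma\|_\infty\le\|\gamma_A\|_\infty$ against $\|\gamma\|_\infty\ge tr(\gamma)/\rank(\gamma)$ — is exactly the SPC argument and carries no new difficulty.
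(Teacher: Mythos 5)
Your overall skeleton is exactly the paper's: normalize so that the reduced state $\delta_A$ is $\tfrac{1}{k}Id$, then play $\|\delta\|_{\infty}\leq\|\delta_A\|_{\infty}=\tfrac{1}{k}$ (the PPT case of Theorem~\ref{specialspectralradius}) against $tr(\delta)\leq\|\delta\|_{\infty}\rank(\delta)$. And your observation that the only square-specific ingredient really needed is $\|\delta\|_{\infty}\leq\|\delta_A\|_{\infty}$, whose proof (item (1) of Lemma~\ref{generalspectralradius} applied to $\delta^{\Gamma}$) is dimension-agnostic, is correct. However, two of your reduction steps would fail as written. First, $\rank(\gamma)=\rank(\gamma^{\Gamma})$ is false in general: partial transposition does not preserve rank (PPT states with asymmetric birank exist), so you cannot transfer the bound from $\gamma_A$ to $\gamma_B$ via $\gamma^{\Gamma}$ or $L_{(12)}$. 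The symmetry between the two marginals should instead come from the swap $\gamma\mapsto F\gamma F$ (or its rectangular analogue), which does preserve rank and positivity and exchanges $\gamma_A$ with $\gamma_B$; the paper simply says ``without loss of generality.''

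Second, and more substantively, Corollary~\ref{corollaryleftfilter} does not force $\delta_A$ to be a scalar multiple of the identity. What it gives is $\gamma_1=\tfrac{Id}{\sqrt{k}}$ as the top \emph{left} Schmidt vector, whence $tr(\gamma_i)=0$ for $i>1$ and therefore $\delta_B=G_{\delta}(Id)=a_1\sqrt{k}\,\delta_1$ — it pins down $\delta_B$ (up to the unknown $\delta_1$), while $\delta_A=F_{\delta}(Id)=\sum_i a_i\,tr(\delta_i)\,\gamma_i$ need not be scalar, since nothing kills $tr(\delta_i)$ for $i>1$. The correct (and much cheaper) normalization is the one the paper uses: take $R=\tfrac{1}{\sqrt{k}}\gamma_A^{-1/2}$ and set $\delta=(R\otimes Id)\gamma(R^*\otimes Id)$, so that $\delta_A=R\gamma_A R^*=\tfrac{1}{k}Id$ by direct computation. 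No Sinkhorn-type machinery is needed at all, which also dissolves all of your worries about rectangular versions of Corollary~\ref{corollaryleftfilter} and Lemma~\ref{keylemmanormalform}. With those two repairs your argument coincides with the paper's proof.
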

\begin{proof}
Let us assume without loss of generality that $\max\{\rank(\gamma_A), \rank(\gamma_B)\}=\rank(\gamma_A)=k$. So there is an invertible matrix $R\in\mathcal{M}_k$ such that $R\gamma_AR^*=\frac{1}{k}Id$.

\vspace{0,3cm}

Define $\delta=(R\otimes Id)\gamma(R^*\otimes Id)$ and notice that $\delta_A=\frac{1}{k}Id$.

\vspace{0,3cm}

Since $\delta$ is PPT, by theorem \ref{specialspectralradius}, $\|\delta\|_{\infty}\leq \|\delta_A\|_{\infty}=\frac{1}{k}$. Hence $$1=tr(\delta_A)=tr(\delta)\leq \|\delta\|_{\infty}\rank(\delta)\leq\frac{1}{k} \rank(\delta).$$

Thus, $\rank(\gamma)=\rank(\delta)\geq \max\{\rank(\gamma_A), \rank(\gamma_B)\}$.
\end{proof}

\vspace{0,5cm}

Now, we prove that a PPT state in the filter normal form with minimal rank must be separable which is the key ingredient of the proof of the final theorem.

\vspace{0,5cm}

\begin{lemma}\label{lemmaseparabilityPPT}Let $\gamma\in\mathcal{M}_k\otimes \mathcal{M}_k$ be a PPT state such that
\begin{enumerate}
\item $\gamma_A=\gamma_B=\frac{1}{k}Id$,
\item $\gamma$ has $k$ eigenvalues equal to $\frac{1}{k}$ and the others zero.
\end{enumerate}
Then $\gamma$ is separable.
\end{lemma}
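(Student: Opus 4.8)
The plan is to exploit the filter normal form together with the complete reducibility property. A PPT state satisfying conditions (1) and (2) is, up to the normalization constant $\frac{1}{k}$, a state whose non-null eigenvalues are all equal; this is precisely the hypothesis under which the earlier results give separability, but here we get it for free because $\gamma$ is \emph{already} in filter normal form ($\gamma_A=\gamma_B=\frac{1}{k}Id$) and has only $k$ non-null eigenvalues, all equal to $\frac{1}{k}$. So the first step is to observe that $\operatorname{tr}(\gamma)=1$, $\|\gamma\|_\infty=\frac{1}{k}$, and $\operatorname{rank}(\gamma)=k$, and that $\gamma$, viewed as $\frac{1}{k}P$ for an orthogonal projection $P$ of rank $k$, is an extreme sort of PPT state. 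I would then feed this into the machinery of $F_\gamma\circ G_\gamma:\mathcal{M}_k\to\mathcal{M}_k$: since $\gamma_A=\gamma_B=\frac{1}{k}Id$, the map $G_\gamma$ is doubly stochastic (it sends $\frac{Id}{\sqrt k}$ to $\frac{Id}{\sqrt k}$, as does $F_\gamma=G_\gamma^*$), hence its largest singular value is $1$ and the largest eigenvalue of $F_\gamma\circ G_\gamma$ is $1$.

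Next I would argue that the eigenvalues of $F_\gamma\circ G_\gamma$ (the squares of the Schmidt coefficients of $\gamma$) are all equal on their support. The Frobenius-norm invariance under the linear contractions is the tool: $\operatorname{tr}(\gamma^2)=\frac{1}{k}=\operatorname{tr}(\mathcal{R}(\gamma^\Gamma)\mathcal{R}(\gamma^\Gamma)^*)$, and since $\gamma$ is PPT, $\gamma^\Gamma$ is a state, so $\|\mathcal{R}(\gamma^\Gamma)\|_1\le\|\gamma^\Gamma\|_1=1$ while $\|\mathcal{R}(\gamma^\Gamma)\|_\infty=\|\mathcal{R}(\gamma)\|_\infty\le\|\gamma_A\|_\infty=\frac{1}{k}$ by Lemma~\ref{generalspectralradiusrealignment} (or Lemma~\ref{lemmaoperatornormrealignment} plus doubly-stochasticity). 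Chaining $\frac{1}{k}=\operatorname{tr}(\mathcal{R}(\gamma^\Gamma)\mathcal{R}(\gamma^\Gamma)^*)\le\|\mathcal{R}(\gamma^\Gamma)\|_\infty\|\mathcal{R}(\gamma^\Gamma)\|_1\le\frac{1}{k}$ forces equality throughout, so all singular values of $\mathcal{R}(\gamma)$ on its support equal $\frac{1}{k}$; equivalently the non-null Schmidt coefficients of $\gamma$ are all equal. This mirrors exactly the endgame of the SPC theorem proved just above.

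Having established that the non-null eigenvalues of $F_\gamma\circ G_\gamma$ are all equal, I would invoke \cite[Proposition 15]{carielloIEEE} — the statement that a PPT (or SPC, or invariant under realignment) state whose non-null Schmidt coefficients coincide is separable — to conclude that $\gamma$ is separable. Alternatively, if one wants a self-contained argument staying inside the excerpt, one can run the complete reducibility property directly: with equal non-null eigenvalues one produces many positive semidefinite Hermitian eigenvectors $X$ of $F_\gamma\circ G_\gamma$, each of which splits $\gamma$ into a direct sum supported on orthogonal local blocks $(V\otimes W)\gamma(V\otimes W)+(V^\perp\otimes W^\perp)\gamma(V^\perp\otimes W^\perp)$; iterating, $\gamma$ decomposes into weakly irreducible PPT pieces, each of which (by the rank bound of Lemma~\ref{lemmaineqPPT} and dimension counting $\sum_i\operatorname{rank}$-pieces $=k$) must be rank-one, hence a product state. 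The main obstacle I anticipate is the bookkeeping in this last step: keeping track of how the reduced states of the pieces add up to $\frac{1}{k}Id$ and verifying that the minimal-rank condition is inherited by each block so that weak irreducibility forces rank one. Invoking \cite[Proposition 15]{carielloIEEE} sidesteps this and is the cleaner route, so that is the path I would take.
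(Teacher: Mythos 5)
There is a genuine gap at the central step of your main route. You claim $\|\mathcal{R}(\gamma^{\Gamma})\|_1\le\|\gamma^{\Gamma}\|_1=1$ ``since $\gamma$ is PPT, $\gamma^{\Gamma}$ is a state.'' But the trace norm is \emph{not} contracted by realignment on arbitrary states --- that inequality is exactly the CCNR criterion, which by the paper's own setup holds only for \emph{separable} states (this is why CCNR detects entanglement at all, including entanglement in some PPT states). So invoking it here amounts to assuming that $\gamma$ passes CCNR, which PPT does not guarantee; in effect you are assuming a consequence of the separability you are trying to prove. Note the contrast with the SPC theorem just above in the paper: there the bound $\|\mathcal{R}(\delta^{\Gamma})\|_1\le\|\mathcal{R}(\delta^{\Gamma})^{\Gamma}\|_1=\|\delta F\|_1=1$ is legitimately derived from the fact that $\mathcal{R}(\delta^{\Gamma})$ is positive semidefinite (the defining SPC property), via $\|M\|_1=tr(M)=tr(M^{\Gamma})\le\|M^{\Gamma}\|_1$. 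No such handle exists for a general PPT state, so your chain $\frac1k\le\|\mathcal{R}(\gamma^{\Gamma})\|_\infty\|\mathcal{R}(\gamma^{\Gamma})\|_1\le\frac1k$ does not close, and the claim that the non-null Schmidt coefficients are all equal (the hypothesis needed for \cite[Proposition 15]{carielloIEEE}) is left unproved. Your ``self-contained'' alternative inherits the same defect, since it also starts from equal Schmidt coefficients to manufacture the positive semidefinite eigenvectors of $F_\gamma\circ G_\gamma$.

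The paper's proof produces such an eigenvector by an entirely different mechanism, and this is the idea your proposal is missing: it inducts on $k$, and uses the $k$-fold degeneracy of the eigenvalue $\frac1k$ of $\gamma$ itself to build an eigenvector $v=(R\otimes S)u$ of $\gamma$ with Schmidt rank $m<k$. A chain of saturated inequalities (using $\gamma^{\Gamma}\ge0$ and the fact that the largest singular value of $G_\gamma$ is $\frac1k$) then forces $G_\gamma(RR^*)=\frac1k SS^*$ and $F_\gamma(SS^*)=\frac1k RR^*$, so $RR^*$ is a positive semidefinite eigenvector of $F_\gamma\circ G_\gamma$ of rank $m<k$. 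Complete reducibility then splits $\gamma$ into two PPT blocks whose marginals and spectra are pinned down by Lemma~\ref{lemmaineqPPT} and a rank count, and the induction hypothesis finishes the argument. If you want to salvage your write-up, you would need to replace the CCNR step with an argument of this kind.
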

\begin{proof}
This result is trivial in $\mathcal{M}_2\otimes \mathcal{M}_2$, since every PPT state there is separable. Assume the result in true in $\mathcal{M}_i\otimes \mathcal{M}_i$ for $i<k$. Let us prove the result in $\mathcal{M}_k\otimes \mathcal{M}_k$.

\vspace{0,2cm}

Now, since $\gamma$ is a positive semidefinite Hermitian matrix, the linear transformations  $F_{\gamma}$ and  $G_{\gamma}$ are positive maps and adjoints with respect to the trace inner product.

\vspace{0,2cm}

In addition, notice that $\frac{Id}{k}=\gamma_A=F_{\gamma}(Id)$ and $\frac{Id}{k}=\gamma_B=G_{\gamma}(Id)$. Hence $F_{\gamma}(G_{\gamma}(Id))=\frac{1}{k^2}Id$. 

\vspace{0,2cm}

By \cite[Theorem 2.3.7]{Bhatia1},  the spectral radius of the  positive operator $F_{\gamma}\circ G_{\gamma}$ is $\frac{1}{k^2}$. Hence the largest singular value of  $G_{\gamma}$ and $F_{\gamma}$ is $\frac{1}{k}$, since they are adjoints. Thus, \begin{center}
$\|G_{\gamma}(X)\|_2\leq \frac{1}{k}$ and $\|F_{\gamma}(X)\|_2\leq \frac{1}{k}$, whenever $\|X\|_2=1$.
\end{center}

Next, since $\gamma$ has $k$ linearly independent eigenvectors associated to $\frac{1}{k}$, by combining  2 of them we can find an eigenvector $v\in\mathbb{C}^k\otimes\mathbb{C}^k$  associated to $\frac{1}{k}$ such that $\|v\|_2=1$ and $\rank(v)=m<k$.

\vspace{0,2cm}

Notice that there are $R,S\in \mathcal{M}_k$ with rank $m$ such that\begin{center}
 $v=(R\otimes S)u$ $(u$ as defined in remark \ref{remarkproduct}) and $\|RR^*\|_2=\|SS^*\|_2=1$.
\end{center}

Therefore $\frac{1}{k}=tr(\gamma vv^*)$\vspace{0,1cm}

$\hspace{2,25cm}=tr((R^*\otimes S^*)\gamma (R\otimes S)uu^t)$\vspace{0,1cm}

$\hspace{2,25cm}=tr((R^*\otimes S^t)\gamma^{\Gamma} (R\otimes \overline{S})F)$\ $(F$ as defined in $\ref{remarkproduct})$\vspace{0,1cm}

$\hspace{2,25cm}\leq tr((R^*\otimes S^t)\gamma^{\Gamma} (R\otimes \overline{S}))$\ $($since $\gamma^{\Gamma}$ and $Id-F$ are positive semidefinite$)$\vspace{0,1cm}

$\hspace{2,25cm}= tr(\gamma (RR^*\otimes SS^*))$\vspace{0,1cm}

$\hspace{2,25cm}= tr(G_{\gamma}(RR^*)SS^*)$\vspace{0,1cm}

$\hspace{2,25cm} \leq \|G_{\gamma}(RR^*)\|_2\|SS^*\|_2\leq \frac{1}{k}.1$\ $($since the largest singular value of $G_{\gamma}$ is $\frac{1}{k})$

\vspace{0,5cm}

Therefore, all the inequalities above are equalities, which imply 
\begin{enumerate}
\item $G_{\gamma}(RR^*)=\lambda SS^*$ for some $\lambda>0$, since $G_{\gamma}$ is a positive map, and \vspace{0,1cm}
\item $\frac{1}{k}=\|G_{\gamma}(RR^*)\|_2=\lambda\|SS^*\|_2=\lambda$.
\end{enumerate}

\vspace{0,3cm}

Hence $G_{\gamma}(RR^*)=\frac{1}{k} SS^*$.  
Analogously, we get $F_{\gamma}(SS^*)=\frac{1}{k}RR^*$, since $tr(\gamma (RR^*\otimes SS^*))=tr(RR^*F_{\gamma}(SS^*))$. 

\vspace{0,3cm}

Therefore, $F_{\gamma}(G_{\gamma}(RR^*))=\frac{1}{k^2}RR^*$ and  $\rank(RR^*)=m<k$.

\vspace{0,3cm}

Since $\gamma$ is PPT, by the complete reducibility property, 
\begin{equation}\label{eqcomplredutproperty}
\gamma=(V\otimes W)\gamma(V\otimes W)+(V^{\perp}\otimes W^{\perp})\gamma(V^{\perp}\otimes W^{\perp}),
\end{equation}

where $V,W, V^{\perp},W^{\perp}$ are orthogonal projections onto $\Ima(RR^*)$, $\Ima(SS^*)$, $\ker(RR^*)$ and $\ker(SS^*)$, respectively. 
\vspace{0,3cm}

By equation \ref{eqcomplredutproperty} and the definition of $G_{\gamma}$, 

\begin{center}
 $\Ima(G_{\gamma}(V))\subset \Ima(W)$, $\Ima(G_{\gamma}(V^{\perp}))  \subset \Ima(W^{\perp})$ and\vspace{0.2cm}

\hspace{1.2cm}$\Ima(F_{\gamma}(W))\subset \Ima(V)$, $\Ima(F_{\gamma}(W^{\perp}))\subset \Ima(V^{\perp}).\hspace{2cm}$ 
\end{center}

\vspace{0,3cm}

Next, recall that $V+V^{\perp}=W+W^{\perp}=Id$, $VV^{\perp}=WW^{\perp}=0$ and 

\begin{center}
$G_{\gamma}(V)+G_{\gamma}(V^{\perp})=G_{\gamma}(Id)=\frac{1}{k}Id=\frac{1}{k}W+\frac{1}{k}W^{\perp},$
\end{center}

 \begin{center}
$F_{\gamma}(W)+G_{\gamma}(W^{\perp})=G_{\gamma}(Id)=\frac{1}{k}Id=\frac{1}{k}V+\frac{1}{k}V^{\perp}.$
\end{center} 

\vspace{0,3cm}

Therefore 
\begin{center}
$G_{\gamma}(V)=\frac{1}{k}W$, $F_{\gamma}(W)=\frac{1}{k}V$ and  $G_{\gamma}(V^{\perp})=\frac{1}{k}W^{\perp}$, $G_{\gamma}(W^{\perp})=\frac{1}{k}V^{\perp}$.
\end{center}

Now, define \begin{center}
$\gamma_1=\frac{k}{m}(V\otimes W)\gamma(V\otimes W)$ and $\gamma_2=\frac{k}{k-m}(V^{\perp}\otimes W^{\perp})\gamma(V^{\perp}\otimes W^{\perp})$.
\end{center}

\vspace{0,3cm}

Notice that \begin{center}
$(\gamma_1)_A=F_{\gamma_1}(Id)=\frac{k}{m}F_{\gamma}(W)=\frac{1}{m}V$\ \ and\ \  $(\gamma_1)_B=G_{\gamma_1}(Id)=\frac{k}{m}G_{\gamma}(V)=\frac{1}{m}W$.
\end{center}

\vspace{0,3cm}

Thus,  $\max\{\rank((\gamma_1)_A),\rank((\gamma_1)_B)\}=$
\begin{center}
\hspace{6cm}$=\max\{\rank(V),\rank(W)\}=\max\{\rank(R),\rank(S)\}=m$.
\end{center}

Moreover, notice that \begin{center}
$(\gamma_2)_A=F_{\gamma_2}(Id)=\frac{k}{k-m}F_{\gamma}(W^{\perp})=\frac{1}{m}V^{\perp}$\ \ and\ \  $(\gamma_2)_B=G_{\gamma_2}(Id)=\frac{k}{k-m}G_{\gamma}(V^{\perp})=\frac{1}{k-m}W^{\perp}$.
\end{center}

\vspace{0,3cm}

Therefore
$\max\{\rank((\gamma_2)_A),\rank((\gamma_2)_B)\}=\max\{\rank(V^{\perp}),\rank(W^{\perp})\}=k-m$. 

\vspace{0,3cm}

By their definitions, $\gamma_1$ and $\gamma_2$ are PPT. So, by lemma \ref{lemmaineqPPT}, 
$\rank(\gamma_1)\geq m$ and $\rank(\gamma_2)\geq k-m$.

\vspace{0,3cm}

Recall that $k=\rank(\gamma)=\rank(\gamma_1)+\rank(\gamma_2)\geq m+(k-m)$. Thus $\rank(\gamma_1)= m$ and $\rank(\gamma_2)= k-m$.

\vspace{0,3cm}

Since $\gamma=\frac{m}{k}\gamma_1+\frac{k-m}{k}\gamma_2$, $\gamma$ has $k$ eigenvalues equal to $\frac{1}{k}$ and $\gamma_1\gamma_2=0$, 

\begin{itemize}
\item $\gamma_1$ has $m$ eigenvalues equal to $\frac{1}{m}$ and the others $0$,
\item $\gamma_2$ has $k-m$ eigenvalues equal to $\frac{1}{k-m}$ and the others $0$.

\end{itemize}

\vspace{0,3cm}

Hence, \begin{itemize}
\item  $\gamma_1$ has $m$ eigenvalues equal to $\frac{1}{m}$, $(\gamma_1)_A=\frac{1}{m}V$, $(\gamma_1)_B=\frac{1}{m}W$ and $\rank(V)=\rank(W)=m$.
\item $\gamma_2$ has $k-m$ eigenvalues equal to $\frac{1}{k-m}$, $(\gamma_2)_A=\frac{1}{k-m}V^{\perp}$, $(\gamma_1)_B=\frac{1}{k-m}W^{\perp}$ and $\rank(V^{\perp})=\rank(W^{\perp})=k-m$.
\end{itemize}

\vspace{0,3cm}
By induction hypothesis,  $\gamma_1$ and $\gamma_2$ are separable and so is $\gamma$.\end{proof}

\vspace{0,5cm}

Finally, we prove the PPT counterpart of our last result.

\vspace{0,5cm}

\begin{theorem}
If $\gamma\in \mathcal{M}_k\otimes \mathcal{M}_k$ is a  PPT state such that $\rank(\gamma_A)=\rank(\gamma_B)=k$ then $\rank(\gamma)\geq k$.  In addition, if the equality holds then $\gamma$ is separable. 
\end{theorem}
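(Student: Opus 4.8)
The inequality is immediate from Lemma~\ref{lemmaineqPPT}: since $\max\{\rank(\gamma_A),\rank(\gamma_B)\}=k$ we get $\rank(\gamma)\geq k$. So the substance is the equality case, which I would reduce to Lemma~\ref{lemmaseparabilityPPT}. Assume $\rank(\gamma)=k$.

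The first step is to produce invertible $R,S\in\mathcal{M}_k$ with $\gamma'=(R\otimes S)\gamma(R^*\otimes S^*)$ having $\gamma'_A=\gamma'_B=\tfrac1k Id$. Such a $\gamma'$ is still PPT, since $(\gamma')^{\Gamma}=(R\otimes\overline{S})\gamma^{\Gamma}(R^*\otimes S^t)\geq 0$, and $\rank(\gamma')=k$ because $R,S$ are invertible. Granting this, Theorem~\ref{specialspectralradius} gives $\|\gamma'\|_\infty\leq\|\gamma'_A\|_\infty=\tfrac1k$, hence $1=tr(\gamma')\leq\|\gamma'\|_\infty\,\rank(\gamma')\leq\tfrac1k\cdot k=1$, and forcing equality throughout shows the $k$ non-null eigenvalues of $\gamma'$ all equal $\tfrac1k$. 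Thus $\gamma'$ meets both hypotheses of Lemma~\ref{lemmaseparabilityPPT}, so $\gamma'$ is separable and hence so is $\gamma$.

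The main obstacle is the existence of the filter normal form $\gamma'$: a general PPT state admits none, so the argument must use the hypothesis $\rank(\gamma)=\rank(\gamma_A)=\rank(\gamma_B)=k$. I would proceed by induction on $k$, in the spirit of Lemmas~\ref{keylemmanormalform} and \ref{lemmaseparabilityPPT}, trying to scale the positive map $G_\gamma:\mathcal{M}_k\to\mathcal{M}_k$ (with adjoint $F_\gamma$ and $G_\gamma(Id)=\gamma_B>0$, $F_\gamma(Id)=\gamma_A>0$) to a doubly stochastic map via $X\mapsto S^*G_\gamma(RXR^*)S$. If $G_\gamma$ is fully indecomposable, a Sinkhorn-type argument as in Lemma~\ref{lemmaselfadjoint}, adapted to a positive but not self-adjoint map (using $R\otimes S$ rather than $R\otimes R$), delivers $R,S$. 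If $G_\gamma$ is not fully indecomposable, there are nonzero $X,Y\in P_k$ with $tr(\gamma(X\otimes Y))=0$ and $\rank(X)+\rank(Y)=k$; combining this with $\gamma^{\Gamma}\geq 0$ and $\rank(\gamma)=k$ through an inequality chain like the one in the proof of Lemma~\ref{lemmaseparabilityPPT} should show that the orthogonal projection $V$ onto $\Ima(X)$ is a rank-deficient positive semidefinite Hermitian eigenvector of $F_\gamma\circ G_\gamma$, so the complete reducibility property splits $\gamma=(V\otimes W)\gamma(V\otimes W)+(V^{\perp}\otimes W^{\perp})\gamma(V^{\perp}\otimes W^{\perp})$ with $W$ the projection onto $\Ima(G_\gamma(V))$; each summand is PPT with its total rank equal to both of its reduced ranks and strictly smaller than $k$, so the induction hypothesis applies to each, and therefore the normal form exists for $\gamma$. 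I expect the hardest parts to be the non-self-adjoint Sinkhorn step and the rank bookkeeping for the two summands in the decomposable case; the base case ($k=1$, or $\mathcal{M}_2\otimes\mathcal{M}_2$ where every PPT state is separable) is immediate.
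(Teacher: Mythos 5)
Your handling of the inequality and of the endgame is sound: given invertible $R,S$ with $\gamma'=(R\otimes S)\gamma(R^*\otimes S^*)$ satisfying $\gamma'_A=\gamma'_B=\frac{1}{k}Id$, the chain $1=tr(\gamma')\leq\|\gamma'\|_{\infty}\rank(\gamma')\leq 1$ does force the flat spectrum, and Lemma~\ref{lemmaseparabilityPPT} then finishes the argument exactly as you say (in fact the flat spectrum already follows from a one-sided normalization $(\gamma_1)_B=\frac{1}{k}Id$, which always exists since $\gamma_B$ is invertible; this is how the paper begins). The genuine gap is in your construction of the two-sided normal form. Your dichotomy is: either $G_\gamma$ is fully indecomposable and a Sinkhorn scaling exists, or it is not and the complete reducibility property splits $\gamma$. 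The second branch fails: the failure of full indecomposability yields nonzero positive semidefinite $X,Y$ with $tr(\gamma(X\otimes Y))=0$ and $\rank(X)+\rank(Y)\geq k$, but this does not produce a rank-deficient positive semidefinite eigenvector of $F_\gamma\circ G_\gamma$, and the projection onto $\Ima(X)$ need not be one. Concretely, take $k=2$, $P_1=e_1e_1^*$, $P_2=ff^*$ with $f=(e_1+e_2)/\sqrt{2}$, $Q_i=e_ie_i^*$, and $\gamma=\frac{1}{2}(P_1\otimes Q_1+P_2\otimes Q_2)$. This is a PPT state with $\rank(\gamma)=\rank(\gamma_A)=\rank(\gamma_B)=2$; it is not fully indecomposable, since $tr(\gamma(P_1^{\perp}\otimes Q_1))=0$ while $\rank(P_1^{\perp})+\rank(Q_1)=2$; yet $F_\gamma(G_\gamma(X))=\frac{1}{4}\left(tr(P_1X)P_1+tr(P_2X)P_2\right)$, whose only nonzero positive semidefinite eigenvectors are the multiples of $P_1+P_2$, which is invertible. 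So neither branch of your induction applies to this state (which nevertheless does admit a filter normal form), and the argument stalls.

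This is exactly the difficulty the paper's proof is built to avoid: it never performs a two-sided Sinkhorn scaling. Instead it normalizes only $\gamma_B$, extracts the flat spectrum from Theorem~\ref{specialspectralradius}, shows via Lemma~\ref{lemmarealigmentPPTareequal} that $\gamma_1*F\overline{\gamma_1}F$ is invariant under realignment, and then applies the one-sided normal form of Corollary~\ref{corollaryleftfilter}, which is valid for \emph{every} state; a chain of norm equalities then forces the second marginal to come out equal to $\frac{1}{k}Id$ automatically, after which Lemma~\ref{lemmaseparabilityPPT} applies. If you want to keep your route, you must either replace ``fully indecomposable'' by a condition whose failure genuinely triggers the complete reducibility property, or prove directly that the hypothesis $\rank(\gamma)=\rank(\gamma_A)=\rank(\gamma_B)$ guarantees the scaling (this is essentially the content of the separate Sinkhorn--Knopp paper for PPT states cited in the introduction); as written, the intermediate case --- weakly irreducible but not fully indecomposable --- is uncovered.
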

\begin{proof}
We know already that $\rank(\gamma)\geq k$ by lemma \ref{lemmaineqPPT}. Let us assume that $\rank(\gamma)= k$.\vspace{0,2cm}

Define $\gamma_1=(Id\otimes S)\gamma(Id\otimes S^*)$ such that $(\gamma_1)_B=\frac{1}{k}Id$, where $S$ is invertible.\vspace{0,2cm}

Since $\gamma_1$ is also PPT, $\|\gamma_1\|_{\infty}\leq \|(\gamma_1)_B\|_{\infty}=\frac{1}{k}$, by lemma \ref{specialspectralradius}.\vspace{0,2cm}

Hence $ 1=tr((\gamma_1)_B)=tr(\gamma_1)\leq \|\gamma_1\|_{\infty}\rank(\gamma_1)=\frac{1}{k}.k=1$. So $\gamma_1$ has $k$ eigenvalues equal to $\frac{1}{k}$  and the others $0$.\vspace{0,2cm}

Notice that  $(F\overline{\gamma_1}F)^{\Gamma}$ is positive semidefinite and so is $(\gamma_1*F\overline{\gamma_1}F)^{\Gamma}=\gamma_1*(F\overline{\gamma_1}F)^{\Gamma}$ as a $*-$product of two positive semidefinite Hermitian matrices by item $a)$ of  remark \ref{remarkproduct}. So the positive semidefinite Hermitian matrix $\gamma_1*F\overline{\gamma_1}F$ is PPT. \vspace{0,2cm}

Now, by items $(8)$ and $(9)$ of lemma \ref{propertiesofrealignment}, $\mathcal{R}(\gamma_1*F\overline{\gamma_1}F)=\mathcal{R}(\gamma_1)\mathcal{R}(\gamma_1)^*$, which is positive semidefinite. \vspace{0,2cm}

Next, on one hand  $tr(\mathcal{R}(\gamma_1*F\overline{\gamma_1}F))=tr(\mathcal{R}(\gamma_1)\mathcal{R}(\gamma_1)^*)=tr(\gamma_1\gamma_1^*)=\frac{1}{k}$, since $\mathcal{R}$ is an isometry.\vspace{0,2cm}

On the other hand $\|\mathcal{R}(\gamma_1*F\overline{\gamma_1}F)^{\Gamma}\|_1=\|\mathcal{R}(\gamma_1*(F\overline{\gamma_1}F)F)\|_1$ (by item (6) of lemma \ref{propertiesofrealignment})

\vspace{0,2cm}

\hspace{6.8cm} $=\|\mathcal{R}(\gamma_1*(F\overline{\gamma_1}F)F)F\|_1$ (since $F$ is an isometry) \vspace{0,2cm}

\hspace{6.8cm} $=\|(\gamma_1*(F\overline{\gamma_1}F)^{\Gamma}\|_1$ (by item (4) of lemma \ref{propertiesofrealignment}) \vspace{0,2cm}

\hspace{6.8cm} $=tr(\gamma_1*(F\overline{\gamma_1}F))$ (since $\gamma_1*(F\overline{\gamma_1}F)$ is PPT) \vspace{0,2cm}

\hspace{6.8cm} $=tr((\gamma_1)_B(\gamma_1)_B^*)=\frac{1}{k}$ (by item $c)$ of remark \ref{remarkproduct}). \vspace{0,8cm}

Therefore, $tr(\mathcal{R}(\gamma_1*F\overline{\gamma_1}F))= \|\mathcal{R}(\gamma_1*F\overline{\gamma_1}F)^{\Gamma}\|_1$. \vspace{0,2cm}

Since  $\mathcal{R}(\gamma_1*F\overline{\gamma_1}F)$ is positive semidefinite, this last equality  means that 
 $\mathcal{R}(\gamma_1*F\overline{\gamma_1}F)^{\Gamma}$ is positive semidefinite, i.e, $\mathcal{R}(\gamma_1*F\overline{\gamma_1}F)$ is PPT.\vspace{0,2cm}

We  have just discovered that $\mathcal{R}(\gamma_1*F\overline{\gamma_1}F)$ and $\gamma_1*F\overline{\gamma_1}F$ are  PPT, but in this situation lemma \ref{lemmarealigmentPPTareequal} says  that $\gamma_1*F\overline{\gamma_1}F=\mathcal{R}(\gamma_1*F\overline{\gamma_1}F).$\vspace{0,2cm}

Now, by corollary \ref{corollaryleftfilter}, there is an invertible matrix $R\in \mathcal{M}_k$ such that
 $$\gamma_2=(R^*\otimes Id)\gamma_1(R\otimes Id)=\sum_{i=1}^n\lambda_i A_i\otimes B_i,$$

where
\begin{itemize}
\item[$a)$]  $\lambda_1\geq \lambda_i>0$ for every $i$ and $A_1=\frac{Id}{\sqrt{k}}$, 
\item[$b)$]  $A_i=A_i^*$, $B_i=B_i^*$  for every $i$,
\item[$c)$] $tr(A_iA_j)=tr(B_iB_j)=0$ for every $i\neq j$ and $tr(A_i^2)=tr(B_i^2)=1$.\vspace{0,2cm}
\end{itemize}

In addition, we can normalize its trace, so assume that $tr(\gamma_2)=1$. \vspace{0,2cm}

Hence, $\gamma_2*(F\overline{\gamma_2} F)=\sum_{i=1}^n\lambda_i^2 A_i\otimes \overline{A_i}=(R^*\otimes R^t)(\gamma_1*(F\overline{\gamma_1} F))(R\otimes \overline{R}).$\vspace{0,2cm}

Like $\gamma_1*(F\overline{\gamma_1} F)$, the positive semidefinite Hermitian matrix $\gamma_2*(F\overline{\gamma_2} F)$ is also invariant under realignment because \vspace{0,2cm}

$\mathcal{R}(\gamma_2*(F\overline{\gamma_2} F))=\mathcal{R}((R^*\otimes R^t)(\gamma_1*(F\overline{\gamma_1} F))(R\otimes \overline{R}))$

$\hspace{2.95cm}=(R^*\otimes R^t)\mathcal{R}(\gamma_1*(F\overline{\gamma_1} F))(R\otimes \overline{R})$, by item $(3)$ of lemma \ref{propertiesofrealignment},\vspace{0,2cm}

$\hspace{2.95cm}=(R^*\otimes R^t)(\gamma_1*(F\overline{\gamma_1} F))(R\otimes \overline{R})$, since $\gamma_1*F\overline{\gamma_1}F=\mathcal{R}(\gamma_1*F\overline{\gamma_1}F).$\vspace{0,2cm}

$\hspace{2.95cm}=\gamma_2*(F\overline{\gamma_2} F)$.\vspace{0,5cm}

Now, notice that $k\lambda_1^2=tr\left(\gamma_2*(F\overline{\gamma_2} F)\ (A_1\otimes \overline{A_1})\right)k$

\hspace{3,8cm} $=tr\left(\gamma_2*(F\overline{\gamma_2} F)\ (\frac{Id}{\sqrt{k}}\otimes \frac{Id}{\sqrt{k}})\right)k$\vspace{0,2cm}

\hspace{3,8cm} $=tr(\gamma_2*(F\overline{\gamma_2} F))$\vspace{0,2cm}

\hspace{3,8cm} $=tr(\mathcal{R}(\gamma_2*(F\overline{\gamma_2} F)))$, since $\gamma_2*F\overline{\gamma_2}F=\mathcal{R}(\gamma_2*F\overline{\gamma_2}F)$\vspace{0,2cm}

\hspace{3,8cm} $=tr(\mathcal{R}(\gamma_2)\mathcal{R}(\gamma_2)^*)$, by items (8--9) of lemma \ref{propertiesofrealignment} \vspace{0,2cm}

\hspace{3,8cm} $=tr(\gamma_2\gamma_2^*)$, since $\mathcal{R}$ is an isometry. \vspace{0,5cm}

Next, $\|\gamma_2\|^2_{\infty}\leq \|\mathcal{R}(\gamma_2)\|_{\infty}^2$, since $\gamma_2$ is PPT and lemma \ref{specialspectralradius}.\vspace{0,2cm}

Notice that the largest singular value of $G_{\gamma_2}$ is $\lambda_1$ by item $a)$ above and the definition of $G_{\gamma_2}$. Hence, by lemma \ref{lemmaoperatornormrealignment}, $\|\mathcal{R}(\gamma_2)\|_{\infty}=\lambda_1$. Moreover, 
remind that $\rank(\gamma_2)=\rank(\gamma_1)=\rank(\gamma)=k$. Therefore, 
\begin{center}
$k\lambda_1^2=tr(\gamma_2^2)\leq \|\gamma_2\|^2_{\infty}\rank(\gamma_2)\leq\lambda_1^2.k.$
\end{center}

\vspace{0,2cm}

The inequalities above are, in fact, equalities, which only happens  when all the $k$ non-null eigenvalues of $\gamma_2$ are equal to $\lambda_1$. Therefore, $1=tr(\gamma_2)=k\lambda_1$. So $\lambda_1=\frac{1}{k}$. In addition, 
$$1=tr(\gamma_2)=\lambda_1tr(A_1)tr(B_1)=\frac{1}{k}\sqrt{k}\ tr(B_1).$$

So $tr(B_1)=\sqrt{k}$ and $tr(B_1^2)=1$. Recall that $G_{\gamma_2}(\frac{1}{\lambda_1}A_1)=B_1$ is a positive semidefinite Hermitian matrix, since $G_{\gamma_2}$ is a positive map and $\frac{1}{\lambda_1}A_1=\frac{Id}{k\sqrt{k}}$. Under these conditions the only possibility for $B_1$ is  $B_1=\frac{Id}{\sqrt{k}}$. 

\vspace{0,2cm}
Finally, $\gamma_2$ has $k$ eigenvalues equal to $\frac{1}{k}$ and the others $0$ and \begin{center}
$(\gamma_2)_B=G_{\gamma_2}(Id)=G_{\gamma_2}(\sqrt{k}A_1)=\frac{Id}{k}$, \ \ \ $(\gamma_2)_A=F_{\gamma_2}(Id)=F_{\gamma_2}(\sqrt{k}B_1)=\frac{Id}{k}$. 
\end{center}

\vspace{0,1cm}

By lemma \ref{lemmaseparabilityPPT}, $\gamma_2$ is separable and so are $\gamma_1$ and $\gamma$.
\end{proof}
\vspace{0,2cm}
\section{Summary and Conclusion}
\vspace{0,2cm}
In this article we proved new results for a triad of types of quantum states which includes the positive under partial transpose type. We obtained the same upper bound for the spectral radius of  these types of quantum states. Then we showed that two of these types can be put in the filter normal form retaining their shapes. Finally, we proved that there is a lower bound for their ranks and whenever this lower bound is attained these states are separable. This last result is another consequence of their complete reducibility property. This is plenty of evidence that these states are deeply connected. In addition, their complete reducibility property is a unifying force connecting and providing many results in entanglement theory. 
\vspace{0,2cm}

\section{Disclosure Statement}
\vspace{0,2cm}
No potential conflict of interest was reported by the author.
\vspace{0,2cm}
\begin{bibdiv}
\begin{biblist}

\bib{Bhatia1}{book}{
  title={Positive definite matrices},
  author={Bhatia, Rajendra},
  year={2009},
  publisher={Princeton university press}
}

\bib{cariello}{article}{
  title={Separability for weakly irreducible matrices},
  author={Cariello, Daniel},
  journal={Quantum Inf.  Comp.},
  volume={14},
  number={15-16},
  pages={1308--1337},
  year={2014}
}

\bib{carielloSPC}{article}{
   author={Cariello, D.},
   title={Does symmetry imply PPT property?},
   journal={Quantum Inf. Comput.},
   volume={15},
   date={2015},
   number={9-10},
   pages={812--824},
  
}

\bib{carielloIEEE}{article}{
   author={Cariello, Daniel},
   title={Completely Reducible Maps in Quantum Information Theory},
   journal={IEEE Transactions on Information Theory},
   volume={62},
   date={2016},
   number={4},
   pages={1721-1732},  
}

\bib{Cariello_LAA}{article}{
    title={A gap for PPT entanglement},
  author={Cariello, D.},
  journal={Linear Algebra and its Applications},
  volume={529},
  pages={89-114},
  year={2017}
}

\bib{CarielloLAMA}{article}{
    title={Sinkhorn-Knopp theorem for rectangular positive maps},
  author={Cariello, Daniel},
  journal={Linear and Multilinear Algebra},
  volume={67},
  pages={2345-2365},
  year={2019}
}

\bib{CarielloLMP}{article}{
    title={Sinkhorn-Knopp theorem for PPT states},
  author={Cariello, D.},
  journal={Lett Math Phys},
  volume={109},
  pages={2013-2034},
  year={2019}
}

\bib{Git}{article}{
   author={Gittsovich, O.},
   author={G\"uhne, O.}
   author={Hyllus, P.}
   author={Eisert, J.}
   title={Unifying several separability conditions using the covariance matrix criterion},
   journal={Phys. Rev. A},
   volume={78},
   year={2008},
   pages={052319},
}

\bib{Guhne}{article}{
    title={Entanglement detection},
   author={G\"uhne, O.},
   author={T\'oth, G},
  journal={Physics Reports},
  volume={474},
   number={1-6}
   year={2009},
   pages={1--75},
}

\bib{gurvits2004}{article}{
  title={Classical complexity and quantum entanglement},
  author={Gurvits, Leonid},
  journal={Journal of Computer and System Sciences},
  volume={69},
  number={3},
  pages={448--484},
  year={2004},
  publisher={Elsevier}
}

\bib{horodeckifamily}{article}{
  title={Separability of mixed states: necessary and sufficient conditions},
  author={Horodecki, M.},
  author={Horodecki, P.},
  author={Horodecki, R.},
  journal={Phys. Lett. A.},
  volume={223},
  pages={1--8},
  year={1996},
  publisher={Elsevier}
}

\bib{PawelMxN}{article}{
 author={Horodecki, Pawe{\l}},
 author={ Lewenstein, Maciej},
 author={Vidal, Guifr{\'e}},
 author={Cirac, Ignacio},
  title={Operational criterion and constructive checks for the separability of low-rank density matrices},
  journal={Physical Review A},
  volume={62},
  number={3},
  pages={032310},
  year={2000},
  publisher={APS}
}

\bib{smolin}{article}{
   author={Horodecki, Pawel},
   author={Smolin, John A.},
   author={Terhal, B.M.}
   author={Thapliyal, Ashish V.}
   title={Rank two bipartite bound entangled states do not exist},
  journal={Theoretical Computer Science},
  volume={292},
  number={3},
  pages={589--596},
  year={2003},
  publisher={Elsevier}
}

\bib{leinaas}{article}{
   author={Leinaas, J.M.},
   author={Myrheim, J.},
   author={Ovrum, E.},
   title={Geometrical aspects of entanglement},
   journal={Phys. Rev. A},
   volume={74},
   issue={3},
   year={2006},
   pages={012313},
}

\bib{marcus}{book}{
author={Marcus, M.}
author={Minc, H}, 
title={A survey of matrix theory and matrix inequalities}, 
volume={14} 
publisher={Courier Corporation},
year={1992}
}

\bib{peres}{article}{
    title={Separability criterion for density matrices},
  author={Peres, Asher},
  journal={Physical Review Letters},
  volume={77},
  number={8},
  pages={1413},
  year={1996},
  publisher={APS}
}

\bib{rudolph}{article}{
   author={Rudolph, O.}
   title={Computable Cross-norm Criterion for Separability},
   journal={Lett. Math. Phys.},
   volume={70},
   date={2005},
   pages={57--64}
}

\bib{rudolph2}{article}{
   author={Rudolph, Oliver}
   title={Further results on the cross norm criterion for separability},
   journal={Quantum Inf. Proc.},
   volume={4},
   date={2005},
   pages={219--239}
}

\bib{Sinkhorn}{article}{
  title={Concerning nonnegative matrices and doubly stochastic matrices},
  author={Sinkhorn, Richard}
  author={Knopp, Paul},
  journal={Pacific Journal of Mathematics},
  volume={21},
  number={2},
  pages={343--348},
  year={1967},
  publisher={Oxford University Press}
}

\bib{guhnetothsym}{article}{
   author={T\'oth, G.}
   author={G\"uhne, O.},
   title={Separability criteria and entanglement witnesses for symmetric quantum states},
   journal={Applied Physics B},
   volume={98},
   date={2010},
   number={4},
   pages={617-22},
  
}

\bib{weiner}{article}{
   author={Weiner, M.}
   title={A gap for the maximum number of mutually unbiased bases},
   journal={Proceedings of the American Mathematical Society},
   volume={141},
   date={2013},
   number={6},
   pages={1963-1969},
  
}

\end{biblist}
\end{bibdiv}

\end{document}